\definecolor{LightGray}{rgb}{0.88,1,1}
\newcolumntype{C}[1]{>{\centering\arraybackslash}p{#1}}
\newcounter{mycounter}
\DeclareFontFamily{U}{bbold}{}\DeclareFontShape{U}{bbold}{m}{n}{%
  <4.25>bbold5<5>bbold5<6>bbold6<7>bbold7<8>bbold8<8.5>bbold9<9.5>bbold10<10>bbold10<12>bbold12}{}
\renewcommand*\showkeyslabelformat[1]{%
\@ifundefined{hideNextShowKeysLabel}{%
\noexpandarg%
\StrSubstitute{#1}{ }{\textvisiblespace}[\TEMP]%
\parbox[t]{\marginparwidth}{\raggedright\normalfont\small\ttfamily\(\{\){\color{red!50!black}\expandafter\seqsplit\expandafter{\TEMP}}\(\}\)}%
}{}
}
\theoremstyle{definition}
\newtheorem{defn}[theorem]{Definition}
\newtheorem{rem}[theorem]{Remark}
\newtheorem{notation}[theorem]{Notation}
\newtheorem*{theorem*}{Theorem}
\setlist[enumerate,1]{label=(\arabic*),font=\normalfont,align=left,leftmargin=0pt,labelindent=0pt,listparindent=\parindent,labelwidth=0pt,itemindent=!,topsep=0pt,parsep=0pt,itemsep=0pt,start=1}
\setlist[enumerate,2]{label=(\alph*),font=\normalfont,labelindent=*,topsep=0pt,leftmargin=*,start=1}
\setlist[itemize]{labelindent=*,leftmargin=*,topsep=5pt,itemsep=3pt}
\setlist[description]{labelindent=*,leftmargin=*,itemindent=-1 em}
\def\C{\mathcal{C}}
\def\epsilon{\varepsilon}
\renewcommand{\rho}{\varrho}
\def\ol{\overline}
\newcommand{\takeout}[1]{\empty}
\newcommand{\rep}{\mathsf{rep}}
\newcommand{\red}{\mathsf{red}}
\renewcommand{\phi}{\varphi}
\newcommand{\xra}{\xrightarrow}
\newcommand{\Lra}{\Leftrightarrow}
\newcommand{\seq}{\subseteq}
\renewcommand{\intop}{\mathsf{in}}
\newcommand{\id}{\mathsf{id}}
\newcommand{\Id}{\mathsf{Id}}
\newcommand{\NP}{\mathsf{NP}}
\newcommand{\DFA}{\mathbf{DFA}}
\newcommand{\NFAmu}{\mathbf{NFA}_\mathbf{syn}}
\newcommand{\MONTOSUBATOMICNFA}{\mathbf{MON\to NFA_\mathbf{syn}}}
\newcommand{\BICCOV}{\mathbf{BICLIQUE~COVER}}
\newcommand{\PSPACE}{\mathsf{PSPACE}}
\newcommand{\JSL}{{\mathbf{JSL}}}
\newcommand{\JSLf}{{\mathbf{JSL_{\mathbf{f}}}}}
\renewcommand{\dim}[1]{\mathsf{dim}(#1)}
\newcommand{\Rel}{\mathbf{Rel}}
\newcommand{\under}[1]{{|#1|}}
\newcommand{\Pirr}{\mathsf{Pirr}}
\newcommand{\cl}{{\bf cl}}
\newcommand{\beh}{\mathsf{beh}}
\newcommand{\epito}{\twoheadrightarrow}
\newcommand{\up}{\uparrow}
\newcommand{\monoto}{\rightarrowtail}
\newcommand{\ns}[1]{\mathsf{ns}(#1)}
\newcommand{\nalpha}[1]{\mathsf{natm}(#1)}
\newcommand{\nmu}[1]{\mathsf{nsyn}(#1)}
\newcommand{\src}{{\mathsf{s}}}
\newcommand{\trg}{{\mathsf{t}}}
\newcommand{\R}{\mathcal{R}}
\newcommand{\rev}[1]{{#1}^{\mathsf{r}}}
\newcommand{\Pow}{\mathcal{P}}
\newcommand{\op}{\mathsf{op}}
\newcommand{\SLD}[1]{\mathsf{SLD}(#1)}
\newcommand{\Dep}{\mathbf{Dep}}
\newcommand{\rR}{\mathcal{R}}
\newcommand{\rF}{\mathcal{F}}
\newcommand{\rD}{\mathcal{D}}
\newcommand{\rS}{\mathcal{S}}
\newcommand{\rT}{\mathcal{T}}
\newcommand{\rG}{\mathcal{G}}
\newcommand{\rH}{\mathcal{H}}
\newcommand{\rI}{\mathcal{I}}
\newcommand{\rP}{\mathcal{P}}
\newcommand{\rQ}{\mathcal{Q}}
\newcommand{\rDR}[1]{\mathcal{DR}_{#1}}
\newcommand{\LD}[1]{\mathsf{LD}(#1)}
\newcommand{\BLD}[1]{\mathsf{BLD}(#1)}
\newcommand{\DLD}[1]{\mathsf{DLD}(#1)}
\newcommand{\BLRD}[1]{\mathsf{BLRD}(#1)}
\newcommand{\pOp}{\mathsf{op}}
\newcommand{\Open}{\mathsf{Open}}
\newcommand{\dfa}[1]{\mathsf{dfa}(#1)}
\newcommand{\simple}[1]{\mathsf{simple}(#1)}
\newcommand{\jslReach}[1]{\mathsf{reach}(#1)}
\newcommand{\inte}{\mathrm{\bf in}}
\newcommand{\xto}[1]{\xrightarrow{#1}}
\newcommand{\rsc}[1]{{\mathsf{rsc}}(#1)}
\newcommand{\NFA}{{\bf NFA}}
\newcommand{\ts}[1]{\mathsf{ts}(#1)}
\newcommand{\langs}[1]{\mathsf{langs}(#1)}
\newcommand{\Nleq}{\mathsf{Nleq}}
\newcommand{\dr}{\mathrm{dr}}
\DeclarePairedDelimiter\bra{\langle}{\rvert}
\DeclarePairedDelimiter\ket{\lvert}{\rangle}
\DeclarePairedDelimiterX\braket[2]{\langle}{\rangle}{#1 \delimsize\vert #2}
\newcommand{\Syn}[1]{\mathsf{syn}({#1})}
\newcommand{\ang}[1]{\langle #1 \rangle}
\newcommand{\Land}{\bigwedge}
\newcommand{\Lor}{\bigvee}
\newcommand{\To}{\Rightarrow}
\newcommand{\oT}{\Leftarrow}
\mathchardef\ordinarycolon\mathcode`\:
\mathchardef\hyph="2D
\newcommand{\dash}{\mathord{-}}
\numberwithin{equation}{section}
\title{Syntactic Minimization of\\Nondeterministic Finite Automata}
\titlerunning{Syntactic Minimization of Nondeterministic Finite Automata}
\author{Robert S.~R. Myers}{London, United Kingdom}{me.robmyers@gmail.com}{}{}
\author{Henning Urbat}{Friedrich-Alexander-Universität Erlangen-Nürnberg, Germany}{henning.urbat@fau.de}{https://orcid.org/0000-0002-3265-7168}%
{Supported by Deutsche Forschungsgemeinschaft (DFG) under project SCHR~1118/15-1}
\authorrunning{R.~S.~R.~Myers and  H.~Urbat}
\keywords{Algebraic language theory, Nondeterministic automata, NP-completeness}
\begin{document}
\maketitle
\begin{abstract}
  Nondeterministic automata may be viewed as succinct programs implementing deterministic automata, i.e.\ complete specifications. Converting a given deterministic automaton into a small nondeterministic one is known to be computationally very hard; in fact, the ensuing decision problem is $\PSPACE$-complete. This paper stands in stark contrast to the status quo. We restrict attention to \emph{subatomic} nondeterministic automata, whose individual states accept unions of syntactic congruence classes. They are general enough to cover almost all structural results concerning nondeterministic state-minimality. We prove that converting a monoid recognizing a regular language into a small subatomic acceptor corresponds to an $\NP$-complete problem. The $\NP$ certificates are  solutions of simple equations involving relations over the syntactic monoid. We also consider the subclass of \emph{atomic} nondeterministic automata introduced by Brzozowski and Tamm. Given a deterministic automaton and another one for the reversed language, computing small atomic acceptors is shown to be $\NP$-complete with analogous certificates. Our complexity results emerge from an algebraic characterization of (sub)atomic acceptors in terms of deterministic automata with semi\-lattice structure, combined with an equivalence of categories leading to succinct representations.
\end{abstract}

\section{Introduction}
Regular languages arise from a multitude of different perspectives: operationally via finite-state machines, model-theoretically via monadic second-order logic, and algebraically via finite monoids. In practice, \emph{deterministic} finite automata (dfas) and \emph{nondeterministic} finite automata (nfas) are two of the most common representations. Although the former may be exponentially larger than the latter, there is no known efficient procedure for converting dfas into small nfas, e.g.\ state-minimal ones. Jiang and Ravikumar proved the corresponding decision problem (does an equivalent nfa with a given number of states exist?) to be $\PSPACE$-complete~\cite{jk91,MinNfaHard1993}, suggesting that exhaustively enumerating candidates is necessary. One possible strategy towards tractability is to restrict the target automata to suitable \emph{subclasses} of nfas. The challenge is to identify subclasses permitting more efficient computation (e.g.\ lowering the $\PSPACE$ bound to an $\NP$ bound, enabling the use of SAT solvers), while still being general enough to cover succinct acceptors of regular languages.

In our present paper we will show that the class of \emph{subatomic nfas} naturally meets the above requirements. An nfa accepting the language $L$ is {subatomic} if each individual state accepts a union of syntactic congruence classes of $L$. In recent work~\cite{mmu21} we observed that almost all known results on the structure of small nfas, e.g.\ for unary~\cite{MinNfaUnary91,chrobak1986finite}, bideterministic~{\cite{TAMM2004135}}, topological~\cite{ammu14_2} and biRFSA languages~\cite{BiRFSA2009}, implicitly construct small subatomic nfas. This firmly indicates that the latter form a rich class of acceptors despite their seemingly restrictive definition, i.e.\ in many settings computing small nfas amounts to computing small subatomic ones. Restricting to subatomic nfas yields useful additional structure; in fact, their theory is tightly linked to the algebraic theory of regular languages and the representation theory of monoids. This suggests an \emph{algebraic} counterpart of the dfa to nfa conversion problem: given a finite {monoid} recognizing some regular language, compute an equivalent small subatomic nfa. Denoting its decision version (does an equivalent subatomic nfa with a given number of states exist?) by $\MONTOSUBATOMICNFA$,  our main result is:

\begin{theorem*}
The problem $\MONTOSUBATOMICNFA$ is $\NP$-complete.
\end{theorem*}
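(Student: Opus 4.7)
The plan is to establish both $\NP$-membership and $\NP$-hardness separately, leveraging the algebraic characterization of subatomic nfas alluded to in the abstract.

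For membership in $\NP$, the certificate will be the candidate subatomic nfa itself. Since every state of a subatomic acceptor for $L$ accepts a union of syntactic congruence classes, each of its $k$ states is specified by a subset of the syntactic monoid $M=\Syn{L}$ (plus initial/final markings and, for each $a\in\Sigma$, a binary transition relation on states). The whole certificate therefore has size polynomial in $|M|+|\Sigma|+k$, where $k\leq|M|$. I would then show that checking whether this data genuinely defines a subatomic nfa recognizing $L$ reduces to verifying a small system of ``simple equations involving relations over the syntactic monoid'' -- the ones hinted at in the abstract -- which can be tested in polynomial time by straightforward relational composition and containment checks on $M$.

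For $\NP$-hardness, I would reduce from $\BICCOV$, which is classically $\NP$-complete and is the standard engine of hardness results for nfa-style minimization problems (cf.\ Jiang--Ravikumar for the $\PSPACE$ case). Given a bipartite graph $(U\cup V,E)$ and a bound $k$, the task is to construct, in polynomial time, a finite monoid $M_G$ and a language $L_G$ recognized by $M_G$ such that $L_G$ admits a subatomic nfa of $\le k$ states iff $E$ can be covered by $\le k$ bicliques. The conceptual reason this is plausible is that a subatomic state, being a union of syntactic classes closed under the nfa's left/right accessibility, is precisely a ``rectangle'' in the two-sided Myhill-style decomposition of $M_G$, mirroring the rectangular structure of a biclique; one then tailors the monoid's multiplication table so that the rectangles of $L_G$ are in bijection with the bicliques of the input graph.

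The main obstacle is the hardness reduction: one must engineer a finite monoid of polynomial size whose syntactic-class structure encodes an arbitrary bipartite graph in such a way that (i) every biclique cover of the graph lifts to a subatomic nfa of the same size, and (ii) every small subatomic nfa projects back to a cover of the same size, with no way for the nfa to ``cheat'' by using non-rectangular unions that collapse several bicliques into one state. A natural attempt is to let $\Sigma$ contain letters indexed by $U$ and $V$, take $M_G$ to be a quotient of $U^*\cup V^*$ equipped with a zero, and define $L_G$ so that the non-zero elements of $M_G$ are in natural bijection with $U\times V$ and the accepted pairs are exactly the edges of the graph; then one must argue carefully that the algebraic characterization from the paper identifies subatomic minimization with edge biclique cover. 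The $\NP$-membership half, by contrast, is expected to follow with little effort once the relational equations characterizing subatomic acceptors are in hand.
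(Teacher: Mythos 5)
Your $\NP$-membership argument is essentially sound and close in spirit to the paper's, though the paper packages the certificate differently. You should make explicit the one lemma that makes verification work: a candidate $k$-state nfa annotated with subsets $S_q\seq\Syn{L}$ can be checked by the \emph{local} conditions $a^{-1}\mu_L^{-1}[S_q]=\bigcup_{q'\in\delta_a[q]}\mu_L^{-1}[S_{q'}]$ (computable from the multiplication table, since $a^{-1}\mu_L^{-1}[S]=\mu_L^{-1}[\{m: h(a)m\in S\}]$), $1\in S_q\Leftrightarrow q\in F_N$, and $\bigcup_{i\in I}S_i=F$; these recurrences determine $L(N,q)$ uniquely, so they force $L(N,q)=\mu_L^{-1}[S_q]$ and hence both correctness and subatomicity. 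Without the annotations one cannot verify $L(N)=L$ in polynomial time at all (that is the universality obstruction behind the $\PSPACE$-hardness of the unrestricted problem), so this is where the content lies. The paper instead certifies via a relation $\rS$ with $|\rS_\src|\le k$, $|\rS_\trg|\le|M|$ and $\Dep$-morphisms fitting into a commutative diagram obtained by translating the semilattice characterization of $\nmu{L}$ (\autoref{thm:na_nmu_char}(2)) through the equivalence $\JSLf\simeq\Dep$; the two certificates carry the same information.

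The hardness half has a genuine gap, and it is exactly the one you flag as "the main obstacle'': you never rule out that a small subatomic nfa beats the biclique cover bound, and your proposed construction does not obviously permit doing so. The paper's resolution is \emph{not} to constrain the shape of states; it is to choose the target language so that the chain $\dim{\rDR{L}}\le\ns{L}\le\nmu{L}$ from \eqref{eq:complexity_ineq} collapses to equalities. Concretely, given $(\rR,k)$ it takes $L=L(\Open(\rR))$, the \emph{lattice language} of the semilattice of open sets of $\rR$: an infinite language over an alphabet indexed by $J(\Open(\rR))\cup M(\Open(\rR))$ forbidding factors $\bra{j}\ket{m}$ with $j\le m$. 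For this $L$ one proves $\rR\cong\rDR{L}$ in $\Dep$ and $\nalpha{L}=\nmu{L}=\ns{L}=\dim{\rDR{L}}=\dim{\rR}$ (\autoref{prop:lattice-complexity}), so the lower bound against "cheating'' nfas comes for free from the Gruber--Holzer bound $\dim{\rDR{L}}\le\ns{L}$, while the upper bound is built by extending a semilattice embedding induced by a $k$-element biclique cover to a $\JSL$-dfa. Your direct encoding $L_G=\{uv:(u,v)\in E\}$ would at best give $\nmu{L_G}=\dim{E}+c$ for some additive constant (extra states for the initial/final layers), and establishing the matching lower bound with the exact constant, as well as arguing that the minimizer can always be taken subatomic, is precisely the delicate part you have deferred. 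Finally, a reduction to $\MONTOSUBATOMICNFA$ must also output the monoid's multiplication table in polynomial time; for the paper's lattice language this requires the explicit computation showing $\Syn{L}$ has only $O(|J|\cdot|M|)$ elements, a step absent from your sketch.
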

In addition we also investigate \emph{atomic nfas}, a subclass of subatomic nfas earlier introduced by Brzozowski and Tamm~\cite{TheoryOfAtomataBrzTamm2014}. Similar to the subatomic case, their specific structure naturally invokes the problem of converting a pair of dfas accepting mutually reversed languages into a small atomic nfa. Denoting its decision version by $\DFA+\rev{\DFA}\to \NFA_{\mathbf{atm}}$, we get:
\begin{theorem*}
The problem $\DFA+\rev{\DFA}\to \NFA_{\mathbf{atm}}$ is $\NP$-complete.
\end{theorem*}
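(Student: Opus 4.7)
\emph{Proof plan.} I would establish containment in $\NP$ and $\NP$-hardness separately, following the overall strategy suggested for $\MONTOSUBATOMICNFA$ but working with atoms instead of syntactic congruence classes.

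\textbf{Membership in $\NP$.} Given the two input dfas, one for $L$ and one for $\rev{L}$, the atoms of $L$ in the sense of Brzozowski and Tamm can be enumerated in polynomial time: they correspond, up to bijection, to the reachable joint states in the product of the two dfas, so their number is at most the product of the input sizes. Up to relabelling, an atomic nfa for $L$ with $k$ states is fully described by (i) an assignment, for each state $q$, of a set of atoms $A_q$ whose union equals the language accepted from $q$, (ii) an initial and a final set of states, and (iii) a transition relation on $k$ states. This data has size polynomial in the input plus $k$. Verification reduces to checking a polynomial number of algebraic equations expressing compatibility of the transitions with the atom decomposition---the ``simple equations involving relations'' flagged in the abstract---each of which is checkable in polynomial time.

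\textbf{$\NP$-hardness.} I would reduce from $\BICCOV$: given a bipartite graph $G = (X, Y, E)$ and $k \geq 1$, decide whether $E$ is coverable by $k$ bicliques of $G$. The plan is to construct in polynomial time a regular language $L_G$ together with dfas for $L_G$ and $\rev{L_G}$ so that (a) the atoms of $L_G$ are in natural bijection with the edges of $G$, (b) the left- and right-quotient actions on atoms mirror the row/column structure of the bipartite adjacency matrix of $G$, and (c) the subsets of atoms that may serve as a state language in some atomic nfa for $L_G$ are precisely the bicliques of $G$. Then atomic nfas with $k$ states for $L_G$ correspond bijectively to biclique covers of $G$ of size $k$, and the reduction is correct.

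\textbf{Main obstacle.} The crux is step (c): designing $L_G$ so that its syntactic structure forces state languages in any atomic nfa to be exactly bicliques, no more and no fewer. The transitions of an atomic nfa must respect the left and right quotient actions on atoms, which translates into a rectangle-closure condition on the atom sets attached to states. Ensuring that this condition, together with the requirement of accepting $L_G$ exactly, singles out precisely the biclique covers of $G$---and conversely that every such cover lifts back to a valid transition relation on $k$ states---is the technical heart of the argument. Once this correspondence is pinned down, the polynomial bounds on the dfa constructions and on the reduction itself are routine.
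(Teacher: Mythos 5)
Your membership argument is essentially sound and is, in concrete form, the same idea as the paper's: the paper's $\NP$ certificate is the relational ($\Dep$-level) incarnation of exactly your data, with the atom-set assignment $q\mapsto A_q$ playing the role of the morphism $\rQ$ and your ``compatibility equations'' being the commuting squares of \autoref{thm:na_nmu_char}(1) transported along $\Pirr$. Two points need care. First, the atoms are in bijection with the classes of $\sim_L$, i.e.\ with the states of the minimal dfa for $\rev{L}$, not with the product of the two input dfas (the polynomial bound survives, but the identification matters for computing the quotient action on atoms). Second, you must note that w.l.o.g.\ $k<|A|$, since the minimal dfa is itself atomic; otherwise guessing a $k$-state nfa is not polynomial. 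The unspecified ``algebraic equations'' should be: $q\in F$ iff $[\epsilon]_{\sim_L}\in A_q$; $\bigcup_{\alpha\in A_q}a^{-1}\alpha=\bigcup_{q'\in\delta_a[q]}\bigcup A_{q'}$ for all $q,a$ (well defined because $a^{-1}$ maps unions of atoms to unions of atoms); and $\bigcup_{q\in I}\bigcup A_q=L$. An induction on word length then yields $L(N,q)=\bigcup A_q$ for every state, hence atomicity and $L(N)=L$, and all three conditions are checkable in polynomial time from $A$ and $B$.

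The genuine gap is in the hardness half. Choosing $\BICCOV$ is right, but the entire mathematical content of the reduction is the construction of $L_G$ with your property (c), which you explicitly leave open; as written there is no proof. Moreover the intended correspondence (atoms of $L_G$ $\leftrightarrow$ edges of $G$, admissible state languages $\leftrightarrow$ bicliques of $G$) is not the one that works: even granting (a)--(c) you would still lack the ``no'' direction, because nothing forces every edge/atom to occur in some state's atom set (states need only accept unions of atoms, and atoms outside $L$ or outside every left derivative may simply never appear), so $k$ states need not yield a cover of $E$. The paper instead obtains the lower bound from $\dim{\rDR{L}}\leq\ns{L}\leq\nalpha{L}$ (the Gruber--Holzer biclique bound on the \emph{dependency relation}) and arranges $\rDR{L}\cong\rR$ as $\Dep$-objects rather than ``atoms $=$ edges''. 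Concretely, it takes the lattice language $L=L(\Open(\rR))$ of the semilattice of open sets of $\rR$; then $\dfa{L}$ and $\dfa{\rev{L}}$ have polynomially many states and are computable in polynomial time, and \autoref{prop:lattice-complexity}(2) gives $\nalpha{L}=\dim{\rDR{L}}=\dim{\rR}$. The nontrivial upper-bound direction of that proposition --- turning a $k$-biclique cover into a semilattice embedding $\SLD{L}\monoto S$ with $|J(S)|\leq k$ and verifying that every state of the resulting automaton accepts a $\sim_L$-invariant language, hence a member of $\BLD{L}$ --- is precisely the ``technical heart'' your proposal defers, and without it the reduction is not established.
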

The short certificates witnessing that both problems are in $\NP$ are solutions of \emph{equations} involving relations over the syntactic congruence or the Nerode left congruence, respectively. 

The above two theorems sharply contrast the $\PSPACE$-completeness of the general dfa to nfa conversion problem, but also previous results on its sub-$\PSPACE$ variants. The latter are either concerned with particular regular languages such as finite or unary ones~\cite{MinNfaUnary91,gh07}, or with target nfas admitting only very weak forms of nondeterminism, such as unambiguous automata~\cite{MinNfaHard1993} or dfas with multiple initial states~\cite{malcher04}. In contrast, our present work applies to all regular languages and the restriction to (sub)atomic nfas is a purely \emph{semantic} one.

Our results are fundamentally based upon a {category-theoretic perspective} on atomic and subatomic acceptors. At its heart are two equivalences of categories as indicated below:
\[ \JSL^\op_\mathbf{f} \xleftrightarrow[\text{~Structure theory~}]{\simeq} \JSLf \xleftrightarrow[\text{~Complexity theory~}]{\simeq} \Dep.\]
 As shown in~\cite{mmu21}, the \emph{structure theory} of (sub)atomic nfas emerges by interpreting them as dfas endowed with semilattice structure, and relating them to their dual automata under the familiar self-duality of the category $\JSLf$ of finite semilattices. Similarly, the \emph{complexity theory} of (sub)atomic nfas developed in the present paper rests on the equivalence between $\JSLf$ and a category $\Dep$ (see~\autoref{def:dep}) that yields succinct relational representations of finite semilattices by their irreducible elements. To derive the $\NP$-completeness theorems, we reinterpret semilattice automata associated to (sub)atomic nfas inside $\Dep$. We regard this conceptually simple and natural categorical approach as a key contribution of our paper.


\section{Atomic and Subatomic NFAs}\label{sec:prelim}
We start by setting up the notation and terminology used in the rest of the paper, including the key concept of a (sub)atomic nfa that underlies our complexity results. Readers are assumed to be familiar with basic category~\cite{maclane}.

\medskip\noindent\textbf{Semilattices.} A \emph{(join-)semilattice} is a poset $(S,\leq_S)$ in which every finite subset $X\seq S$ has a least upper bound (a.k.a.\ join) ${\bigvee} X$. A \emph{morphism} between semilattices is a map preserving finite joins. If $S$ is finite as we often assume, every subset $X\seq S$ also has a greatest lower bound (a.k.a.\ meet) $\bigwedge X$, given by the join of its lower bounds. In particular, $S$ has a least element $\bot_S=\bigvee \emptyset$ and a greatest element $\top_S=\bigwedge \emptyset$. An element $j\in S$ is \emph{join-irreducible} if $j=\bigvee X$ implies $j\in X$ for every subset $X\seq S$. Dually, $m\in S$ is \emph{meet-irreducible} if $m=\bigwedge X$ implies $m\in X$. We put
 \[J(S) = \{\, j\in S \;:\;\text{$j$ is join-irreducible} \,\} \quad\text{and}\quad M(S) = \{\, m\in S \;:\;\text{$m$ is meet-irreducible} \,\}.\] 
Note $\bot_S\not\in J(S)$ and $\top_S\not\in M(S)$. The join-irreducibles form the least set of \emph{join-generators} of $S$, i.e.\ every element of $S$ is a join of elements from $J(S)$, and every other subset $J\seq S$ with that property contains $J(S)$. Dually, $M(S)$ is the least set of \emph{meet-generators} of $S$.

Let $\mathbb{2}=\{0,1\}$ be the two-element semilattice with $0\leq 1$. Morphisms $i\colon \mathbb{2}\to S$ correspond to elements of $S$ via $i\mapsto i(1)$. Morphisms $f\colon S \to \mathbb{2}$ correspond to \emph{prime filters} via $f\mapsto f^{-1}[1]$. If $S$ is finite, these are precisely the subsets  $F_{s_0}=\{ s\in S:s\not\leq_S s_0 \}$ for any $s_0\in S$.

We denote by $\JSL$ the category of join-semilattices and their morphisms. Its full subcategory $\JSLf$ of finite semilattices is self-dual~\cite{Johnstone1982}: there is an equivalence functor \[\JSL^\op_\mathbf{f}\xra{\simeq}\JSLf\] mapping $(S,\leq_S)$ to the \emph{opposite semilattice} $S^\op = (S,\geq_S)$ obtained by reversing the order, and a morphism $f\colon S\to T$ to the morphism $f_*\colon T^\op\to S^\op$ sending $t\in T$ to the $\leq_S$-greatest element $s\in S$ with $f(s)\leq_T t$. Thus, $f$ and $f_*$ satisfy the adjoint relationship 
\[f(s)\leq_T t \qquad \text{iff}\qquad s\leq_S f_*(t)\]
for all $s\in S$ and $t\in T$. The morphism $f$ is injective (equivalently a $\JSLf$-monomorphism) iff $f_*$ is surjective (equivalently a $\JSLf$-epimorphism).
%

\medskip\noindent \textbf{Relations.} A \emph{relation} between sets $X$ and $Y$ is a subset $\rR\seq X\times Y$. We write $\rR(x,y)$ if $(x,y)\in \rR$. For $x\in X$ and $A\seq X$ we put \[\rR[x] = \{\, y\in Y: \rR(x,y)\,\}\qquad\text{and}\qquad \rR[A] = \bigcup_{x\in A} \rR[x].\] The \emph{converse} of $\rR$ is the relation $\breve{\rR} \seq Y\times X$ (alternatively $\rR\spbreve$) where $\breve{\rR}(y,x)$ iff $\rR(x,y)$ for $x\in X$ and $y\in Y$. The \emph{composite} of $\rR\seq X\times Y$ and $\rS\seq Y\times Z$ is the relation $\rR;\rS\seq X\times Z$ where $\rR(x,z)$ iff there exists $y\in Y$ with $\rR(x,y)$ and $\rS(y,z)$. Let $\Rel$ denote the category whose objects are sets and whose morphisms are relations with the above composition. The identity morphism on $X$ is the identity relation $\id_X\seq X\times X$ with $\id_X(x,y)$ iff $x=y$.

A \emph{biclique} of a relation $\rR\seq X\times Y$ is subset of the form $B_1 \times B_2\seq \rR$, where $B_1\seq X$ and $B_2\seq Y$. A set $\C$ of bicliques forms a \emph{biclique cover} if $\rR=\bigcup \C$. The \emph{bipartite dimension} of $\rR$, denoted $\dim{\rR}$, is the minimum cardinality of any biclique cover. 

\medskip\noindent \textbf{Languages.} Let $\Sigma^*$ be the set of finite words over an alphabet $\Sigma$ including the empty word $\epsilon$. A \emph{language} is a subset $L$ of $\Sigma^*$. We let $\ol{L} = \Sigma^*\setminus L$ denote the \emph{complement} and $\rev{L} = \{\rev{w}: w\in L \}$ the \emph{reverse} of $L$, where $\rev{\epsilon} = \epsilon$ and $\rev{w} = a_n\ldots a_1$ for $w=a_1\ldots a_n$. The \emph{left derivatives} and \emph{two-sided derivatives} of $L$ are, respectively, given by
$u^{-1}L = \{w\in \Sigma^* : uw\in L\}$ and $u^{-1}Lv^{-1} = \{ w\in \Sigma^*: uwv\in L \}$ for $u,v\in \Sigma^*$; moreover for $U\seq \Sigma^*$ put $U^{-1}L = \bigcup_{u\in U} u^{-1}L$. 
For each fixed $L \subseteq \Sigma^*$, the following sets of languages will play a prominent role:
\[
  \LD{L}\seq \SLD{L}\seq \BLD{L}\seq \BLRD{L}
\]
where $\LD{L} = \{  u^{-1} L : u \in \Sigma^* \}$ is the set of left derivatives, and $\SLD{L}$, $\BLD{L}$, $\BLRD{L}$ denote its closure under finite unions, all set-theoretic boolean operations, and all set-theoretic boolean operations and two-sided derivatives, respectively. The final three form $\cup$-semilattices, and the final two are boolean algebras w.r.t.\ the set-theoretic operations.

\smallskip
A language $L$ is \emph{regular} if $\LD{L}$ is a finite set; then the other three sets are finite too. The finite semilattices $\SLD{L}$ and $\SLD{\rev{L}}$ are related by the fundamental isomorphism
\begin{equation}
  \label{eq:drl-iso}
  \dr_L\colon [\SLD{\rev{L}}]^\op \xra{\cong} {\SLD{{L}}},\qquad K\mapsto (\ol{\rev{K}})^{-1}{L},
\end{equation}
see \cite[Proposition 3.13]{mmu21}. Equivalently, the map $\dr_L$ sends $V^{-1}\rev{L}\in \SLD{\rev{L}}$ to the largest element of $\SLD{L}$ disjoint from $\rev{V}$. It is closely connected to the \emph{dependency relation} of $L$,\begin{equation}\label{eq:DRl} \rDR{L}\seq \LD{L}\times \LD{\rev{L}},\qquad \rDR{L}(u^{-1}L,v^{-1}\rev{L})~\mathrel{\vcentcolon\Longleftrightarrow}\ u\rev{v}\in L \quad \text{for $u,v\in \Sigma^*$}.
\end{equation}
In fact, by \cite[Theorem 3.15]{mmu21} we have  \begin{equation}\label{eq_DR_vs_dr} \rDR{L}(u^{-1}L,v^{-1}\rev{L})\iff u^{-1}L\not\seq \dr_L(v^{-1}\rev{L})\qquad \text{for $u,v\in \Sigma^*$.} 
\end{equation}
Since the boolean algebra $\BLD{L}$ is generated by the left derivatives of $L$, its atoms (= join-irreducibles) are the congruence classes of the \emph{Nerode left congruence} $\sim_L\ \seq \Sigma^*\times \Sigma^*$,
\begin{equation}\label{eq:nerodecong} u\sim_L v\quad\text{iff}\quad \forall x\in \Sigma^*: u\in x^{-1}L\Lra v\in x^{-1}L \quad\text{iff}\quad (\rev{u})^{-1}\rev{L}=(\rev{v})^{-1}\rev{L}.\end{equation}
Note that this relation is left-invariant, i.e. $u\sim_L v$ implies $wu\sim_L wv$ for all $w\in \Sigma^*$.
 
Similarly, the atoms of $\BLRD{L}$ are the congruence classes of the \emph{syntactic congruence} $\equiv_L\ \seq \Sigma^*\times \Sigma^*$, i.e. the monoid congruence on the free monoid $\Sigma^*$ defined by
\begin{equation}\label{eq:syncong}u\equiv_L v\quad\text{iff}\quad \forall x,y\in \Sigma^*: u\in x^{-1}Ly^{-1}\Lra v\in x^{-1}Ly^{-1}.\end{equation}
The quotient monoid $\Syn{L}=\Sigma^*/{\equiv_L}$ is called the \emph{syntactic monoid} of $L$, and the canonical map $\mu_L\colon \Sigma^*\epito \Syn{L}$ sending $u\in \Sigma^*$ to its congruence class $[u]_{\equiv_L}$ is the \emph{syntactic morphism}.


\medskip\noindent \textbf{Automata.} Fix a finite alphabet $\Sigma$. A \emph{nondeterministic finite automaton} (a.k.a.\ \emph{nfa}) $N=(Q,\delta,I,F)$ consists of a finite set $Q$ (the states), relations $\delta=(\delta_a\seq Q\times Q)_{a\in \Sigma}$ (the transitions), and sets $I,F\seq Q$ (the initial states and final states). We write $q_1\xra{a}q_2$ whenever $q_2\in \delta_a[q_1]$. The language $L(N,q)$ \emph{accepted} by a state $q\in Q$ consists of all words $w\in \Sigma^*$ such that $\delta_w[q]\cap F\neq \emptyset$, where $\delta_w\seq Q\times Q$ is the extended transition relation $\delta_{a_1};\ldots;\delta_{a_n}$ for $w=a_1\ldots a_n$ and $\delta_\epsilon = \id_Q$. The language \emph{accepted} by $N$ is defined $L(N)=\bigcup_{i\in I} L(N,i)$.

An nfa $N$ is a \emph{deterministic finite automaton} (a.k.a.\ \emph{dfa}) if $I=\{q_0 \}$ is a singleton set and each transition relation is a function $\delta_a\colon Q\to Q$. A dfa is a \emph{$\JSL$-dfa} if $Q$ is a finite semilattice, each $\delta_a\colon Q\to Q$ is a semilattice morphism, and $F\seq Q$ forms a prime filter. It is often useful to represent a $\JSL$-dfa in terms of morphisms
\[
  \mathbb{2}\xra{i}Q\xra{\delta_a}Q\xra{f} \mathbb{2}
\]
where $i$ is the unique morphism with $i(1)=q_0$ and $f$ is given by $f(q)=1$ iff $q\in F$.
A $\JSL$-dfa \emph{morphism} from $A=(Q,\delta,i,f)$ to $A'=(Q',\delta',i',f')$ is a $\JSLf$-morphism $h\colon Q\to Q'$ preserving transitions via $h\circ \delta_a = \delta_a'\circ h$, preserving the initial state via $i'=h\circ i$, and both preserving and reflecting the final states via $f=f'\circ h$. Equivalently, $h$ is a dfa morphism that is also a semilattice morphism, so in particular $L(A)=L(A')$. If $Q$ is a subsemilattice of $Q'$ and $h\colon Q\monoto Q'$ is the inclusion map, then $A$ is called a \emph{sub $\JSL$-dfa} of $A'$.

Fix a regular language $L$. Viewed as a $\cup$-semilattice, $\BLRD{L}$ carries the structure of a $\JSL$-dfa with transitions $K\xra{a}a^{-1}K$, initial state $L$, and finals $\{ K : \epsilon \in K \}$. This restricts to sub $\JSL$-dfas $\BLD{L}$ and $\SLD{L}$. Moreover $\LD{L}$ forms a sub-dfa of $\SLD{L}$, well-known~\cite{BrzozowskiDRE1964} to be the \emph{state-minimal dfa} for $L$, so we denote it by $\dfa{L}$. The syntactic monoid $\Syn{L}$ is isomorphic to the \emph{transition monoid} of $\dfa{L}$, i.e.\ the monoid of all extended transition maps $\delta_w\colon \LD{L}\to \LD{L}$ ($w\in \Sigma^*$) with multiplication given by composition~\cite{pin20}. 

Analogously $\SLD{L}$ is the \emph{state-minimal $\JSL$-dfa} for $L$. Up to isomorphism, it is the unique $\JSL$-dfa for $L$ that is \emph{$\JSL$-reachable} (i.e.\ every state is a join of states reachable from the initial state via transitions) and \emph{simple} (i.e.\ distinct states accept distinct languages).

Nfas, dfas and $\JSL$-dfas are expressively equivalent and accept precisely the regular languages. In particular, to every $\JSL$-dfa $A=(Q,\delta,q_0,F)$ one can associate an equivalent nfa $J(A)$, the \emph{nfa of join-irreducibles}~\cite{arbib_manes_1975,ammu14_2,mamu15}. Its states are given by the set $J(Q)$ of join-irreducibles of $Q$; for any $q_1,q_2\in J(Q)$ and $a\in \Sigma$ there is a transition $q_1\xra{a} q_2$ in $J(A)$ iff $q_2\leq_Q \delta_a(q_1)$; a state $q\in J(Q)$ is initial iff $q\leq_S q_0$, and final iff $q\in F$. For any $q\in J(Q)$, we have $L(A,q)=L(J(A),q)$. The \emph{canonical residual finite state automaton}~\cite{ResidFSA2001} for a regular language $L$ is given by $N_L=J(\SLD{L})$, the nfa of join-irreducibles of its minimal $\JSL$-dfa.

\medskip\noindent \textbf{Atomic and subatomic nfas.}
An nfa accepting the language $L\seq\Sigma^*$ is called \emph{atomic}~\cite{TheoryOfAtomataBrzTamm2014} if each state accepts a language from $\BLD{L}$, and \emph{subatomic}~\cite{mmu21} if each state accepts a language from $\BLRD{L}$. The \emph{nondeterministic atomic complexity} $\nalpha{L}$ of a regular language $L$ is the least number of states of any atomic nfa accepting $L$. The \emph{nondeterministic syntactic complexity} $\nmu{L}$ is the least number of states of any subatomic nfa accepting $L$. Subatomic nfas are intimately connected to syntactic monoids: the atoms of $\BLRD{L}$ are the elements of $\Syn{L}$, so an nfa accepting $L$ is subatomic iff its individual states accept unions of syntactic congruence classes. Additionally $\nmu{L}$ can be characterized via \emph{boolean representations} of $\Syn{L}$, i.e.\ monoid morphisms $\rho\colon \Syn{L}\to \JSLf(S,S)$ into the endomorphisms of a finite semilattice~\cite{mmu21}. For a detailed exposition we refer to \emph{op.\ cit.}

These complexity measures are related to the \emph{nondeterministic state complexity} $\ns{L}$, i.e.\ the least number of states of any (unrestricted) nfa accepting $L$. In particular,
\begin{equation}\label{eq:complexity_ineq}\dim{\rDR{L}}\leq \ns{L}\leq \nmu{L}\leq \nalpha{L}.
\end{equation}
The first inequality is due to Gruber and Holzer~\cite{LowerBoundsHard} (see also \cite[Theorem 4.8]{mmu21} for a purely algebraic proof), while the others arise by restricting admissible nondeterministic acceptors.

Importantly, small atomic and subatomic nfas can be characterized in terms of $\JSL$-dfas. The following theorem involves two commuting diagrams of semilattice morphisms, whose lower and upper paths are the canonical $\JSL$-dfas described earlier.

\begin{theorem}\label{thm:na_nmu_char}
Let $L\seq \Sigma^*$ be a regular language.
\begin{enumerate}
\item $\nalpha{L}$ is the least number $k$ such that there exists a finite semilattice $S$ with $|J(S)|\leq k$ and $\JSLf$-morphisms $p,q$ and $\tau_a$ ($a\in \Sigma$) making the left-hand diagram below commute.
\item $\nmu{L}$ is the least number $k$ such that there exists a finite semilattice $S$ with $|J(S)|\leq k$ and $\JSLf$-morphisms $p,q$ and $\tau_a$ ($a\in \Sigma$) making the right-hand diagram below commute.
\end{enumerate}
\[
\xymatrix{
& \BLD{L} \ar[r]^{\delta_a'} & \BLD{L} \ar[dr]^{f'} & \\
\mathbb{2} \ar[ur]^{i'} \ar[dr]_{i} & S \ar@{-->}[u]^q \ar@{-->}[r]^{\tau_a} & S \ar@{-->}[u]_q & \mathbb{2} \\
& \SLD{L} \ar@{-->}[u]^p \ar@{-->}[r]_{\delta_a} & \SLD{L} \ar@{-->}[u]_p \ar[ur]_{f} & 
}
\qquad
\xymatrix{
& \BLRD{L} \ar[r]^{\delta_a''} & \BLRD{L} \ar[dr]^{f''} & \\
\mathbb{2} \ar[ur]^{i''} \ar[dr]_{i} & S \ar@{-->}[u]^q \ar@{-->}[r]^{\tau_a} & S \ar@{-->}[u]_q & \mathbb{2} \\
& \SLD{L} \ar@{-->}[u]^p \ar@{-->}[r]_{\delta_a} & \SLD{L} \ar@{-->}[u]_p \ar[ur]_{f} & 
}
\]
\end{theorem}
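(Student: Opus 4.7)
I will set up a correspondence between (sub)atomic nfas accepting $L$ with $k$ states and commuting diagrams of the stated form with $|J(S)|\leq k$. I sketch part (1); the proof of (2) is identical after replacing $\BLD{L}$ by $\BLRD{L}$ and ``atomic'' by ``subatomic''.

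\emph{From diagram to nfa.} Given $(S, p, q, (\tau_a)_{a\in\Sigma})$ making the diagram commute, endow $S$ with the $\JSL$-dfa structure whose initial morphism is $i_S \defeq p \circ i$, whose transitions are $\tau_a$, and whose final morphism is $f_S \defeq f' \circ q$. Commutativity of the upper squares and triangles then states precisely that $q \colon S \to \BLD{L}$ is a $\JSL$-dfa morphism. An easy induction on word length gives $q(\tau_w(i_S(1))) = w^{-1}L$ for every $w\in\Sigma^*$, so $L(S) = L$; more generally $L(S,s) = q(s)$ for every state $s\in S$, using that in the $\JSL$-dfa $\BLD{L}$ each state $K$ accepts the language $K$ itself. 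Thus the nfa of join-irreducibles $J(S)$ has $|J(S)| \leq k$ states, accepts $L$, and is atomic since $L(J(S),j) = L(S,j) = q(j) \in \BLD{L}$ for every $j\in J(S)$.

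\emph{From nfa to diagram.} Given an atomic nfa $N = (Q,\delta,I,F)$ with $|Q| = k$ accepting $L$, let $S \seq \BLD{L}$ be the sub-join-semilattice generated by $\{L(N,q) : q \in Q\}$. Atomicity places $S$ inside $\BLD{L}$, and $|J(S)| \leq k$ since $S$ has $k$ generators. The identity $a^{-1}\bigcup_{q \in X} L(N,q) = \bigcup_{q' \in \delta_a[X]} L(N,q')$ shows closure of $S$ under the transitions $\delta'_a$ of $\BLD{L}$, yielding $\tau_a \colon S \to S$. Furthermore $L = \bigcup_{q \in I} L(N,q) \in S$ and each left derivative $u^{-1}L = \bigcup_{q \in \delta_u[I]} L(N,q) \in S$, so $\SLD{L}$ is a sub-$\JSL$-dfa of $S$. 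The inclusions $p \colon \SLD{L} \hookrightarrow S$ and $q \colon S \hookrightarrow \BLD{L}$ are then sub-$\JSL$-dfa inclusions, and commutativity of the entire diagram is immediate from this.

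The main subtle step lies in the first direction: one has to verify that the $\JSL$-dfa structure induced on $S$ is consistent (the initial state and final prime filter coming from $p$ and from $q$ must match, which is exactly what the outer triangles of the diagram ensure), and that the identification $L(S,j) = q(j)$ holds for join-irreducibles $j$ — this is the crucial fact delivering atomicity of the extracted nfa $J(S)$. Everything else amounts to bookkeeping with unions of state-languages.
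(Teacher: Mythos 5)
Your proof is correct and follows essentially the same route as the paper's: the forward direction builds the $\JSL$-dfa $(S,\tau,p\circ i,f'\circ q)$ and extracts the nfa of join-irreducibles, using that $q$ is a $\JSL$-dfa morphism so $L(S,s)=q(s)\in\BLD{L}$; the reverse direction takes $S$ to be the join-semilattice generated by the state languages $L(N,q)$ (which is exactly the paper's $\langs{N}$) with the inclusion maps as $p$ and $q$. The only difference is that you spell out a few verifications (the induction $q(\tau_w(i_S(1)))=w^{-1}L$ and the closure identity $a^{-1}\bigcup_{q\in X}L(N,q)=\bigcup_{q'\in\delta_a[X]}L(N,q')$) that the paper leaves implicit.
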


\begin{proof}
We only prove part (1), the proof of (2) being completely analogous. 

\medskip\noindent Suppose there exists a finite semilattice $S$ with $|J(S)|=k$ and $\JSLf$-morphisms $p,q$ and $(\tau_a)_{a \in \Sigma}$ making the left diagram commute. Then $A=(S,\tau,p\circ i, f'\circ q)$ is a $\JSL$-dfa and $p\colon \SLD{L}\to A$ and $q\colon A\to \BLD{L}$ are $\JSL$-dfa morphisms. Since $\JSL$-dfa morphisms preserve the accepted language, and every state $K\in \BLD{L}$ accepts the language $K$, it follows that $A$ accepts $L$ and every state of $A$ accepts a language from $\BLD{L}$. Thus the nfa $J(A)$ of join-irreducibles corresponding to $A$ is an atomic nfa for $L$ with $k$ states.

\medskip\noindent Conversely, assume $N=(Q,\delta,I,F)$ is a $k$-state atomic nfa  accepting $L$. Form the $\cup$-semilattice $S=\langs{N}$ of all languages $L(N,X)$ accepted by subsets $X\seq Q$. Note that $\SLD{L}\seq S\seq \BLD{L}$: the first inclusion holds because $u^{-1}L=L(N,\delta_w[I])\in S$ for every $u\in \Sigma^*$, and the second one because $N$ is atomic. We define the semilattice endomorphisms 
\[ \tau_a\colon S\to S\qquad\text{by}\qquad \tau_a(K)=a^{-1}K\quad\text{for $K\in S$}, \]
Letting $p\colon \SLD{L}\monoto S$ and $q\colon S\monoto \BLD{L}$ denote the inclusions, the left diagram commutes. Moreover $|J(S)|\leq k$ since $S$ is join-generated by the elements $L(N,q)$ for $q\in Q$.
\end{proof}

\section{Representing Finite Semilattices as Finite Relations}\label{sec:dep}


We have seen that atomic and subatomic nfas amount to certain dfas with semi\-lattice structure. To obtain our $\NP$-completeness results concerning the computation of small (sub)atomic acceptors we will study succinct representations of the corresponding $\JSLf$-diagrams from \autoref{thm:na_nmu_char}. For this purpose, we start with the following key observation:
\begin{quote}
  Any finite semilattice $S$ is completely determined by its \emph{poset of irreducibles}~\cite{MarkowskyLat1975}, i.e.\ the relation $\not\leq_S\ \seq J(S)\times M(S)$ between join-irreducibles and meet-irreducibles.
\end{quote}
 We now prove that this extends to an \emph{equivalence} between
 the category $\JSLf$ of finite semilattices and another category called $\Dep$. Its objects are the relations between finite sets and its morphisms represent semilattice morphisms as relations. The equivalence is inspired by Moshier's \emph{categories of contexts}~\cite{jipsen12,moshier16} and will serve as the conceptual basis of our work.

\begin{defn}[The category of dependency relations]
  \label{def:dep}
  The objects of the category $\Dep$ are the relations $\rR \seq \rR_\src \times \rR_\trg$ between finite sets. Far less obviously,
  \begin{quote}
    a morphism $\rP\colon \rR \to \rS$ is a relation $\rP \seq \rR_\src \times \rS_\trg$ that factorizes through $\rR$ and $\rS$, i.e.\ the left $\Rel$-diagram below commutes for some $\rP_l \seq \rR_\src \times \rS_\src$ and $\rP_u  \seq \rS_\trg \times \rR_\trg$.
  \end{quote}
  The identity morphism for $\rR$ is $\id_\rR=\rR$, see the central diagram below. 
  The composite $\rP\fatsemi \rQ\colon \rR\to \rT$ of $\rP\colon \rR\to \rS$ and $\rQ\colon \rS\to \rT$ is any of the five equivalent relational compositions starting from the bottom left corner and ending at the top right corner of the rightmost diagram below; that is,
$\rP \fatsemi \rQ := \rP_l ; \rQ_l ; \rT= \rP_l ; \rQ= \rP_l ; \rS ; \rQ_u\spbreve = \rP ; \rQ_u\spbreve = \rR ; \rP_u\spbreve ;\rQ_u\spbreve$. (Note that we use the symbol $\fatsemi$ for composition in $\Dep$ and $;$ for composition in $\Rel$, and recall that $(\dash)\spbreve$ denotes the converse relation.)
  \[
  \xymatrix{
    \rR_\trg \ar@{-->}[rr]^{\rP_u\spbreve}  && \rS_\trg
    \\
    \rR_\src \ar[u]^-{\rR} \ar[urr]^-\rP \ar@{-->}[rr]_{\rP_l} && \rS_\src \ar[u]_-{\rS}
  }
\quad
  \xymatrix{
    \rR_\trg \ar@{-->}[rr]^{\id\spbreve}  && \rR_\trg
    \\
    \rR_\src \ar[u]^-{\rR} \ar[urr]^-\rR \ar@{-->}[rr]_{\id} && \rR_\src \ar[u]_-{\rR}
  } 
\quad 
  \xymatrix{
    \rR_\trg \ar@{-->}[rr]^{\rP_u\spbreve}  && \rS_\trg \ar@{-->}[rr]^{\rQ_u\spbreve} && \rT_\trg
    \\
    \rR_\src \ar[u]^-{\rR} \ar[urr]^-\rP \ar@{-->}[rr]_{\rP_l} && \rS_\src \ar[u]_-{\rS} \ar[urr]^-\rQ \ar@{-->}[rr]_{\rQ_l} && \rT_\src \ar[u]_-{\rT}
  }
  \]
\end{defn}
One readily verifies that $\Dep$ is a well-defined category; in particular, the composition is independent of the choice of the lower and upper witnesses $(\dash)_l$ and $(\dash)_u$. 

\begin{rem}\label{rem:dep-witnesses}
  \begin{enumerate}
    \item Using the converse upper witness may seem strange. Although technically unnecessary, it fits the self-duality of $\Dep$ taking the converse on objects and morphisms. Moreover $f ; \nleq_T\ =\ \nleq_S ; f_*\spbreve$ for any $\JSLf$-morphism $f\colon S \to T$ via the adjoint relationship; that is, $f$ induces a $\Dep$-morphism from $\not\leq_S$ to $\not\leq_T$ with lower witness $f$ and upper witness $f_*$.
    \item The witnesses of a $\Dep$-morphism $\rP \colon \rR \to \rS$ are closed under unions. The maximal lower witness $\rP_-\seq \rR_\src\times \rS_\src$ is given by
      \[\rP_-(x, y) ~\mathrel{\vcentcolon\Longleftrightarrow}\ \rS[y] \subseteq \rP[x]\qquad\text{for}\qquad  x\in \rR_\src,\, y\in \rS_\src,\]
       and the maximal upper witness $\rP_+\seq \rS_\trg\times \rR_\trg$ by
      \[\rP_+(y, x) ~\mathrel{\vcentcolon\Longleftrightarrow}\ \breve{\rR}[x] \subseteq \breve{\rP}[y] \qquad\text{for}\qquad x\in \rR_\trg,\, y\in \rS_\trg.\]
  \end{enumerate}
\end{rem}

\begin{theorem}[Fundamental equivalence]\label{thm:jsl_vs_dep}
The categories $\JSLf$ and $\Dep$ are equivalent.

\smallskip
\begin{enumerate} 
\item The equivalence functor $\Pirr\colon \JSLf \to \Dep$ maps a finite semilattice $S$ to the $\Dep$-object
\[
  \Pirr(S)\;:=\;\not\leq_S \;\seq \;J(S)\times M(S),
\]
 and a $\JSLf$-morphism $f\colon S\to T$ to the $\Dep$-morphism
\[
  \Pirr(f)\colon  \Pirr(S)\to \Pirr(T),\qquad \Pirr(f)(j,m) :\Lra f(j)\not\leq_T m  \quad \text{for $j\in J(S)$, $m\in M(T)$}.
\]
\item The inverse $\Open\colon \Dep \to \JSLf$ maps a $\Dep$-object $\rR$ to its \emph{semilattice of open sets} 
\[
  \Open(\rR) \;:=\; (\{ \rR[X]: X\seq \rR_\src \},\seq),
\]
and a $\Dep$-morphism $\rP\colon \rR\to \rS$ to the $\JSLf$-morphism
\[ \Open(\rP)\colon \Open(\rR)\to \Open(\rS),\qquad \Open(\rP)(O) := \rP_+\spbreve[O] \quad\text{for $O\in \Open(\rR)$},
 \]
where $\rP_+\seq \rS_\trg\times\rR_\trg$ is the maximal upper witness of $\rP$.
\end{enumerate}
\end{theorem}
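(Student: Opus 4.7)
The plan is to verify that $\Pirr$ and $\Open$ are well-defined functors and then to exhibit natural isomorphisms $\eta_S\colon S \xrightarrow{\cong} \Open(\Pirr(S))$ and $\epsilon_\rR\colon \Pirr(\Open(\rR)) \xrightarrow{\cong} \rR$, which together witness the claimed equivalence. The workhorse throughout is Markowsky's principle that a finite semilattice is faithfully represented by its relation $\not\leq_S$ between join- and meet-irreducibles.

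For the functoriality of $\Pirr$, the main task is to exhibit witnesses for $\Pirr(f)$. A lower witness is $\Pirr(f)_l(j,j') :\Lra j' \leq_T f(j)$: its composite with $\not\leq_T$ recovers $\Pirr(f)$, since $f(j) \not\leq_T m$ holds iff some $j' \in J(T)$ satisfies $j' \leq_T f(j)$ and $j' \not\leq_T m$ (using join-generation by $J(T)$). A dual upper witness is obtained from $f_*$ via the adjoint relationship $f(j) \leq_T m \iff j \leq_S f_*(m)$. Preservation of identities and composition then follows by a direct calculation using the formula $\rP \fatsemi \rQ = \rP_l ; \rQ$. For $\Open$: the object $\Open(\rR)$ is evidently a finite $\cup$-subsemilattice of $\Pow(\rR_\trg)$, and $\Open(\rP)$ is well-defined because the identity $\Open(\rP)(\rR[X]) = \rP[X] = \rP_u\spbreve[\rR[X]]$ holds for \emph{any} upper witness and thus does not depend on the choice of $X$ representing the open set; union-preservation and functoriality then follow from the equivalent formulas for $\fatsemi$ listed in \autoref{def:dep}.

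For the unit, define $\eta_S(s) := \{\,m \in M(S) : s \not\leq_S m\,\}$. This lies in $\Open(\Pirr(S))$ because it equals $\not\leq_S[\,\{j \in J(S) : j \leq_S s\}\,]$ via join-generation; it is a $\JSLf$-isomorphism since meet-irreducibles meet-generate $S$ (injectivity) and $\not\leq_S[X] = \eta_S(\bigvee X)$ (surjectivity and the morphism laws). Naturality in $f\colon S \to T$ reduces, via $\Open(\Pirr(f))(\not\leq_S[X]) = \Pirr(f)[X]$, to $f$ respecting the decomposition $s = \bigvee\{\,j \in J(S) : j \leq_S s\,\}$. For the counit, the crux is the observation that every non-empty join-irreducible $J \in J(\Open(\rR))$ has the form $\rR[x]$ for some $x \in \rR_\src$: any description $J = \rR[A]$ expands as $\bigcup_{a \in A} \rR[a]$, and join-irreducibility forces this union to collapse to a single term. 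Dually, each meet-irreducible of $\Open(\rR)$ admits a co-principal description in terms of $\rR_\trg$. With this in hand, I set $\epsilon_\rR(J,y) :\Lra y \in J$, with lower witness $\epsilon_{\rR,l}(J,x) :\Lra \rR[x] \subseteq J$, and the candidate inverse $\epsilon_\rR^{-1}(x,m) :\Lra \rR[x] \not\subseteq m$ with dual witnesses.

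The main technical obstacle is verifying that $\epsilon_\rR$ and $\epsilon_\rR^{-1}$ are genuinely mutually inverse in $\Dep$: chasing the composition formula $\rP \fatsemi \rQ = \rP_l ; \rQ$ through the principal-open and co-principal identifications, one must confirm that both composites collapse back to the identity relation $\not\leq$ on the irreducibles of $\Open(\rR)$, which once again invokes the full strength of Markowsky's correspondence. Naturality of $\epsilon$ then follows from the identity $\Open(\rP)(\rR[X]) = \rP[X]$ by a direct chase. Together, $\eta$ and $\epsilon$ establish the equivalence $\JSLf \simeq \Dep$.
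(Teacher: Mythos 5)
Your proof is correct and follows essentially the same route as the paper: the same unit $s\mapsto\{\,m\in M(S): s\not\leq_S m\,\}$ and the same counit pair built from $\in$ and $\not\subseteq$, resting on the same key fact that the join-irreducibles of $\Open(\rR)$ are principal opens $\rR[x]$ and its meet-irreducibles are the sets $\intop_{\rR}(\rR_\trg\setminus\{y\})$ (the paper's \autoref{lem:jirr-mirr-open}). The only presentational difference is that the paper first proves the equivalence for the unrestricted functor $\Nleq(S)=\,\not\leq_S\,\subseteq S\times S$, where the witnesses are simply $f$ and $f_*$, and then restricts to irreducibles, whereas you argue directly with $\Pirr$.
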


\begin{rem}\label{rem:join_meet_generators}
In the definition of $\Pirr(S)$ one may replace $J(S)$ and $M(S)$ by any two sets $J,M\seq S$ of join- and meet-generators modulo $\Dep$-isomorphism. Indeed, since the equivalence functor $\Open$ reflects isomorphisms, this follows immediately from the $\JSLf$-isomorphism $\Open(\not\leq_S\cap\ {J\times M})\cong \Open(\not\leq_S\cap\ {J(S)\times M(S)})$ given by $O\mapsto O\cap M(S)$.
\end{rem}

\begin{rem}\label{rem:dep_bip_isos}
  Bijectively relabeling the domain and codomain of a relation defines a $\Dep$-isomorphism, the witnesses being the relabelings.
\end{rem}

\noindent We now show that for every regular language $L$, the semilattices $\SLD{L}$, $\BLD{L}$ and $\BLRD{L}$ equipped with their canonical $\JSL$-dfa structure (see \autoref{sec:prelim}) translate under the equivalence functor $\Pirr$ into familiar concepts from automata theory. The translations are summarized in \autoref{tab1} and explained in Examples \ref{ex:sld-vs-dlr}--\ref{ex:blrd-vs-syn} below. 

\begin{table}[H]
  \small
\centering
\begin{tikzpicture}
\node (table) [inner sep=1pt] {
\begin{tabular}{C{6cm}C{6cm}}
$\JSLf$ & $\Dep$  \\
\hline
$\mathbb{2} \xra{i}  \SLD{L}\xra{\delta_a} \SLD{L} \xra{f} \mathbb{2}$ &  $\id_1 \xra{\rI} \rDR{L} \xra{\rDR{L,a}} \rDR{L} \xra{\rF} \id_1$ \\
  \hdashline
$\mathbb{2} \xra{i'}  \BLD{L}\xra{\delta_a'} \BLD{L} \xra{f'} \mathbb{2}$ & $\id_1 \xra{\rI'} \id_{\Sigma^*/{\sim_L}} \xra{\rD_a'} \id_{\Sigma^*/{\sim_L}} \xra{\rF'} \id_1$ \\
\hdashline
$\mathbb{2} \xra{i''}  \BLRD{L}\xra{\delta_a''} \BLRD{L} \xra{f''} \mathbb{2}$ & $\id_1 \xra{\rI''} \id_{\Syn{L}} \xra{\rD_a''} \id_{\Syn{L}} \xra{\rF''} \id_1$\\
\end{tabular}
};
\draw [rounded corners=.5em] (table.north west) rectangle (table.south east);
\end{tikzpicture}
\caption{Canonical $\JSL$-dfas and their corresponding $\Dep$-structures}
\label{tab1}
\end{table}

\begin{example}[State-minimal $\JSL$-dfa vs.\ dependency relation $\rDR{L}$]
\label{ex:sld-vs-dlr}
Let us start with the observation that $\SLD{L}$ is join-generated by $\LD{L}$ and meet-generated by $\dr_L[\LD{\rev{L}}]$. The latter follows via the fundamental isomorphism \eqref{eq:drl-iso}. Then
\[
  \Pirr(\SLD{L})(u^{-1} L, \dr_L(v^{-1} \rev{L}))\;\;\stackrel{\text{def.}}{\Longleftrightarrow}\;\; u^{-1} L \nsubseteq \dr_L(v^{-1} \rev{L}) \;\;\stackrel{\text{\eqref{eq_DR_vs_dr}}}{\Longleftrightarrow}\;\; \rDR{L}(u^{-1} L, v^{-1}\rev{L})
\]
for every $u^{-1}L\in J(\SLD{L})$ and $v^{-1}\rev{L} \in J(\SLD{\rev{L}})$. 
Thus,
\begin{quote}
  $\Pirr(\SLD{L})$ is a bijective relabeling of $\rDR{L}$ restricted to $J(\SLD{L}) \times J(\SLD{\rev{L}})$.
\end{quote}
By \autoref{rem:join_meet_generators} we know $\Pirr(\SLD{L})$ is isomorphic to the domain-codomain extension $\nsubseteq\ \subseteq \LD{L} \times \dr_L[\LD{L^r}]$ and thus also to the dependency relation $\rDR{L}$ by Remark \ref{rem:dep_bip_isos}. Then the $\JSL$-dfa structure of the semilattice $\SLD{L}$ translates into the category of dependency relations as shown in \autoref{tab1}, where $\id_1$ is the identity relation on $1 = \{ * \}$ and
\[ 
\begin{array}{lllll}
\rI\seq 1\times \LD{\rev{L}}, && \rDR{L,a}\seq \LD{L}\times \LD{\rev{L}}, && \rF\seq \LD{{L}}\times 1, \\
\rI(\ast,v^{-1}\rev{L}) \Lra v \in \rev{L}, && \rDR{L,a}(u^{-1}L,v^{-1}\rev{L}) \Lra  ua\rev{v}\in L, && \rF(u^{-1}{L},\ast)\Lra {u}\in L.
\end{array}
\]
\end{example}

\begin{example}[$\BLD{L}$ vs.\ the Nerode left congruence $\sim_L$]\label{ex:bld-vs-nerode}
  In \autoref{sec:prelim} we observed that the atoms of the boolean algebra $\BLD{L}$ are the congruence classes of the Nerode left congruence. Then the co-atoms are their relative complements, and
  \[
    \Pirr(\BLD{L})([u]_{\sim_L},\ol{[v]_{\sim_L}})
    \xLeftrightarrow{\text{def.}} [u]_{\sim_L}\not\seq \ol{[v]_{\sim_L}}
    \iff [u]_{\sim_L} = [v]_{\sim_L}.
  \]
  By Remark \ref{rem:dep_bip_isos}, we see that $\BLD{L}$ corresponds to the $\Dep$-object $\id_{\Sigma^*/{\sim_L}}$, and its $\JSL$-dfa structure translates into the category of dependency relations as indicated in \autoref{tab1}, where
\[ 
\begin{array}{lll}
  \rI'\seq 1\times \Sigma^*/{\sim_L}, & \rD_a' \seq \Sigma^*/{\sim_L}\times \Sigma^*/{\sim_L}, & \rF'\seq \Sigma^*/{\sim_L}\times 1, \\
  \rI'(\ast,[u]_{\sim_L}) \Lra u\in L, & \rD_a'([u]_{\sim_L},[v]_{\sim_L}) \Lra  [v]_{\sim_L}\seq a^{-1}[u]_{\sim_L}, & \rF'([u]_{\sim_L},\ast)\Lra u\sim_{L} \epsilon.
\end{array}
\]
We note that the above relations induce an nfa
  \[
    (\Sigma^*/{\sim_L}, (\rD_a')_{a \in \Sigma}, \rI'[*], \breve{\rF}'[*])  
    \qquad
    \text{known as the \emph{\'{a}tomaton} for the language $L$~\cite{TheoryOfAtomataBrzTamm2014}.}
  \]
\end{example}

\begin{example}[$\BLRD{L}$ vs. the syntactic monoid $\Syn{L}$]\label{ex:blrd-vs-syn}
Analogously, the boolean algebra $\BLRD{L}$ corresponds to the $\Dep$-object $\id_{\Syn{L}}$. Its semilattice dfa structure translates into the category of dependency relations as shown in \autoref{tab1}, where
\[ 
\arraycolsep=4.0pt
\begin{array}{lll}
\rI''\seq 1\times \Syn{L}, & \rD_a'' \seq \Syn{L}\times \Syn{L}, & \rF''\seq \Syn{L}\times 1, \\
\rI''(\ast,[u]_{\equiv_L}) \Lra u\in L, & \rD_a''([u]_{\equiv_L},[v]_{\equiv_L}) \Lra  [v]_{\equiv_L}\seq a^{-1}[u]_{\equiv_L}, & \rF''([u]_{\equiv_L},\ast)\Lra u\equiv_L \epsilon.
\end{array}
\]
\end{example}

We conclude this section with two lemmas establishing important properties of the equivalence. The first concerns the bipartite dimension of relations (see \autoref{sec:prelim}):
\begin{lemma}\label{lem:dep-iso-preserves-dim}
Let $\rR$ be a relation between finite sets.
\begin{enumerate}
\item $\dim{\rR}$ is the least $|J(S)|$ of any injective $\JSLf$-morphism $m\colon \Open(\rR)\monoto S$.
\item $\dim{\rR}$ is invariant under isomorphism, i.e.\ $\rR\cong \rS$ in $\Dep$ implies $\dim{\rR}=\dim{\rS}$.
\end{enumerate}
\end{lemma}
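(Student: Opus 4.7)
Part (2) follows directly from part (1) combined with the equivalence $\Open:\Dep\xra{\simeq}\JSLf$: any $\Dep$-isomorphism $\rR\cong\rS$ transports to a $\JSLf$-isomorphism $\Open(\rR)\cong\Open(\rS)$, so the two sets $\{|J(S)|:\Open(\rR)\monoto S\}$ and $\{|J(S)|:\Open(\rS)\monoto S\}$ coincide. I therefore focus on part (1) and prove the two inequalities separately.

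For $\dim{\rR}\le\min\{|J(S)|:\Open(\rR)\monoto S\}$, I would start from an optimal biclique cover $\rR=\bigcup_{i=1}^{k}B_1^i\times B_2^i$ of size $k=\dim{\rR}$, and let $T\seq\Pow(\rR_\trg)$ be the sub-$\cup$-semilattice generated by $\{B_2^i\}_{i\in[k]}$. Since $T$ has $k$ join-generators, $|J(T)|\le k$. The cover identity gives $\rR[x]=\bigcup_{i:\,x\in B_1^i}B_2^i$ for every $x\in\rR_\src$, so each row belongs to $T$; since $\Open(\rR)$ is generated by its rows, $\Open(\rR)\seq T$ and the inclusion is the desired $\JSLf$-injection realising the bound.

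For the converse inequality, take any injective $m:\Open(\rR)\monoto S$ with $|J(S)|=k$ and compose with the standard embedding $S\monoto\Pow(J(S))$, $s\mapsto\{j\in J(S):j\le s\}$, to obtain an injective $\JSLf$-morphism $n:\Open(\rR)\monoto\Pow([k])$. I then read off a biclique cover by setting $B_1^i:=\{x\in\rR_\src:i\in n(\rR[x])\}$ and $B_2^i:=\bigcap_{x\in B_1^i}\rR[x]$; the biclique condition $B_1^i\times B_2^i\seq\rR$ is immediate. To check coverage, for $(x,y)\in\rR$ I consider the \emph{no-$y$ row-union} $U_y:=\rR[\rR_\src\setminus\breve{\rR}[y]]\in\Open(\rR)$, which excludes $y$ while $\rR[x]$ contains it. Hence $U_y\subsetneq U_y\cup\rR[x]$ in $\Open(\rR)$, and injectivity of $n$ produces some $i\in n(\rR[x])\setminus n(U_y)$. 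Then $x\in B_1^i$ by definition; and $y\in B_2^i$, for otherwise some $x'\in B_1^i$ would have $y\notin\rR[x']$, forcing $\rR[x']\seq U_y$ and the contradiction $i\in n(\rR[x'])\seq n(U_y)$.

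I expect the main obstacle to be pinning down the auxiliary set $U_y$ in the reverse inequality: it must simultaneously lie in $\Open(\rR)$ (so that injectivity of $n$ can be invoked), miss $y$, and absorb every row that itself misses $y$. Once this witness is isolated, the remaining bookkeeping is routine set-theoretic manipulation and both halves assemble into the claimed equality.
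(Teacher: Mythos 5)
Your part (2) is exactly the paper's argument (functoriality of $\Open$), and your part (1) is doing genuinely more work than the paper, which at this point simply cites \cite[Theorem~4.8]{mmu21} rather than giving a proof. The overall two-direction structure you set up is the right one, and the first direction (from an optimal biclique cover to the subsemilattice $T\seq\Pow(\rR_\trg)$ generated by the sets $B_2^i$, with $|J(T)|\leq k$ and $\Open(\rR)\seq T$) is correct as written. Two remarks on presentation: you have swapped the labels of the two inequalities (the cover-to-embedding construction proves $\min\{|J(S)|\}\leq \dim{\rR}$, and the embedding-to-cover construction proves $\dim{\rR}\leq\min\{|J(S)|\}$, not the other way round), and the ``main obstacle'' you anticipate --- finding $U_y$ --- is indeed resolved correctly by $U_y=\rR[\rR_\src\setminus\breve{\rR}[y]]$.

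There is, however, one genuinely false intermediate claim in the converse direction: the map $S\to\Pow(J(S))$, $s\mapsto\{j\in J(S):j\leq_S s\}$, is \emph{not} a $\JSLf$-morphism for a general finite semilattice $S$. It preserves joins only when every join-irreducible is join-prime, which essentially forces $S$ to be distributive; in the five-element modular lattice with three atoms $a,b,c$ one has $a\leq b\vee c$ but $a\not\leq b$ and $a\not\leq c$. Consequently your $n$ is injective and monotone but need not satisfy $n(U_y\cup\rR[x])=n(U_y)\cup n(\rR[x])$, so ``injectivity of $n$'' does not by itself produce an $i\in n(\rR[x])\setminus n(U_y)$: it only produces some $i\in n(U_y\cup\rR[x])\setminus n(U_y)$, which without join-primeness need not lie in $n(\rR[x])$. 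The step is easily repaired without $n$: since $y\in\rR[x]\setminus U_y$ we have $\rR[x]\nsubseteq U_y$; injective $\JSLf$-morphisms are order-embeddings, so $m(\rR[x])\not\leq_S m(U_y)$; and since $m(\rR[x])$ is the join of the join-irreducibles below it, some $j\in J(S)$ satisfies $j\leq_S m(\rR[x])$ and $j\not\leq_S m(U_y)$. Indexing the biclique by this $j$, the remainder of your coverage argument goes through, because the final contradiction ($\rR[x']\seq U_y$ forces $j\leq_S m(\rR[x'])\leq_S m(U_y)$) uses only monotonicity. With that one substitution the proof of part (1), and hence of the whole lemma, is complete.
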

The second explicitly describes the join- and meet-irreducibles of the semilattice $\Open(\rR)$.

\begin{notation} For $\rR\seq \rR_\src\times \rR_\trg$ we define the following operator on the power set of $\rR_\trg$:
\[ \intop_{\rR}\colon \Pow(\rR_\trg)\to \Pow(\rR_\trg),\qquad Y\;\mapsto\; \bigcup \{ \rR[X] \;:\; X\seq \rR_\src \text{ and } \rR[X]\seq Y \}.  \]
Thus, $\intop_{\rR}(Y)$ is the largest open set of $\rR$ contained in $Y\seq \rR_\trg$.
\end{notation}

\begin{lemma}\label{lem:jirr-mirr-open}
Let $\rR\seq \rR_\src \times \rR_\trg$ be a relation between finite sets.
\begin{enumerate}
\item $J(\Open(\rR))$ consists of all sets $\rR[x]$ ($x\in \rR_\src$) that cannot be expressed as a union of smaller such sets, i.e. $\rR[x]=\bigcup_{i\in I} \rR[x_i]$ implies $\rR[x]=\rR[x_i]$ for some $i\in I$.
\item $M(\Open(\rR))$ consists of all sets $\intop_{\rR}(\rR_\trg\setminus \{y\})$ such that  $\breve{\rR}[y]$ lies in $J(\Open(\breve{\rR}))$.
\end{enumerate}
\end{lemma}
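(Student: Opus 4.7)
My plan is to handle each part by first identifying a canonical set of (co)generators of $\Open(\rR)$ and then restricting to the irreducible elements among them, exploiting that $\Open(\rR)$ is generated under unions by the rows $\rR[x]$ for $x \in \rR_\src$.

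\textbf{Part (1).} Every element of $\Open(\rR)$ has the form $\rR[X] = \bigcup_{x \in X} \rR[x]$, so $\{\rR[x] : x \in \rR_\src\}$ join-generates $\Open(\rR)$ and hence contains $J(\Open(\rR))$. The forward implication of the stated characterisation is immediate from the definition of join-irreducibility applied to the union $\rR[x] = \bigcup_i \rR[x_i]$. For the converse, any join-decomposition $\rR[x] = \bigvee_{O \in \mathcal{O}} O$ in $\Open(\rR)$ refines to $\rR[x] = \bigcup_{O \in \mathcal{O},\, x' \in X_O} \rR[x']$ by writing $O = \rR[X_O]$; the hypothesis then yields $\rR[x] = \rR[x']$ for some $x' \in X_{O_0}$, forcing $O_0 = \rR[x] \in \mathcal{O}$.

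\textbf{Part (2).} I would deduce the result from part (1) applied to $\breve{\rR}$, via an explicit order-reversing bijection
\[
  \Phi \colon \Open(\rR) \to \Open(\breve{\rR}), \qquad V \mapsto \breve{\rR}[\rR_\trg \setminus V].
\]
The argument breaks into three steps. First, for any $O \in \Open(\rR)$ the identity $O = \bigwedge_{y \notin O} \intop_\rR(\rR_\trg \setminus \{y\})$ holds, since the right-hand side is the largest open contained in $\bigcap_{y \notin O}(\rR_\trg \setminus \{y\}) = O$; thus these sets meet-generate $\Open(\rR)$ and every meet-irreducible has this shape. Second, $\Phi$ is manifestly order-reversing and takes values in $\Open(\breve{\rR})$; it is bijective with inverse $W \mapsto \rR[\rR_\src \setminus W]$, the round-trip $\Phi^{-1}\Phi(V)$ simplifying to $\intop_\rR(V) = V$ for open $V$. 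Third, I would show $\Phi(\intop_\rR(\rR_\trg \setminus \{y_0\})) = \breve{\rR}[y_0]$: the inclusion $\supseteq$ is trivial (take $y = y_0$), and for $\subseteq$ any witness $y \notin \intop_\rR(\rR_\trg \setminus \{y_0\})$ with $\rR(x,y)$ forces $y_0 \in \rR[x]$, for otherwise $\rR[x]$ would be an open avoiding $y_0$ that contains $y$, contradicting the maximality built into the interior.

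Any order-reversing bijection maps elements with a unique upper cover to elements with a unique lower cover, so $\Phi$ restricts to a bijection $M(\Open(\rR)) \to J(\Open(\breve{\rR}))$. Combining this with the first and third steps yields $\intop_\rR(\rR_\trg \setminus \{y\}) \in M(\Open(\rR))$ iff $\breve{\rR}[y] \in J(\Open(\breve{\rR}))$, which is the claim. I expect the main obstacle to be the calculation of $\Phi$ on a meet-generator together with the clean verification that $\Phi$ is a genuine anti-isomorphism; both reduce to mechanical bookkeeping around the interior operator $\intop_\rR$, but getting the inclusions the right way round is where one must be careful.
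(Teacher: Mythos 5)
Your proof is correct. Part~(1) is handled the same way as in the paper (both reduce it to the observation that the rows $\rR[x]$ join-generate $\Open(\rR)$). For part~(2) you take a genuinely different route. The paper stays inside $\Open(\rR)$: after noting that the sets $\intop_{\rR}(\rR_\trg\setminus\{y\})$ meet-generate, it computes directly that a binary meet identity
$\intop_{\rR}(\rR_\trg\setminus\{y\}) = \intop_{\rR}(\rR_\trg\setminus\{y_1\}) \wedge \intop_{\rR}(\rR_\trg\setminus\{y_2\})$
holds iff $\breve{\rR}[y]=\breve{\rR}[y_1]\cup\breve{\rR}[y_2]$, which ties meet-reducibility of the generator to join-reducibility of the column. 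You instead construct the explicit anti-isomorphism $\Phi\colon \Open(\rR)\to\Open(\breve{\rR})$, $V\mapsto\breve{\rR}[\rR_\trg\setminus V]$, verify $\Phi(\intop_{\rR}(\rR_\trg\setminus\{y\}))=\breve{\rR}[y]$, and transport irreducibles across the duality. Your computations are all sound (the round-trip $\Psi\circ\Phi=\intop_{\rR}$, the evaluation of $\Phi$ on a meet-generator, and the standard fact that an anti-isomorphism of finite lattices exchanges meet- and join-irreducibles via unique covers). What your approach buys is conceptual clarity: it makes explicit the order-duality $\Open(\rR)\cong\Open(\breve{\rR})^\op$, which is exactly the self-duality of $\Dep$ under relational converse that the paper only alludes to in its remark on witnesses, at the cost of a couple of extra bookkeeping verifications. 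The paper's computation is shorter but leaves that duality implicit.
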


\section{Nuclear Languages and Lattice Languages}\label{sec:nuclear-lattice}

As a further technical tool, we now introduce two classes of regular languages. They are well-behaved w.r.t.\ their small nfas and will emerge at the heart of our $\NP$-completeness proofs in \autoref{sec:complexity}. Their definition rests on the notion of a \emph{nuclear morphism} in $\JSLf$, originating from the theory of symmetric monoidal closed categories \cite{RoweNuclear1988, HiggsRoweNuclear1989}. Recall that a finite semilattice is a \emph{distributive lattice} if $x\wedge (y\vee z) = (x\wedge y) \vee (x\wedge z)$ for all elements $x,y,z$.

\begin{defn}[Nuclear language]
A $\JSLf$-morphism $f\colon S\to T$ is \emph{nuclear} if it factorizes through a finite  distributive lattice, i.e.\ $f=(S\xra{g} D \xra{h}T)$ for some finite distributive lattice $D$ and $\JSLf$-morphisms $g,h$. A regular language $L\seq \Sigma^*$ is \emph{nuclear} if the transition morphisms $\delta_a=a^{-1}(\dash)\colon \SLD{L}\to \SLD{L}$ ($a\in \Sigma$) of its minimal $\JSL$-dfa are nuclear.
\end{defn}

\begin{example}[BiRFSA languages]
A regular language $L$ is \emph{biRFSA}~\cite{BiRFSA2009} if $\rev{(N_L)} \cong {N_{\rev{L}}}$, that is, the canonical residual finite state automata for $L$ and $\rev{L}$ (see \autoref{sec:prelim}) are reverse-isomorphic. In~\cite[Example 5.7]{mmu21} we proved that the biRFSA languages are precisely those whose semilattice $\SLD{L}$ is distributive. Thus biRFSA languages are nuclear.
\end{example}
There is a natural subclass of nuclear languages which need not be biRFSA:

\begin{defn}[Lattice language]\label{def:lattice_langs}
For any $S\in\JSLf$ we define the language $L(S)\seq \Sigma^*$,
\[ \Sigma := \{ \bra{j} : j \in J(S) \} \cup \{ \ket{m} : m \in M(S) \} \qquad\text{and}\qquad L(S) := \bigcap_{j \leq_S m} \overline{\Sigma^* \bra{j}\,\ket{m} \Sigma^*}. \]
Then $\Sigma$ is the disjoint union of $J(S)$ and $M(S)$ (with the notation $\bra{j}$ and $\ket{m}$ used to distinguish between elements of the two summands), and $L(S)$ consists of all words over $\Sigma$ not containing any factor $\bra{j}\,\ket{m}$ with $j\leq_S m$.
\end{defn}


\begin{lemma}\label{lem:lattice_lang_nuclear}
For any $S\in \JSLf$, the language $L(S)$ is nuclear and $S \cong \SLD{L(S)}$.
\end{lemma}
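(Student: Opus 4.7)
The plan is to unfold the definition of $L(S)$, describe its semilattice $\SLD{L(S)}$ of unions of left derivatives, construct an explicit isomorphism to $S$, and finally read off nuclearity from the resulting transition formulas. Throughout we tacitly assume $S$ is nontrivial, i.e., $|S|\geq 2$.

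First I would enumerate the left derivatives of $L(S)$. Because every forbidden factor $\bra{j}\,\ket{m}$ begins with some $\bra{\cdot}$, a short word calculation shows that whenever $w\in\Sigma^*$ avoids forbidden factors, $w^{-1}L(S)$ depends only on the last letter of $w$: it equals $L(S)$ if $w=\epsilon$ or $w$ ends in some $\ket{m}$, and it equals $\bra{j}^{-1}L(S)$ if $w$ ends in $\bra{j}$. Words containing a forbidden factor yield $\emptyset$. Hence
\[ \LD{L(S)} \;=\; \{\emptyset,\,L(S)\} \,\cup\, \{\,\bra{j}^{-1}L(S) : j\in J(S)\,\}. \]

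Next I would define $\psi\colon S\to \SLD{L(S)}$ by $\psi(s)=\bigcup_{j\in J(S),\, j\leq_S s}\bra{j}^{-1}L(S)$, so that $\psi(\bot_S)=\emptyset$. The key computation is that for any $X\seq J(S)$, the union $\bigcup_{j\in X}\bra{j}^{-1}L(S)$ consists of precisely those $v\in L(S)$ such that whenever $v=\ket{m}v'$ one has $\bigvee X\not\leq_S m$. Since $M(S)$ meet-generates $S$, elements of $S$ are distinguished by their up-sets in $M(S)$; this single observation simultaneously yields injectivity of $\psi$ and the identity $\bigcup_{j\in X}\bra{j}^{-1}L(S)=\psi(\bigvee X)$, with the notable consequence $\psi(\top_S)=L(S)$. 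Surjectivity onto $\SLD{L(S)}$ and preservation of finite joins then follow at once from the fact that $J(S)$ join-generates $S$.

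Finally, using the identification $\psi$ I would compute the transition morphisms of the state-minimal $\JSL$-dfa for $L(S)$:
\[
  \delta_{\bra{j}}(\psi(s)) = \begin{cases}\emptyset & \text{if } s=\bot_S,\\ \psi(j) & \text{otherwise},\end{cases}
  \qquad
  \delta_{\ket{m}}(\psi(s)) = \begin{cases}L(S) & \text{if } s\not\leq_S m,\\ \emptyset & \text{if } s\leq_S m.\end{cases}
\]
Each of these morphisms manifestly factors through the two-element distributive lattice $\mathbb{2}$ via the predicates ``$s\neq\bot_S$'' and ``$s\not\leq_S m$'', respectively; thus every $\delta_a$ is nuclear and therefore so is $L(S)$. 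The only non-routine step in the entire argument is the invocation of meet-generation by $M(S)$ to reduce equality of unions of derivatives to equality of joins in $S$; everything else is a direct unpacking of the definitions.
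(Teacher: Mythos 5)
Your proof is correct and takes essentially the same route as the paper's: the paper defines the $\JSL$-dfa $(S,\delta,\top_S,S\setminus\{\bot_S\})$ with $\delta_{\bra{j}}=\bot_S\ostar j$ and $\delta_{\ket{m}}=m\ostar\top_S$ abstractly on $S$, checks that it accepts $L(S)$, is $\JSL$-reachable and simple, and invokes uniqueness of the minimal $\JSL$-dfa, whereas you compute $\LD{L(S)}$ and the isomorphism $\psi$ by hand — but the transition formulas and the nuclearity argument (factorization through the distributive lattice $\mathbb{2}$, i.e.\ morphisms of the form $s\ostar t$) are identical. One cosmetic point: your characterization of $\bigcup_{j\in X}\bra{j}^{-1}L(S)$ should be restricted to nonempty $X$ (for $X=\emptyset$ the union is $\emptyset$ while the described set contains $\epsilon$), which is harmless since you define $\psi(\bot_S)=\emptyset$ directly.
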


\noindent Crucially, for nuclear and lattice languages some of the relations \eqref{eq:complexity_ineq} hold with equality:
\begin{proposition}\label{prop:lattice-complexity}\label{prop:nuclear-complexity}
\begin{enumerate}
\item If $L$ is a nuclear language then $\ns{L}=\dim{\rDR{L}}$.
\item If $L=L(S)$ is a lattice language then $\nalpha{L} = \nmu{L}=\ns{L}=\dim{\rDR{L}}$.
\end{enumerate}
\end{proposition}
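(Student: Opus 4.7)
My plan is to combine the lower bounds of \eqref{eq:complexity_ineq} with matching upper bounds constructed from the categorical machinery of Sections~\ref{sec:prelim} and~\ref{sec:dep}.

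\textbf{Part (1).} Only $\ns{L}\leq\dim{\rDR{L}}$ requires proof. Set $k=\dim{\rDR{L}}$. By \autoref{ex:sld-vs-dlr}, $\rDR{L}$ is $\Dep$-isomorphic to $\Pirr(\SLD{L})$, so \autoref{lem:dep-iso-preserves-dim}(1) supplies an injective $\JSLf$-morphism $m\colon \SLD{L}\monoto T$ with $|J(T)|=k$. The crux is to lift each transition $\delta_a$ of $\SLD{L}$ to a morphism $\tau_a\colon T\to T$ satisfying $\tau_a\circ m=m\circ\delta_a$. By nuclearity, $\delta_a$ factors through a finite distributive lattice, hence through a finite Boolean algebra $B_a$ (via $D_a\hookrightarrow 2^{J(D_a)}$), as $\delta_a=h_a\circ g_a$ with $g_a\colon\SLD{L}\to B_a$ and $h_a\colon B_a\to\SLD{L}$. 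Since $\mathbb{2}$ is injective in $\JSLf$ (a morphism to $\mathbb{2}$ corresponds to a prime filter, which extends along any monomorphism) and finite products of injectives are injective, $B_a$ is injective. Thus $g_a$ extends along $m$ to $\tilde g_a\colon T\to B_a$, and setting $\tau_a=m\circ h_a\circ \tilde g_a$ yields the required lift. Equipping $T$ with initial state $m(L)\in T$ and final predicate obtained by extending the final predicate of $\SLD{L}$ along $m$ (again via injectivity of $\mathbb{2}$) turns $T$ into a $\JSL$-dfa accepting $L$; its associated nfa of join-irreducibles has exactly $|J(T)|=k$ states, so $\ns{L}\leq k$.

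\textbf{Part (2).} By \autoref{lem:lattice_lang_nuclear}, $L(S)$ is nuclear and $\SLD{L(S)}\cong S$, so part~(1) already yields $\ns{L(S)}=\dim{\rDR{L(S)}}$. In view of \eqref{eq:complexity_ineq}, only $\nalpha{L(S)}\leq\dim{\rDR{L(S)}}$ remains. The plan is to apply \autoref{thm:na_nmu_char}(1) using the data $(T,m,\tau_a)$ from part~(1) with $|J(T)|=k$. I construct a $\JSLf$-morphism $q\colon T\to\BLD{L(S)}$ extending the inclusion $\SLD{L(S)}\hookrightarrow\BLD{L(S)}$ along $m$; since $\BLD{L(S)}$ is a finite Boolean algebra, hence injective in $\JSLf$, such an extension exists. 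The commutativity $q\circ\tau_a=\delta_a''\circ q$ holds on the sub-semilattice $m[\SLD{L(S)}]$ (both sides restrict to $\delta_a$ followed by inclusion); extending to all of $T$ reduces to choosing $q$ and the nuclear extensions $\tilde g_a$ compatibly, which can be done since $\BLD{L(S)}$ and each $B_a$ are injective and the canonical inclusion $\SLD{L(S)}\hookrightarrow\BLD{L(S)}$ intertwines the transitions. \autoref{thm:na_nmu_char}(1) then delivers an atomic $k$-state nfa for $L(S)$.

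\textbf{Main obstacle.} The principal technical hurdle is verifying that the lifted $\tau_a$ in part~(1) simultaneously preserve joins and agree with $\delta_a$ on $m[\SLD{L}]$ for all $a\in\Sigma$, while the initial and final data remain compatible. This hinges on injectivity of finite Boolean algebras in $\JSLf$ combined with the factorization through $B_a$. For part~(2), the subsidiary delicacy is ensuring that the extension $q$ intertwines the lifted $\tau_a$ with $\delta_a''$, rather than just matching on $m[\SLD{L(S)}]$; the key point is that both $\tau_a$ and $\delta_a''$ arise from compatible Boolean extensions of the same nuclear factorization of $\delta_a$, so the intertwining can be arranged by a coherent choice of all extensions.
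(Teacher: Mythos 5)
Your Part~(1) is correct and close in spirit to the paper's argument, but the lifting mechanism differs: the paper invokes \autoref{lem:nuclear_char} to write $\delta_a=\bigvee_i x_i\ostar y_i$ and lifts each summand explicitly to $e(x_i)\ostar e(y_i)$, checking commutativity directly from the fact that the embedding $e$ is an order-embedding; you instead factor $\delta_a$ through a finite Boolean algebra $B_a$ and extend along $m$ using injectivity of $B_a$ in $\JSLf$. Your route works, but it imports a fact the paper never states or proves (finite Boolean algebras are injective in $\JSLf$). That fact is true -- for $\mathbb{2}$ one extends the prime filter $F_{s_0}$ to $F_{t_0}$ for any $t_0$ with $m_*(t_0)=s_0$, which exists because $m_*$ is surjective when $m$ is injective, and products of injectives are injective -- so you should spell this out, but it is not a gap. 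The paper's explicit $\ostar$-lifting buys a concrete description of the lifted transitions, which it then exploits in Part~(2); your injectivity argument produces non-canonical lifts, which is exactly what gets you into trouble next.

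Part~(2) has a genuine gap, and it is precisely at the point you flag as the ``subsidiary delicacy.'' An extension $q\colon T\to\BLD{L}$ of the inclusion along $m$ only satisfies $q\circ\tau_a=\delta_a'\circ q$ on the subsemilattice $m[\SLD{L}]$; the assertion that ``the intertwining can be arranged by a coherent choice of all extensions'' is not an argument, and I see no way to make it one at this level of generality. Existence of such a $q$ is equivalent (essentially by \autoref{thm:na_nmu_char}(1), taking $q$ to be the behaviour map $t\mapsto L(A,t)$) to the statement that \emph{every} state of the lifted $\JSL$-dfa accepts a language invariant under the Nerode left congruence $\sim_L$, and this genuinely depends on \emph{which} lifts $\tau_a$ you chose -- your Part~(1) lifts are arbitrary injective extensions and need not have this property. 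The paper resolves this by (i) normalising $e(\top_{S_0})=\top_S$, (ii) choosing the specific lifts $\tau_{\bra{j}}=\bot_S\ostar e(j)$ and $\tau_{\ket{m}}=e(m)\ostar\top_S$ mirroring the transitions of the minimal $\JSL$-dfa of the lattice language (\autoref{lem:lattice_aut}), and then (iii) running a three-case combinatorial analysis on $\sim_L$-equivalent words $v\sim_L w$ (both starting with some $\bra{j}$; both starting with the same $\ket{m}$; or one empty and the other in $L$) to verify that $v\in L(A,s)\Leftrightarrow w\in L(A,s)$ for every state $s$. Only then does the behaviour map land in $\BLD{L}$ and serve as the required $q$. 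This case analysis, which uses the defining combinatorics of lattice languages (no factor $\bra{j}\,\ket{m}$ with $j\leq_S m$), is the real content of Part~(2) and is entirely absent from your proposal.
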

These equalities are the key fact making our reductions in the next section work.

\section{Complexity of Computing Small (Sub)Atomic Acceptors}\label{sec:complexity}
We are ready to present our main complexity results on small (sub)atomic nfas. First we consider the slightly simpler atomic case, phrased as the following decision problem:

\medskip\noindent$\DFA + \rev{\DFA} \to \NFA_{\mathbf{atm}}$\\
\textbf{Input:} Two dfas $A$ and $B$ such that $L(A)=\rev{{L(B)}}$ and a natural number $k$.\\
\textbf{Task:} Decide whether there exists a $k$-state atomic nfa equivalent to $A$, i.e.\ $\nalpha{L(A)}\leq k$.

\begin{rem}\label{rem:dfapair}
Taking mutually reverse dfas $(A,B)$ as input permits an efficient computation of the dependency relation $\rDR{L}\seq \LD{L}\times \LD{\rev{L}}$ of $L=L(A)$. One may assume $A$ and $B$ are minimal dfas, so that their state sets $Q_A$ and $Q_B$ are in bijective correspondence with $\LD{L}$ and $\LD{\rev{L}}$. For $p\in Q_A$ choose some $w_A(p) \in \Sigma^*$ sending the initial state to $p$; analogously choose $w_B(q)\in \Sigma^*$ for $q\in Q_B$. Then $\rDR{L}$ is a bijective relabeling of 
\[ \widetilde{\rDR{L}} \seq Q_A\times Q_B\qquad\text{where} \qquad \widetilde{\rDR{L}}(p,q) ~ \mathrel{\vcentcolon\Longleftrightarrow}\  \text{$A$ \,accepts\, $w_A(p) \rev{w_B(q)}$},  \]
so it is computable in polynomial time from $A$ and $B$. A completely analogous argument applies to the relations $\rI$, $\rDR{L,a}$ and $\rF$ from \autoref{ex:sld-vs-dlr}.
\end{rem}

\begin{theorem}
  \label{thm:atomic_npc}
 The problem $\DFA + \rev{\DFA} \to \NFA_{\mathbf{atm}}$ is $\NP$-complete.
\end{theorem}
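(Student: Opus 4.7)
My plan is to prove $\NP$-membership and $\NP$-hardness separately, in both cases leveraging the semilattice characterisation of $\nalpha{L}$ from \autoref{thm:na_nmu_char}(1) together with the fundamental equivalence $\JSLf\simeq\Dep$. For $\NP$-membership, the natural short certificate is the atomic nfa itself. An atomic nfa $N$ with $k$ states assigns to each state a language in $\BLD{L(A)}$, equivalently a union of $\sim_{L(A)}$-classes; by \autoref{ex:bld-vs-nerode} the quotient $\Sigma^*/{\sim_{L(A)}}$ is in bijection with $\LD{\rev{L(A)}}$, whose cardinality is bounded by the state count of $B$. Thus $N$ admits a polynomial-size description (a subset of $\LD{\rev{L(A)}}$ for each of its $k$ states, plus the transition relation and the initial/final flags), and verification reduces to checking that each state's advertised language behaves correctly under left derivatives (computable class-wise via left-invariance of $\sim_{L(A)}$) and that the initial states together accept $L(A)$. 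All checks run in polynomial time.

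For $\NP$-hardness I reduce from $\BICCOV$: given a relation $\rR\seq X\times Y$ and $k\in\Nat$, decide whether $\dim{\rR}\leq k$. From $(\rR,k)$ I form $S\defeq\Open(\rR)$; by \autoref{lem:jirr-mirr-open} the sets $J(S)$ and $M(S)$ can be extracted in polynomial time directly from $\rR$ without enumerating $\Open(\rR)$. I then take the lattice language $L\defeq L(S)$ of \autoref{def:lattice_langs}; since $L$ is specified by a polynomial list of forbidden two-letter factors, the minimal dfas $A$ for $L$ and $B$ for $\rev{L}$ have polynomial size and are polynomial-time constructible. It remains to show $\nalpha{L}=\dim{\rR}$, which I obtain by a chain of identifications: \autoref{prop:lattice-complexity}(2) yields $\nalpha{L(S)}=\dim{\rDR{L(S)}}$; \autoref{ex:sld-vs-dlr} yields $\rDR{L(S)}\cong\Pirr(\SLD{L(S)})$ in $\Dep$; \autoref{lem:lattice_lang_nuclear} yields $\SLD{L(S)}\cong S=\Open(\rR)$; and the equivalence $\Pirr\circ\Open\simeq\Id_\Dep$ of \autoref{thm:jsl_vs_dep} yields $\Pirr(\Open(\rR))\cong\rR$ in $\Dep$. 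Finally, \autoref{lem:dep-iso-preserves-dim}(2) ensures that $\Dep$-isomorphic relations have the same bipartite dimension, so $\nalpha{L}=\dim{\rR}$.

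The main conceptual work sits in the hardness reduction and rests decisively on \autoref{prop:lattice-complexity}(2): without the equality $\nalpha{L(S)}=\dim{\rDR{L(S)}}$ for lattice languages, the lower bound $\nalpha{L}\geq\dim{\rR}$ (which already follows from nuclearity via \eqref{eq:complexity_ineq}) would not match the required upper bound. The remaining steps are a careful but routine chase of $\Dep$-isomorphisms, together with the verification that the combinatorial data of the reduction is extractable from $\rR$ in polynomial time; the advantage of working inside $\Dep$ rather than $\JSLf$ is precisely that everything stays polynomially bounded in the input size.
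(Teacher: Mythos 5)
Your $\NP$-hardness argument coincides with the paper's: the same reduction $(\rR,k)\mapsto(\dfa{L},\dfa{\rev{L}},k)$ with $L=L(\Open(\rR))$, the same chain of $\Dep$-isomorphisms $\rR\cong\Pirr(\Open(\rR))\cong\Pirr(\SLD{L})\cong\rDR{L}$, and the same appeal to \autoref{lem:dep-iso-preserves-dim}(2) and \autoref{prop:lattice-complexity}(2). Your $\NP$-membership argument, however, takes a genuinely different route. The paper's certificate is not the nfa but a $\Dep$-object $\rS$ with $|\rS_\src|\leq k$ and $|\rS_\trg|\leq|B|$ together with $\Dep$-morphisms $\rP,\rQ,\rT_a$ making the relational translation of the diagram of \autoref{thm:na_nmu_char}(1) commute; the quantitative heart of that proof is the bound $|M(S)|\leq|J(\BLD{L})|=|\LD{\rev{L}}|\leq|B|$, obtained by dualizing the (w.l.o.g.\ injective) morphism $q$. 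You instead certify with the atomic nfa itself, annotated with each state's language $K_q$ as a subset of $\Sigma^*/{\sim_L}\cong\LD{\rev{L}}$. For this to be a complete verification you should spell out the equation system being checked, namely $a^{-1}K_q=\bigcup_{q'\in\delta_a[q]}K_{q'}$ for all $a,q$, together with $\epsilon\in K_q\Leftrightarrow q\in F$ and $\bigcup_{i\in I}K_i=L$, and observe that its unique solution is $K_q=L(N,q)$; each condition is indeed checkable class-wise in polynomial time since $a^{-1}$ acts on $\sim_L$-classes as the preimage of the $a$-transition of the minimal dfa for $\rev{L}$. Done this way your argument is correct and somewhat more elementary --- close in spirit to, but distinct from, the alternative the paper sketches in the remark following \autoref{prop:atomic_np}. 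What the paper's categorical route buys is certificates in the form of solutions to relational equations (exploitable by SAT solvers, and directly yielding the small nfa via maximal witnesses), plus the fact that the identical diagram-translation is reused verbatim for the subatomic case in \autoref{thm:subatomic_npc}. One minor omission on your side: the paper also checks in polynomial time that the input actually satisfies $L(A)=\rev{L(B)}$ (via emptiness of $\ol{L(A)}\cap\rev{L(B)}$ and $\ol{L(B)}\cap\rev{L(A)}$); you should include this step or explicitly treat the problem as a promise problem.
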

We establish the upper and lower bound separately in the next two propositions. Both their proofs are  based on the fundamental equivalence between $\JSLf$ and $\Dep$.
 
\begin{proposition}\label{prop:atomic_np}
The  problem $\DFA + \rev{\DFA} \to \NFA_{\mathbf{atm}}$ is in $\NP$.
\end{proposition}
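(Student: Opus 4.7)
My plan is to apply \autoref{thm:na_nmu_char}(1) and translate everything through the fundamental equivalence $\JSLf\simeq\Dep$ (\autoref{thm:jsl_vs_dep}), so that the existence of a suitable semilattice $S$ becomes the existence of a $\Dep$-certificate consisting of relations of polynomial size.

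By \autoref{thm:na_nmu_char}(1), $\nalpha{L(A)}\leq k$ iff there exist a finite semilattice $S$ with $|J(S)|\leq k$ and $\JSLf$-morphisms $p,q,(\tau_a)_{a\in\Sigma}$ fitting into the left-hand atomic diagram. Applying $\Pirr$ and using \autoref{tab1}, this becomes: does there exist a $\Dep$-object $\rR$ with $|\rR_\src|\leq k$ (which, by \autoref{lem:jirr-mirr-open}(1), enforces $|J(\Open(\rR))|\leq k$), together with $\Dep$-morphisms $\rP\colon\rDR{L}\to\rR$, $\rQ\colon\rR\to\id_{\Sigma^*/{\sim_L}}$ and $(\rT_a\colon\rR\to\rR)_{a\in\Sigma}$ making the translated diagram commute? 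The certificate I would guess consists of this data: the relation $\rR\seq\rR_\src\times\rR_\trg$, each $\Dep$-morphism as a relation, and for each morphism a pair of lower and upper witnesses as in \autoref{def:dep}. Verification computes the reference relations of \autoref{tab1} in polynomial time from $A$ and $B$ (cf.\ \autoref{rem:dfapair}), checks the factorization condition of \autoref{def:dep} for each guessed morphism, and checks a short list of equalities of relational composites for the commuting diagram.

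The main obstacle is to show that the certificate can be chosen of polynomial size. The bound $|\rR_\src|\leq k$ is achieved by taking $\rR_\src=J(S)$ via \autoref{rem:join_meet_generators}. For $\rR_\trg$, I plan to exploit that the proof of \autoref{thm:na_nmu_char}(1) allows taking $S=\langs{N}$ for a $k$-state atomic nfa $N$, so that $q\colon S\hookrightarrow\BLD{L}$ is injective. Injectivity yields $q_*\circ q=\id_S$, from which $q_*[M(\BLD{L})]$ becomes a meet-generating set for $S$. Since $\BLD{L}$ is a finite boolean algebra whose atoms are the $\sim_L$-classes of $L$, the number $|M(\BLD{L})|$ of co-atoms is bounded by the number of states of $B$. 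Invoking \autoref{rem:join_meet_generators} with $J(S)$ as join-generators and $q_*[M(\BLD{L})]$ as meet-generators then produces $\rR\cong\Pirr(S)$ with both $|\rR_\src|$ and $|\rR_\trg|$ polynomial, and the morphisms and witnesses follow suit. Soundness is then automatic: applying $\Open$ to any successful certificate returns a finite semilattice $S=\Open(\rR)$ with $|J(S)|\leq k$ and $\JSLf$-morphisms realising the atomic diagram, so \autoref{thm:na_nmu_char}(1) yields $\nalpha{L}\leq k$.
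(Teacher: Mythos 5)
Your proposal is correct and follows essentially the same route as the paper: apply \autoref{thm:na_nmu_char}(1), translate the resulting $\JSLf$-diagram into $\Dep$ via the fundamental equivalence, and use the ensuing relations (with their witnesses) as polynomial-size certificates verified against the reference relations of \autoref{tab1}. The only cosmetic difference is how the codomain bound is obtained: the paper dualizes the injective $q$ to a surjection $\BLD{L}^\op\epito S^\op$ and concludes $|M(S)|\leq |J(\BLD{L})|\leq |B|$ directly, whereas you replace $M(S)$ by the meet-generating set $q_*[M(\BLD{L})]$ via \autoref{rem:join_meet_generators} --- both rest on the same injectivity of $q$ and give the same bound.
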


\begin{proof}
\begin{enumerate}
\item One can check in polynomial time whether a given pair $(A,B)$ of dfas forms a valid input, i.e.\ satisfies $L(A)=\rev{L(B)}$. In fact, this condition is equivalent to $\ol{L(A)}\cap \rev{L(B)}=\ol{L(B)}\cap \rev{L(A)}=\emptyset$. Using the standard methods for complementing dfas and reversing and intersecting nfas, one can construct nfas for  $\ol{L(A)}\cap \rev{L(B)}$ and $\ol{L(B)}\cap \rev{L(A)}$ of size polynomial in $|A|$ and $|B|$, the number of states of $A$ and $B$, and check for emptyness by verifying that no final state is reachable from the initial states.
\smallskip
\item Let $A$ and $B$ be dfas accepting the languages $L$ and $\rev{L}$, respectively, and let $k$ be  a natural number. We claim the following three statements to be equivalent:
\begin{enumerate}
\item There exists an atomic nfa accepting $L$ with at most $k$ states.
\item There exists a finite semilattice $S$ with $|J(S)|\leq k$ and $\JSLf$-morphisms $p,q$ and $\tau_a$ ($a\in \Sigma$) making the left diagram below commute.
\item There exists a $\Dep$-object $\rS\seq \rS_\src\times \rS_\trg$ with $|\rS_\src|\leq k$ and $|\rS_\trg|\leq |B|$ and $\Dep$-morphisms $\rP$, $\rQ$ and $\rT_a$ ($a\in \Sigma$) making the right diagram below commute (cf. \autoref{ex:sld-vs-dlr}/\ref{ex:bld-vs-nerode}).
\end{enumerate}
\begin{equation}\label{eq:equiv-diagrams}
\vcenter{
\xymatrix@C-0.5em{
& \BLD{L} \ar[r]^{\delta_a'} & \BLD{L} \ar[dr]^{f'} & \\
\mathbb{2} \ar[ur]^{i'} \ar[dr]_{i} & S \ar@{-->}[u]^q \ar@{-->}[r]^{\tau_a} & S \ar@{-->}[u]_q & \mathbb{2} \\
& \SLD{L} \ar@{-->}[u]^p \ar@{-->}[r]_{\delta_a} & \SLD{L} \ar@{-->}[u]_p \ar[ur]_{f} & 
}
}
\qquad
\vcenter{
\xymatrix@C-0.5em{
& \id_{\Sigma^*/\sim_L} \ar[r]^{\rD_a'} & \id_{\Sigma^*/{\sim_{L}}} \ar[dr]^{\rF'} & \\
\id_1 \ar[ur]^{\rI'} \ar[dr]_{\rI} & \rS \ar@{-->}[u]^{\rQ} \ar@{-->}[r]^{\rT_a} & \rS \ar@{-->}[u]_{\rQ} & \id_1 \\
& \rDR{L} \ar@{-->}[u]^\rP \ar[r]_{\rDR{L,a}} & \rDR{L} \ar@{-->}[u]_\rP \ar[ur]_{\rF} & 
}
}
\end{equation}
In fact, (a)$\Lra$(b) was shown in \autoref{thm:na_nmu_char}(1), and (b)$\Lra$(c)  follows from the equivalence between $\JSLf$ and $\Dep$. To see this, note that in the left diagram we may assume $q$ to be injective; otherwise, factorize $q$ as $q=q'\circ e'$ with $e$ surjective and $q'$ injective and work with $q'$ instead of $q$. By the self-duality of $\JSLf$, dualizing $q$ yields a surjective morphism from $\BLD{L} \cong \BLD{L}^\op$ to $S^\op$. Thus,
\[ |M(S)|=|J(S^\op)| \leq |J(\BLD{L})| = |\Sigma^*/{\sim_L}| = |\LD{\rev{L}}| \leq |B|.\]
In the two last steps, we use that the congruence classes of $\sim_L$ correspond bijectively to left derivatives of $\rev{L}$ by \eqref{eq:nerodecong}, and that $\LD{\rev{L}}$ is the set of states of the minimal dfa for $\rev{L}$.

By \autoref{ex:sld-vs-dlr} and \ref{ex:bld-vs-nerode} the upper and lower path of the left diagram in $\JSLf$ correspond under the equivalence functor $\Pirr$ to the upper and lower path of the right diagram in $\Dep$. Therefore, \autoref{thm:jsl_vs_dep} shows the two diagrams to be equivalent.
\smallskip
\item From (a)$\Lra$(c) we deduce that the relations $\rS$, $\rP$, $\rQ$ and $\rT_a$ ($a\in \Sigma$) constitute a short certificate for the existence of an atomic nfa for $L$ with at most $k$ states. Commutativity of the right diagram can be checked in polynomial time because all the relations appearing in the upper and lower path can be efficiently computed from the given dfas $A$ and $B$. Indeed, for the lower path we have already noted this in \autoref{rem:dfapair}, and the upper path emerges from the minimal dfa for $\rev{L}$, using that $\Sigma^*/{\sim_L}\cong \LD{\rev{L}}$. \qedhere
\end{enumerate}
\end{proof}

\begin{rem}
An alternative proof that $\DFA+\rev{\DFA}\to \NFA_{\mathbf{atm}}$ is in $\NP$ uses the following characterization of atomic nfas. Given an nfa $N$, let $\rsc{\rev{N}}$ denote the dfa obtained by determinizing the reverse nfa $\rev{N}$ via the subset construction and restricting to its reachable part. Then $N$ is atomic iff $\rsc{\rev{N}}$ is a minimal dfa~\cite[Corollary 2]{TheoryOfAtomataBrzTamm2014}. Thus, given a pair $(A,B)$ of mutually reversed dfas, to decide whether $\nalpha{L(A)}\leq k$ one may guess a $k$-state nfa $N$ and verify that $\rsc{\rev{N}}$ is a minimal dfa equivalent to $B$. One advantage of our above categorical argument is that it yields simple certificates in the form of $\Dep$-morphisms subject to certain commutative diagrams, which amount to solutions of equations in $\Rel$. The latter may be directly computed using a SAT solver, leading to a practical approach to finding small atomic acceptors (cf.~\cite{NfaSat}). To this effect, let us note that the  proof of \autoref{prop:atomic_np} actually shows how to \emph{construct} small atomic nfas rather than just deciding their existence: every certificate $\rS, \rP, \rQ, \rT_a$ ($a\in \Sigma$) yields an atomic nfa with states $\rS_\src$, transitions given by $(\rT_a)_-\seq \rS_\src \times \rS_\src$ for $a\in \Sigma$, initial states  $(\rI\fatsemi \rP)_-[\ast]\seq \rS_\src$ and final states $(Q\fatsemi \rF')_-\spbreve[\ast]\seq \rS_\src$. (Recall that $\fatsemi$ denotes composition in $\Dep$ and $(\dash)_-$ denotes the maximum lower witness of a $\Dep$-morphism, see~\autoref{rem:dep-witnesses}.) In fact, this is precisely the nfa of join-irreducibles of the $\JSL$-dfa $(S,\tau,p\circ i, f'\circ q)$ induced by the left diagram in \eqref{eq:equiv-diagrams}. Analogous reasoning also applies to the computation of small subatomic nfas treated in \autoref{thm:subatomic_npc} below.
\end{rem}

\begin{proposition}\label{prop:atomic_nphard}
The  problem $\DFA + \rev{\DFA} \to \NFA_{\mathbf{atm}}$ is $\NP$-hard.
\end{proposition}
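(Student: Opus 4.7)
The plan is to reduce from $\BICCOV$, a classical $\NP$-complete problem: given a relation $\rR$ between finite sets and a natural number $k$, decide whether $\dim{\rR}\leq k$. The bridge to the present problem is \autoref{prop:lattice-complexity}(2), which gives $\nalpha{L(S)}=\dim{\rDR{L(S)}}$ for every lattice language $L(S)$. It therefore suffices to construct, in polynomial time, a lattice language $L(S)$ from $\rR$ with $\dim{\rDR{L(S)}}=\dim{\rR}$, and exhibit a pair of dfas for $L(S)$ and $\rev{L(S)}$.

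Given an instance $(\rR,k)$, I would set $S:=\Open(\rR)$. Although $|S|$ may be exponential in $|\rR|$, \autoref{lem:jirr-mirr-open} describes $J(S)$ as a subset of $\{\rR[x]:x\in \rR_\src\}$ and $M(S)$ as a subset of $\{\intop_\rR(\rR_\trg\setminus\{y\}):y\in \rR_\trg\}$. Both sets, together with the inclusion order restricted to them, are computable in polynomial time from $\rR$: for each $x\in \rR_\src$ one checks whether $\rR[x]$ is a union of strictly smaller sets of this form, and dually for $M(S)$. The alphabet $\Sigma$ of $L(S)$ and the forbidden two-letter factors $\bra{j}\ket{m}$ (for $j\leq_S m$) are then explicit. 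Since $L(S)$ is a local language, I can build minimal dfas $A$ for $L(S)$ and $B$ for $\rev{L(S)}$ of size $O(|\Sigma|)$ (one state per ``last letter read'' plus initial and sink states). The reduction outputs $(A,B,k)$, a valid instance because $L(A)=\rev{L(B)}$ by construction.

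For correctness I would chain three $\Dep$-isomorphisms: (i) $\Pirr(\SLD{L(S)})\cong \Pirr(S)$, since $\SLD{L(S)}\cong S$ in $\JSLf$ by \autoref{lem:lattice_lang_nuclear}; (ii) $\Pirr(S)=\Pirr(\Open(\rR))\cong \rR$ by the fundamental equivalence \autoref{thm:jsl_vs_dep}; and (iii) $\Pirr(\SLD{L(S)})\cong \rDR{L(S)}$ by the bijective relabeling noted in \autoref{ex:sld-vs-dlr} together with \autoref{rem:dep_bip_isos}. Chained, these yield $\rDR{L(S)}\cong \rR$ in $\Dep$, so \autoref{lem:dep-iso-preserves-dim}(2) gives $\dim{\rDR{L(S)}}=\dim{\rR}$, and finally \autoref{prop:lattice-complexity}(2) delivers $\nalpha{L(S)}=\dim{\rR}$. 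Hence $(\rR,k)\in\BICCOV$ iff $(A,B,k)$ is a yes-instance of $\DFA+\rev{\DFA}\to\NFA_{\mathbf{atm}}$.

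The main obstacle, and the payoff of the categorical setup, is size control: the semilattice $S=\Open(\rR)$ can be exponential in $|\rR|$, yet \autoref{lem:jirr-mirr-open} provides polynomial descriptions of $J(S)$ and $M(S)$ directly in terms of $\rR$, and since lattice languages are completely determined by their join- and meet-generators together with the order restricted to them, the whole reduction remains polynomial.
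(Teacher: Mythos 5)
Your proposal is correct and follows essentially the same route as the paper: the reduction $(\rR,k)\mapsto(\dfa{L(S)},\dfa{\rev{L(S)}},k)$ with $S=\Open(\rR)$, correctness via the same chain of $\Dep$-isomorphisms $\rR\cong\Pirr(\Open(\rR))\cong\Pirr(\SLD{L(S)})\cong\rDR{L(S)}$ combined with \autoref{lem:dep-iso-preserves-dim}(2) and \autoref{prop:lattice-complexity}(2), and polynomial-time computability of the dfas via \autoref{lem:jirr-mirr-open}. The only cosmetic difference is that you build the dfas by the standard local-language construction (one state per last letter) whereas the paper exhibits the minimal dfas explicitly, but since the problem's input does not require minimality this changes nothing.
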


\begin{proof}
We devise a polynomial-time reduction from the $\NP$-complete problem \textbf{BICLIQUE COVER}~\cite{garey1979computers}: given a pair $(\rR,k)$ of a relation $\rR \seq \rR_\src\times \rR_\trg$ between finite sets and a natural number $k$, decide whether $\rR$ has a biclique cover of size at most $k$, i.e.\ $\dim{\rR}\leq k$.

  For any $(\rR$,$k$), let $S=\Open(\rR)$ be the finite semilattice of open sets corresponding to the $\Dep$-object $\rR$, cf.\ \autoref{thm:jsl_vs_dep}, and let $L=L(S)$ be its lattice language. We claim that the desired reduction is given by \[(\rR,k)\quad\longmapsto\quad (\dfa{L},\dfa{\rev{L}},k),\]
where $\dfa{L}$ and $\dfa{\rev{L}}$ are the  minimal dfas for $L$ and $\rev{L}$. Thus,
we need to prove that (a) $\dim{\rR}=\nalpha{L}$, and (b) the two dfas can be computed in polynomial time from $\R$.

\medskip\noindent 
Ad (a). We have the following sequence of $\Dep$-isomorphisms:
\[ \rR \xleftrightarrow[\text{Thm \ref{thm:jsl_vs_dep}}]{\cong} \Pirr(\Open(\rR)) = \Pirr(S) \xleftrightarrow[\text{Lem \ref{lem:lattice_lang_nuclear}}]{\cong} \Pirr(\SLD{L(S)}) = \Pirr(\SLD{L}) \xleftrightarrow[\text{Ex \ref{ex:sld-vs-dlr}}]{\cong} \rDR{L}.  \]
\autoref{lem:dep-iso-preserves-dim}(2) and \autoref{prop:lattice-complexity} then imply $\dim{\rR}=\dim{\rDR{L}}=\nalpha{L}$.

\medskip\noindent Ad (b). Let $J(\Open(\rR)) = \{ j_1, \dots, j_n \}$ and $M(\Open(\rR)) = \{ m_1, \dots, m_p \}$. Then $\dfa{L}$ and $\dfa{\rev{L}}$ are the automata depicted below, where $L$ and $\rev{L}$ are their respective initial states.
  \[
    \begin{tabular}{lll}
      \tiny
      \xymatrix@=10pt{
        && *++[F=]{L} \ar@(ul,ur)^{\ket{m} \, : \, m \in M(S)} \ar[dll]|-{\bra{j_1}} \ar[drr]|-{\bra{j_n}} &&
        \\ *++[F=]{\bra{j_1}^{-1} L} \ar@(ul,ur)^<<<{\bra{j_1}} \ar[drr]_-{\ket{m} : j_1 \subseteq m} \ar@<6pt>[urr]^-{\ket{m} : j_1 \nsubseteq m} \ar@/^7pt/[rrrr]|{\bra{j_n}} && \dots && *++[F=]{\bra{j_n}^{-1} L} \ar[dll]^-{\ket{m} : j_n \subseteq m} \ar@<-6pt>[ull]_-{\ket{m} : j_n \nsubseteq m} \ar@(ul,ur)^>>>{\bra{j_n}} \ar@/^10pt/[llll]|{\bra{j_1}}
        \\ && *+[F-]{\emptyset} \ar@(dr,dl)^{\Sigma}
      }
      &&
      \tiny
      \xymatrix@=10pt{
        && *++[F=]{\rev{L}} \ar@(ul,ur)^{\bra{j} \, : \, j \in J(S)} \ar[dll]|-{\ket{m_1}} \ar[drr]|-{\ket{m_p}} &&
        \\ *++[F=]{\ket{m_1}^{-1} \rev{L}} \ar@(ul,ur)^<<<{\ket{m_1}} \ar[drr]_-{\bra{j} : j \subseteq m_1} \ar@<6pt>[urr]^-{\bra{j} : j \nsubseteq m_1} \ar@/^7pt/[rrrr]|{\ket{m_p}} && \dots && *++[F=]{\ket{m_p}^{-1} \rev{L}} \ar[dll]^-{\bra{j} : j \subseteq m_p} \ar@<-6pt>[ull]_-{\bra{j} : j \nsubseteq m_p} \ar@(ul,ur)^>>>{\ket{m_p}} \ar@/^10pt/[llll]|{\ket{m_1}}
        \\ && *+[F-]{\emptyset} \ar@(dr,dl)^{\Sigma}
      }
    \end{tabular}
  \]
Both automata can be computed in polynomial time from $\rR$ using \autoref{lem:jirr-mirr-open}.
\end{proof}
Next, we turn to the computation of small subatomic nfas. While in the atomic case the input language was specified by a pair of dfas, we now assume an algebraic representation:

\begin{defn}
A \emph{monoid recognizer} is a triple $(M,h,F)$ of a finite monoid $M$, a map $h\colon \Sigma\to M$ and a subset $F\seq M$. The language \emph{recognized} by $(M,h,F)$ is given by $L(M,h,f)=\ol{h}^{-1}[F]$, where $\ol{h}\colon \Sigma^*\to M$ is the unique extension of $h$ to a monoid morphism.
\end{defn}
It is well-known~\cite{pin20} that a language $L$ is regular iff it has a monoid recognizer. In this case, a \emph{minimal} monoid recognizer for $L$ is given by $(\Syn{L}, \mu_L, F_L)$ where $\mu_L\colon \Sigma\to \Syn{L}$ is the domain restriction of the syntactic morphism and $F_L=\{ [w]_{\equiv_L}:w\in L \}$. It satisfies $\under{\Syn{L}}\leq \under{M}$ for every recognizer $(M,h,F)$ of $L$. Consider the following decision problem:

\medskip\noindent$\MONTOSUBATOMICNFA$\\
\textbf{Input:} A monoid recognizer $(M,h,F)$ and a natural number $k$.\\
\textbf{Task:} Decide whether there exists a $k$-state subatomic nfa accepting $L(M,h,F)$.

\medskip\noindent Here we assume that the monoid $M$ is explicitly given by its multiplication table.

\begin{theorem}\label{thm:subatomic_npc}
The problem $\MONTOSUBATOMICNFA$ is $\NP$-complete.
\end{theorem}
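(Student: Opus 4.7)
The plan is to adapt the proof of \autoref{thm:atomic_npc} to the subatomic setting, invoking part~(2) of \autoref{thm:na_nmu_char} in place of part~(1), and the translation of $\BLRD{L}$ into the $\Dep$-object $\id_{\Syn{L}}$ from \autoref{ex:blrd-vs-syn} in place of the $\BLD{L}$ translation from \autoref{ex:bld-vs-nerode}. The $\NP$ upper bound and the $\NP$-hardness will then follow the template of \autoref{prop:atomic_np} and \autoref{prop:atomic_nphard}.

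For the upper bound, I would first observe that from the input $(M,h,F)$ one can compute in polynomial time the minimal DFAs $\dfa{L}$ and $\dfa{\rev{L}}$ for $L := L(M,h,F)$ and $\rev{L}$ (their states are the distinct sets $\{n \in M : mn \in F\}$ and $\{n \in M : nm \in F\}$ for $m \in M$), together with the syntactic monoid $\Syn{L}$ realised as an explicit quotient of $M$ by the syntactic equivalence read off the multiplication table. Consequently, every relation in the diagram
\[
  \xymatrix@C-0.5em{
    & \id_{\Syn{L}} \ar[r]^{\rD_a''} & \id_{\Syn{L}} \ar[dr]^{\rF''} & \\
    \id_1 \ar[ur]^{\rI''} \ar[dr]_{\rI} & \rS \ar@{-->}[u]^{\rQ} \ar@{-->}[r]^{\rT_a} & \rS \ar@{-->}[u]_{\rQ} & \id_1 \\
    & \rDR{L} \ar@{-->}[u]^\rP \ar[r]_{\rDR{L,a}} & \rDR{L} \ar@{-->}[u]_\rP \ar[ur]_{\rF} &
  }
\]
is polynomial-time computable. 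By \autoref{thm:na_nmu_char}(2) combined with the equivalence $\JSLf \simeq \Dep$ (\autoref{thm:jsl_vs_dep}), a subatomic nfa for $L$ with at most $k$ states exists iff there is a $\Dep$-object $\rS$ with $|\rS_\src| \leq k$ together with $\Dep$-morphisms $\rP$, $\rQ$, $(\rT_a)_{a \in \Sigma}$ making this diagram commute. The additional bound $|\rS_\trg| \leq |M|$ is obtained by assuming the morphism $q\colon S \to \BLRD{L}$ from \autoref{thm:na_nmu_char}(2) to be injective (factor it through its image) and applying self-duality of $\JSLf$ to get $|M(S)| = |J(S^{\op})| \leq |J(\BLRD{L})| = |\Syn{L}| \leq |M|$. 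The resulting tuple $(\rS, \rP, \rQ, (\rT_a)_{a \in \Sigma})$ thus forms a polynomial-size certificate verifiable in polynomial time.

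For $\NP$-hardness I would again reduce from \textbf{BICLIQUE COVER} as in \autoref{prop:atomic_nphard}, but output a monoid recognizer instead of a pair of DFAs. Given $(\rR, k)$, set $S := \Open(\rR)$ and $L := L(S)$. The chain of $\Dep$-isomorphisms $\rR \cong \Pirr(\SLD{L}) \cong \rDR{L}$ from the proof of \autoref{prop:atomic_nphard}, combined with \autoref{lem:dep-iso-preserves-dim}(2), gives $\dim{\rR} = \dim{\rDR{L}}$, and \autoref{prop:nuclear-complexity}(2) then yields $\nmu{L} = \dim{\rR}$. Since $L$ is a forbidden length-$2$ factor language over $\Sigma = \{\bra{j} : j \in J(S)\} \cup \{\ket{m} : m \in M(S)\}$, a natural polynomial-size monoid recognizer is
\[
  M_L := \{\epsilon, 0\} \cup (\Sigma \times \Sigma),
  \qquad
  (a,b)\cdot(c,d) := \begin{cases} 0 & \text{if } bc = \bra{j}\,\ket{m} \text{ with } j \leq_S m, \\ (a,d) & \text{otherwise}, \end{cases}
\]
together with $h_L(a) := (a,a)$ and $F_L := M_L \setminus \{0\}$; one verifies that $L(M_L, h_L, F_L) = L(S)$, and $|M_L| = |\Sigma|^2 + 2$ is polynomial in $|\rR|$ by the description of $J(\Open(\rR))$ and $M(\Open(\rR))$ in \autoref{lem:jirr-mirr-open}. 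Hence the map $(\rR, k) \mapsto ((M_L, h_L, F_L), k)$ is a polynomial-time reduction, and $\dim{\rR} \leq k$ iff $\nmu{L(M_L, h_L, F_L)} \leq k$.

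The main obstacle is the polynomial bound $|\rS_\trg| \leq |M|$ on the certificate: it hinges on the injectivity reduction for $q$, self-duality of $\JSLf$, and the identification of the atoms of $\BLRD{L}$ with the elements of $\Syn{L}$. Everything else—the polynomial-time computation of $\dfa{L}$, $\dfa{\rev{L}}$, and $\Syn{L}$ from the multiplication table of $M$, as well as the verification that $M_L$ recognises $L(S)$—is routine given the machinery of \autoref{sec:prelim}--\autoref{sec:nuclear-lattice}.
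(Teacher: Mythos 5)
Your proposal is correct, and the $\NP$ upper bound follows essentially the paper's own argument: the same three-way equivalence via \autoref{thm:na_nmu_char}(2) and the $\JSLf\simeq\Dep$ equivalence, the same bound $|M(S)|\leq|J(\BLRD{L})|=|\Syn{L}|\leq|M|$ via image-factorizing $q$ and self-duality, and the same observation that $M$ (and its opposite) can be read as dfas for $L$ and $\rev{L}$ so that all relations in the diagram are polynomial-time computable. Where you genuinely diverge is the hardness reduction. The paper maps $(\rR,k)$ to the \emph{minimal} recognizer $((\Syn{L},\mu_L,F_L),k)$ for the lattice language $L=L(\Open(\rR))$, which forces it to prove that $\Syn{L}$ itself is polynomial-time computable; this is done by showing (via \autoref{rem:up_mor_comp_closed}) that the transition monoid of the minimal $\JSL$-dfa of a lattice language consists only of maps of the forms $m\ostar\top_S$, $\bot_S\ostar j$, $m\ostar j$, $\bot_S\ostar\top_S$, $\top_S\ostar\bot_S$ and $\id_S$, hence has $O(|\LD{L}|\cdot|\LD{\rev{L}}|)$ elements. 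You instead exploit that $L(S)$ is a forbidden-length-two-factor language and hand it a bespoke recognizer $M_L=\{\epsilon,0\}\cup(\Sigma\times\Sigma)$ tracking first and last letters; this is associative, recognizes $L(S)$ with $F_L=M_L\setminus\{0\}$, and has size $|\Sigma|^2+2$, so the reduction is immediate. Since $\MONTOSUBATOMICNFA$ accepts an arbitrary monoid recognizer as input, your reduction is perfectly valid and spares you the transition-monoid analysis entirely. The only thing the paper's choice buys in exchange is a marginally stronger statement -- $\NP$-hardness persists even when the input is promised to be the minimal recognizer $(\Syn{L},\mu_L,F_L)$ -- which your reduction does not deliver but which the theorem as stated does not require. (One small point of hygiene on your upper bound: the states $\{n\in M: mn\in F\}$ should range over $m$ in the image of $\ol{h}$ and the resulting dfa may still need a final minimization pass to be the minimal dfa required by \autoref{rem:dfapair}; this is polynomial and matches the level of detail in the paper's own proof.)
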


\begin{proof}[Proof sketch]
The proof is conceptually similar to the one of \autoref{thm:atomic_npc}. To show the problem to be in $\NP$, one uses the algebraic characterization of $\nmu{L}$ in \autoref{thm:na_nmu_char}(2) and translates the ensuing $\JSLf$-diagram into $\Dep$. To show $\NP$-hardness, one reduces from $\mathbf{BICLIQUE~COVER}$ via \[(\rR,k)\qquad\mapsto\qquad ((\Syn{L},\mu_L,F_L),k),\] where again $L=L(\Open(\rR))$.
\end{proof}

 Our complexity results indicate a trade-off, i.e.\ computing small subatomic nfas requires a less succinct representation of the input language. Generally, $|\dfa{L}|, |\dfa{L^r}| \leq |\Syn{L}|$ and the syntactic monoid can be far larger -- even for nuclear languages.

\begin{example}
For any natural number $n$ consider the dfa $A_n = (\{ 0, \dots, n-1 \}, \delta, 1, \{ 1 \})$ over the alphabet $\Sigma = \{ \pi, \tau \}$ with $\delta_\pi(i)=i+1 \mod n$ for $i=0,\cdots n-1$, and $\delta_\tau(0)=1$, $\delta_\tau(1)=0$, $\delta_\tau(i)=i$ otherwise. Let $L_n=L(A_n)$ denote its accepted language. Then:
  \begin{enumerate}
    \item Both $A_n$ and its reverse nfa are minimal dfas; in particular, $|\dfa{L_n}|=|\dfa{\rev{L}_n}|=n$.
    \item We have $|\Syn{L_n}|=n!$. To see this, recall that $\Syn{L_n}$ is the transition monoid of $A_n\cong \dfa{L_n}$. It is generated by the $n$-cycle $\delta_\pi = (0\;1\;\cdots\; n-1)$ and the transposition $\delta_\tau=(0\; 1)$; then it equals the symmetric group $S_n$ on $n$ letters.
    \item By part (1) the language ${L_n}$ is \emph{bideterministic}~\cite{TAMM2004135}, i.e.\ accepted by a dfa whose reverse nfa is deterministic. This implies that the left derivatives of $L_n$ are pairwise disjoint, so $\SLD{L_n}$ is a boolean algebra. In particular, $L_n$ is a nuclear language.
  \end{enumerate}
\end{example}
We finally further justify the inputs of $\DFA+\rev{\DFA}\to \NFA_{\mathbf{atm}}$ and $\MONTOSUBATOMICNFA$: the two modified problems $\DFA\to \NFA_{\mathbf{atm}}$ and $\DFA\to \NFA_{\mathbf{syn}}$ where only a (single) dfa is given are computationally much harder.

\begin{theorem}\label{thm:atomic-subatomic-pspacecomplete}
$\DFA\to \NFA_{\mathbf{atm}}$ and $\DFA\to \NFA_{\mathbf{syn}}$ are $\PSPACE$-complete.
\end{theorem}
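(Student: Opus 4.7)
For the $\PSPACE$ upper bound, I would proceed by guess-and-verify, analogously to \autoref{prop:atomic_np} but now without the benefit of a precomputed reverse DFA or monoid recognizer. Given $(A,k)$, one guesses an NFA $N$ with at most $k$ states (polynomial size), then checks that $L(N)=L(A)$ via the classical $\PSPACE$ algorithm for NFA-DFA equivalence. It remains to verify atomicity (respectively subatomicity) of $N$ in $\PSPACE$. For atomicity, the Brzozowski-Tamm criterion from the remark following \autoref{prop:atomic_np} applies: $N$ is atomic iff $\rsc{\rev{N}}$ is a minimal DFA. This is verifiable in $\mathsf{NPSPACE}=\PSPACE$ by Savitch's theorem, since reachability and pairwise distinguishability of the (exponentially many) subset-states can be tested using only polynomial working space. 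For subatomicity, one uses the analogous fact that $N$ is subatomic for $L=L(A)$ iff the acceptance behavior of $N$ respects the syntactic congruence $\equiv_L$, i.e.\ $u\equiv_L v$ implies $u\in L(N,q)\Lra v\in L(N,q)$ for every state $q$; this can likewise be verified in $\PSPACE$ by iterating over representatives of the implicitly represented syntactic classes of $L$.

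For the $\PSPACE$ lower bound, I would reduce from Jiang and Ravikumar's $\PSPACE$-complete problem $\DFA\to\NFA$. By \autoref{prop:lattice-complexity}(2), restricting the input to DFAs for lattice languages collapses $\nalpha{L}$, $\nmu{L}$ and $\ns{L}$ to a single quantity, so any polynomial-time reduction landing in this restricted subclass would immediately transfer $\PSPACE$-hardness. Concretely, I would attempt a padding-style construction that augments the input alphabet with fresh bracket symbols $\bra{q},\ket{q}$ (one pair per state of $A$) and modifies $A$'s transitions so that the resulting DFA $A'$ recognizes a language $L'$ carrying the structure of a lattice language in the sense of \autoref{def:lattice_langs}, while $\ns{L'}$ is polynomially related to $\ns{L(A)}$.

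The principal obstacle lies in the polynomial-time computability of this lower-bound reduction. The direct application of \autoref{def:lattice_langs} to the semilattice $S=\SLD{L(A)}$ yields the desired coincidence via \autoref{prop:lattice-complexity}(2), but $\SLD{L(A)}$ is generally of exponential size in $|A|$, so the resulting DFA cannot be written down in polynomial time. Circumventing this requires either an ``implicit'' lattice-language construction whose target DFA is polynomial in $|A|$ yet whose underlying semilattice still faithfully captures the nondeterministic complexity of $L(A)$, or a direct adaptation of Jiang and Ravikumar's Turing machine encoding tailored so that its hard instances automatically satisfy $\nalpha{L}=\ns{L}$ (respectively $\nmu{L}=\ns{L}$). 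A further delicate point in the subatomic case is that the reduction must preserve the potentially much finer syntactic congruence rather than the Nerode congruence, which requires additional care to control $\nmu{L'}$.
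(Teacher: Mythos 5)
Your $\PSPACE$ upper bound is fine and in fact more explicit than the paper, which treats membership as routine and devotes its proof entirely to hardness: guessing a $k$-state nfa, checking equivalence with $A$ in $\PSPACE$, and verifying (sub)atomicity via the Brzozowski--Tamm criterion (resp.\ invariance of each $L(N,q)$ under $\equiv_{L(A)}$) all work in $\mathsf{NPSPACE}=\PSPACE$.

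The hardness direction, however, has a genuine gap, and you have correctly located it yourself without closing it. The lattice-language route is blocked exactly as you say: $\SLD{L(A)}$ can be exponential in $|A|$, so the construction of \autoref{def:lattice_langs} cannot be written down in polynomial time; \autoref{prop:lattice-complexity}(2) was engineered for the $\NP$-hardness reductions from $\BICCOV$, where the input is a small relation rather than a dfa, and it does not transfer here. The missing idea --- which is the entire content of the paper's proof --- is that \emph{no new reduction is needed at all}: Jiang and Ravikumar's original reduction from \textbf{UNIVERSALITY OF MULTIPLE DFAS} already lands in the restricted problems. In the ``no'' case their lower bound $\ns{L(A)}\geq k+1$ holds for \emph{all} nfas, hence a fortiori for atomic and subatomic ones; in the ``yes'' case their explicit $k$-state witness nfa has the property that (after $\eps$-elimination) every state $q$ is reachable as a singleton $\delta_w[q_0]=\{q\}$, so every state accepts a left derivative $w^{-1}L$. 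Such a \emph{residual} nfa is automatically both atomic and subatomic, since left derivatives lie in $\BLD{L}\seq\BLRD{L}$. This single observation makes the unmodified map $(A_1,\ldots,A_n)\mapsto(A,k)$ a valid reduction to both $\DFA\to\NFA_{\mathbf{atm}}$ and $\DFA\to\NFAmu$, and it also disposes of your worry about controlling the finer syntactic congruence in the subatomic case --- residuality sidesteps it entirely. Your second fallback (``a direct adaptation of Jiang and Ravikumar's encoding tailored so that its hard instances automatically satisfy the equalities'') gestures in this direction, but without identifying residuality of the witness the argument is not complete.
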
 

\begin{proof}
This follows by inspecting Jiang and Ravikumar's~\cite{MinNfaHard1993} argument that  $\DFA\to\NFA$ is $\PSPACE$-complete. These authors give a polynomial-time reduction from the $\PSPACE$-complete problem \textbf{UNIVERSALITY OF MULTIPLE DFAS}, which asks whether a given list $A_1,\ldots, A_n$ of dfas over the same alphabet $\Sigma$ satisfies $\bigcup_i L(A_i)=\Sigma^*$. For any $A_1,\ldots,A_n$ they construct a dfa $A$ over some alphabet $\Gamma$ and a natural number $k$ such that:
\begin{enumerate}
\item If $\bigcup_i L(A_i)\neq \Sigma^*$, then every nfa accepting $L(A)$ requires at least $k+1$ states.
\item If $\bigcup_i L(A_i)= \Sigma^*$, then there exists an nfa accepting $L(A)$ with $k$ states.
\end{enumerate} 
In the proof of (2), an explicit $k$-state nfa $N=(Q,\delta,\{q_0\},F)$ with $L(N)=L(A)$ is given, see \cite[Fig.~1]{MinNfaHard1993}. It has the property that, after $\epsilon$-elimination, for every state $q$ there exists $w\in \Gamma^*$ with $\delta_w[q_0]=\{q\}$. This implies that every state $q$ accepts a left derivative $w^{-1}L(N)$, i.e.~$N$ is a \emph{residual} nfa~\cite{ResidFSA2001}. In particular, $N$ is both atomic and subatomic. Consequently,
$(A_1,\ldots, A_n)\mapsto (A,k)$ is also a reduction to both $\DFA\to \NFA_{\mathbf{atm}}$ and $\DFA\to \NFAmu$.
\end{proof}

\section{Applications}\label{sec:applications}
We conclude this paper by outlining some useful consequences of our  $\NP$-completeness results concerning the computation of small nfas for specific classes of regular languages. 

\subsection{Nuclear Languages}
As shown above, nuclear languages form a natural common generalization of bideterministic, biRFSA, and lattice languages. Let $\DFA+\rev{\DFA}\to \NFA$ be the variant of $\DFA+\rev{\DFA}\to \NFA_{\mathbf{atm}}$ where the target nfas are arbitrary, i.e.\ the task is to decide $\ns{L(A)}\leq k$. Then:
\begin{theorem}\label{thm:nuclear_n_complete}
For nuclear languages, the problem $\DFA + \rev{\DFA} \to \NFA$ is $\NP$-complete.
\end{theorem}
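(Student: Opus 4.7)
The plan is to mirror \autoref{thm:atomic_npc} but with the atomic complexity measure replaced by the unrestricted one $\ns{\dash}$. The key tool making this substitution succeed is the identity $\ns{L}=\dim{\rDR{L}}$ from \autoref{prop:nuclear-complexity}(1), which holds for every nuclear language. For $\NP$ membership, given a valid input $(A,B,k)$ with $L=L(A)$ nuclear, I would compute $\rDR{L}\seq \LD{L}\times \LD{\rev{L}}$ in polynomial time from $A$ and $B$ via \autoref{rem:dfapair}. By the aforementioned identity, the question $\ns{L}\leq k$ is equivalent to $\dim{\rDR{L}}\leq k$, for which a biclique cover of size at most $k$ serves as a short certificate verifiable in polynomial time.

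For $\NP$-hardness I would reduce from \textbf{BICLIQUE COVER} following the template of \autoref{prop:atomic_nphard}, via the assignment
\[ (\rR,k)\;\longmapsto\;(\dfa{L},\dfa{\rev{L}},k) \qquad\text{where}\qquad L \;=\; L(\Open(\rR)). \]
Both dfas are computable from $\rR$ in polynomial time via the explicit description in \autoref{prop:atomic_nphard} (which relies on \autoref{lem:jirr-mirr-open} to enumerate the join- and meet-irreducibles). By \autoref{lem:lattice_lang_nuclear} the lattice language $L$ is nuclear, so the output lies in the promised subclass of inputs. The chain of $\Dep$-isomorphisms
\[ \rR \;\cong\; \Pirr(\Open(\rR)) \;\cong\; \Pirr(\SLD{L}) \;\cong\; \rDR{L} \]
established in the proof of \autoref{prop:atomic_nphard}, combined with \autoref{lem:dep-iso-preserves-dim}(2), yields $\dim{\rR}=\dim{\rDR{L}}$. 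Applying \autoref{prop:nuclear-complexity}(1) to the nuclear $L$ then gives $\dim{\rDR{L}}=\ns{L}$, so $\dim{\rR}\leq k$ iff $\ns{L}\leq k$ as required.

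The argument presents no substantial obstacle beyond assembling the existing pieces; the only subtlety is to verify that the reduction's output satisfies the nuclearity promise, which is guaranteed because every lattice language is nuclear (\autoref{lem:lattice_lang_nuclear}). Conceptually, the theorem demonstrates that restricting the conversion problem to nuclear languages suffices to drop its complexity from $\PSPACE$-complete (in general~\cite{MinNfaHard1993}) to $\NP$-complete, because for nuclear inputs $\ns{L}$ collapses to the efficiently extractable quantity $\dim{\rDR{L}}$.
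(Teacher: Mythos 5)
Your proposal is correct and follows essentially the same route as the paper: $\NP$ membership via the identity $\ns{L}=\dim{\rDR{L}}$ with biclique covers as certificates, and $\NP$-hardness by reusing the \textbf{BICLIQUE COVER} reduction through the lattice language $L(\Open(\rR))$, which is nuclear. The only (minor) addition in the paper is that it also verifies the nuclearity of the input in polynomial time from $\rDR{L}$ and $\rDR{L,a}$ using the relational characterization of nuclear languages, rather than treating nuclearity purely as a promise.
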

In fact, by \autoref{prop:nuclear-complexity}(1) we have $\ns{L}=\dim{\rDR{L}}$ for nuclear languages, so $\NP$ certificates are given by biclique covers. The $\NP$-hardness proof is identical to the one of \autoref{thm:atomic_npc}: the reduction involves a lattice language, which is nuclear by \autoref{lem:lattice_lang_nuclear}.

\subsection{Unary languages}
For unary regular languages $L\seq \{a\}^*$, every two-sided derivative $(a^i)^{-1}L(a^j)^{-1}$ is equal to the left derivative $(a^{i+j})^{-1}L$. Therefore, we have $\nalpha{L}=\nmu{L}$ and the minimal dfa for $L$ is the dfa structure of the syntactic monoid. From \autoref{thm:subatomic_npc} we thus derive

\begin{theorem}
  \label{thm:unary_subatomic_npc}
  For unary languages, the problem $\DFA \to \NFAmu$ is in $\NP$.
\end{theorem}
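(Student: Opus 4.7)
The plan is to reduce the unary case of $\DFA\to\NFAmu$ to the problem $\DFA+\rev{\DFA}\to\NFA_{\mathbf{atm}}$, which is in $\NP$ by \autoref{prop:atomic_np}. The key observation is that unary languages collapse the distinction between subatomic and atomic acceptors, and moreover each unary language is self-reverse.

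First, I would establish two elementary facts about any unary language $L\subseteq \{a\}^*$. Since every unary word is a palindrome, $\rev{L}=L$. Moreover, every two-sided derivative satisfies $(a^i)^{-1}L(a^j)^{-1}=(a^{i+j})^{-1}L$, whence $\BLRD{L}=\BLD{L}$ as boolean algebras (both containing $\SLD{L}$). Consequently an nfa over $\{a\}$ is subatomic for $L$ if and only if it is atomic for $L$, since in either case its states accept languages drawn from the same boolean algebra; hence $\nmu{L}=\nalpha{L}$. Alternatively, this can be read off from \autoref{thm:na_nmu_char} by noting that, once $\BLD{L}=\BLRD{L}$, the two commutative diagrams in parts (1) and (2) literally coincide.

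Next, I would exhibit the many-one reduction $(A,k)\mapsto (A,A,k)$, which runs in linear time. Since $L(A)=\rev{L(A)}$, the pair of dfas forms a valid input to $\DFA+\rev{\DFA}\to\NFA_{\mathbf{atm}}$, and the identification $\nmu{L(A)}=\nalpha{L(A)}$ ensures that the reduction preserves the threshold $k$. Feeding the output into the $\NP$ algorithm provided by \autoref{prop:atomic_np} then decides the unary instance of $\DFA\to\NFAmu$ in nondeterministic polynomial time. I anticipate no significant obstacle here: the argument merely exploits the collapse $\BLD{L}=\BLRD{L}$ on unary alphabets together with the existing $\NP$ certificates (solutions of relational equations in $\Dep$) for the atomic case, with $A$ conveniently serving as its own reverse partner so that no preprocessing is required.
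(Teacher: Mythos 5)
Your proof is correct, but it takes a genuinely different route from the paper's. Both arguments hinge on the same collapse for unary $L$ --- every two-sided derivative $(a^i)^{-1}L(a^j)^{-1}$ equals the left derivative $(a^{i+j})^{-1}L$, so $\BLRD{L}=\BLD{L}$ and hence $\nmu{L}=\nalpha{L}$ --- but they then diverge. The paper exploits a second consequence of the unary collapse, namely that the minimal dfa for $L$ \emph{is} the dfa structure of the syntactic monoid, so $\Syn{L}$ has only polynomially many elements and can be computed from the input dfa in polynomial time; the unary instance $(A,k)$ of $\DFA\to\NFAmu$ is then fed into the $\NP$ algorithm for $\MONTOSUBATOMICNFA$ from \autoref{thm:subatomic_npc}. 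You instead exploit self-reversal of unary languages ($\rev{L}=L$, since unary words are palindromes) and map $(A,k)\mapsto(A,A,k)$, reducing to $\DFA+\rev{\DFA}\to\NFA_{\mathbf{atm}}$ and invoking \autoref{prop:atomic_np}; the identity $\nmu{L}=\nalpha{L}$ guarantees the threshold is preserved. Your reduction is arguably the more economical one --- it requires no computation of the syntactic monoid at all, the dfa serving as its own reverse partner --- whereas the paper's route stays within the subatomic/monoid framework that the theorem is nominally about and makes explicit \emph{why} the monoid representation is affordable in the unary case (it is no larger than the minimal dfa). Both yield the same $\NP$ certificates in the end, since for unary $L$ the relational diagrams over $\id_{\Sigma^*/\sim_L}$ and $\id_{\Syn{L}}$ coincide, exactly as you observe via \autoref{thm:na_nmu_char}.
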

This theorem generalizes the best-known complexity result for unary nfas, which asserts that the problem $\DFA\to \NFA$ is in $\NP$ for \emph{unary cyclic languages}~\cite{MinNfaUnary91}, i.e.\ unary regular languages whose minimal dfa is a cycle. In fact, for any such language $L$ we have shown in \cite[Example 5.1]{mmu21} that $\nmu{L}=\ns{L}$, hence $\DFA\to \NFA$ coincides with $\DFA \to \NFAmu$.

\subsection{Group languages}
A regular language is called a \emph{group language} if its syntactic monoid forms a group. Several  equivalent characterizations of group languages are known; for instance, they are precisely the languages accepted by {measure-once quantum finite automata}~\cite{bro-pip1999}. Concerning their state-minimal (sub)atomic acceptors, we have the following result:

\begin{proposition}\label{prop:group-language-complexity}
For any group language $L$, we have $\nmu{L}=\nalpha{L}$.
\end{proposition}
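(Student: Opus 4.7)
The inequality $\nmu{L}\le\nalpha{L}$ holds for every regular language since each atomic nfa is subatomic. For the reverse direction, the plan is to start from an optimal subatomic witness and reshape it into an atomic one on the \emph{same} semilattice, so as to keep $|J(S)|$ unchanged. Concretely, by \autoref{thm:na_nmu_char}(2) I would pick a finite semilattice $S$ with $|J(S)|=\nmu{L}$ together with $\JSLf$-morphisms $p\colon \SLD{L}\to S$, $q\colon S\to \BLRD{L}$ and $\tau_a\colon S\to S$ filling the subatomic diagram, and replace $q$ by $q':=\ell\circ q\colon S\to \BLD{L}$, where $\ell\colon \BLRD{L}\to \BLD{L}$ is the left adjoint of the inclusion $\BLD{L}\hookrightarrow \BLRD{L}$. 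Explicitly, $\ell(X)$ is the least element of $\BLD{L}$ containing $X$; on atoms it sends each $\equiv_L$-class to the $\sim_L$-class containing it.

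Most of the atomic diagram then commutes for free. Since $\SLD{L}\subseteq \BLD{L}$ and $\ell$ restricts to the identity on $\BLD{L}$, the composite $q'\circ p$ agrees with the inclusion $\SLD{L}\hookrightarrow \BLD{L}$; moreover $L$ itself lies in $\BLD{L}$ (set $x:=\epsilon$ in the defining biconditional of $\sim_L$), so $\ell(L)=L$ and the initial condition $q'\circ p\circ i=i'$ holds automatically.

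The key step, and the only place where the group hypothesis intervenes, is to establish $\ell\circ \delta_a''=\delta_a'\circ \ell$ for every $a\in \Sigma$. Since both sides are $\JSLf$-morphisms, it suffices to verify this on atoms. Writing $G=\Syn{L}$, $\mu_L\colon \Sigma^*\epito G$ for the syntactic morphism, and $H:=\{g\in G:F_Lg=F_L\}$ for the right stabilizer of $F_L\subseteq G$, I would identify the atoms of $\BLRD{L}$ with elements of $G$, those of $\BLD{L}$ with left cosets in $G/H$, the map $\ell$ with the canonical projection $G\epito G/H$, and both $\delta_a''$ and $\delta_a'$ with left multiplication by $\mu_L(a)^{-1}$ (here invertibility in the group is essential). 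Equivariance of $G\epito G/H$ under left multiplication then yields the claim. I expect this verification to be the main technical step: for non-group languages $\delta_a''$ need not permute atoms, and the argument breaks down.

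It then remains to check that $(S,\tau_a,p\circ i,f'\circ q')$ is an atomic $\JSL$-dfa accepting $L$. Commutativity of the atomic diagram follows from the previous paragraphs. For the accepted language, $q(\tau_w(p\circ i(1)))=w^{-1}L$ (from the subatomic diagram), so the acceptance criterion $\epsilon\in \ell(w^{-1}L)$ unfolds to the existence of some $v\sim_L\epsilon$ with $wv\in L$, which by the definition of $\sim_L$ (take $x:=w$) coincides with $w\in L$. Hence $\nalpha{L}\le |J(S)|=\nmu{L}$, giving equality.
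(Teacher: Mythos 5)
Your proof is correct and follows essentially the same route as the paper: both post-compose the subatomic witness $q$ with the closure morphism $\BLRD{L}\to\BLD{L}$ (least $\BLD{L}$-element containing a given set) and reduce the key intertwining identity $\ell\circ\delta_a''=\delta_a'\circ\ell$ to a computation on atoms, which works because the group hypothesis makes $\delta_a'$ and $\delta_a''$ permute atoms. The only cosmetic differences are that the paper verifies the atom-level identity by choosing $n$ with $[a^nw]_{\equiv_L}=[w]_{\equiv_L}$ rather than via the coset picture $G/H$, and additionally image-factorizes $\cl_L\circ q$ to make the new $q$ injective, a step your argument correctly shows is unnecessary.
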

Therefore, \autoref{thm:atomic_npc} implies

\begin{theorem}
  For group languages, $\DFA + \rev{\DFA} \to \NFAmu$ is in $\NP$.
\end{theorem}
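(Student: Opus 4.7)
The plan is to reduce this problem to $\DFA+\rev{\DFA}\to \NFA_{\mathbf{atm}}$, which is in $\NP$ by \autoref{prop:atomic_np}, and to piggy-back on the equality $\nmu{L}=\nalpha{L}$ supplied by \autoref{prop:group-language-complexity}.

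Given an input $(A,B,k)$ with $A,B$ dfas, I first need to verify in polynomial time that $(A,B)$ is admissible, i.e.\ that $L(A)=\rev{L(B)}$ and that $L(A)$ is a group language. The first condition is decidable in polynomial time by exactly the argument in step (1) of the proof of \autoref{prop:atomic_np}. For the second, I would minimize $A$ (polynomial time) to obtain $\dfa{L}$ for $L=L(A)$, and then check that every letter $a\in \Sigma$ induces a permutation on the state set $\LD{L}$. Since $\Syn{L}$ is isomorphic to the transition monoid of $\dfa{L}$ and is generated by the maps $\delta_a$, each $\delta_a$ being a bijection is equivalent to $\Syn{L}$ being a group, i.e.\ to $L$ being a group language. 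This check is clearly in polynomial time.

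Once admissibility is confirmed, \autoref{prop:group-language-complexity} gives $\nmu{L}=\nalpha{L}$, so the answer to the present instance agrees with that of the instance $(A,B,k)$ of $\DFA+\rev{\DFA}\to \NFA_{\mathbf{atm}}$. An $\NP$ witness is therefore exactly the certificate supplied in step (2) of the proof of \autoref{prop:atomic_np}: a $\Dep$-object $\rS$ with $|\rS_\src|\leq k$ and $|\rS_\trg|\leq |B|$ together with $\Dep$-morphisms $\rP,\rQ,\rT_a$ making the right-hand diagram in \eqref{eq:equiv-diagrams} commute. Verifying commutativity reduces to finitely many relational identities in $\Rel$, each computable in polynomial time from $A$ and $B$ via \autoref{rem:dfapair}.

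There is no real obstacle here: the theorem is essentially a one-line consequence of the two results cited. The only item worth stating carefully is the input-admissibility check, since \autoref{prop:group-language-complexity} is only available under the group hypothesis. The substantive mathematical content sits in \autoref{prop:group-language-complexity} and in the category-theoretic machinery behind \autoref{prop:atomic_np}, both of which are assumed available.
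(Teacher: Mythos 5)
Your proposal is correct and follows essentially the same route as the paper: the theorem is derived by combining $\nmu{L}=\nalpha{L}$ for group languages (Proposition~\ref{prop:group-language-complexity}) with the $\NP$ membership of $\DFA+\rev{\DFA}\to\NFA_{\mathbf{atm}}$ (Proposition~\ref{prop:atomic_np}). Your added polynomial-time admissibility check (each $\delta_a$ a permutation of the minimal dfa's states iff $\Syn{L}$ is a group) is a sound and worthwhile detail the paper leaves implicit.
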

The complexity of the general $\DFA + \rev{\DFA} \to \NFAmu$ problem is left as an open problem.

\section{Conclusion and Future Work}

Approaching from an algebraic and category-theoretic angle we have studied the complexity of computing small (sub)atomic nondeterministic machines. We proved this to be much more tractable than the general case, viz. $\NP$-complete as opposed to $\PSPACE$-complete, provided that one works with a representation of the input language by a pair of dfas or a finite monoid, respectively. There are several interesting directions for future work.

The particular form of our main two $\NP$-complete problems suggests an investigation of their variants $\DFA+\rev{\DFA}\to \NFA$ and $\mathbf{MON}\to\NFA$ computing unrestricted nfas. The reductions used in the proof of \autoref{thm:atomic_npc} and \ref{thm:subatomic_npc} show both problems to be $\NP$-hard, and we have seen in \autoref{thm:nuclear_n_complete} that they are in $\NP$ for nuclear languages. The complexity of the general case is left as an open problem.

The classical algorithm for state minimization of nfas is the Kameda-Weiner method~\cite{KamedaWeiner1970}, recently given a fresh perspective based on atoms of regular languages~\cite{tamm16}. The algorithm involves an enumeration of biclique covers of the dependency relation $\rDR{L}$. Since our base equivalence $\JSLf\simeq\Dep$ reveals a close relationship between biclique covers and semilattice morphisms (e.g.\ \autoref{lem:dep-iso-preserves-dim}), we envision a purely algebraic account of the Kameda-Weiner method. We should also compare our canonical machines to the Universal Automaton \cite{UniversalAut2008}, a language-theoretic presentation of the Kameda-Weiner algorithm. For example, our morphisms preserve the language whereas the Universal Automaton uses simulations.

Finally, the classes of nuclear and lattice languages -- introduced as technical tools for our $\NP$-completeness proofs -- deserve to be studied in their own right. For instance, we expect to uncover connections between lattice languages and the characterization of finite simple non-unital semirings which are not rings \cite[Theorem 1.7]{ZumbSemiring2007}.





\takeout{
\section{Rob's Realm}

\begin{theorem}
  If $L$ is a nuclear language then $ns(L) = n\mu(L)$.
\end{theorem}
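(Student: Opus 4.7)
Since $\ns{L}\leq \nmu{L}$ by \eqref{eq:complexity_ineq}, it suffices to prove $\nmu{L}\leq \ns{L}$. For nuclear $L$, \autoref{prop:nuclear-complexity}(1) gives $\ns{L}=\dim{\rDR{L}}$, reducing the task to establishing $\nmu{L}\leq k$ where $k\defeq \dim{\rDR{L}}$. By \autoref{thm:na_nmu_char}(2), this amounts to constructing a finite semilattice $S$ with $|J(S)|\leq k$ together with $\JSLf$-morphisms $p\colon \SLD{L}\to S$, $q\colon S\to \BLRD{L}$ and $\tau_a\colon S\to S$ ($a\in \Sigma$) making the right-hand subatomic diagram in that theorem commute.

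First I would select a minimum biclique cover of $\rDR{L}$ of size $k$. Via \autoref{lem:dep-iso-preserves-dim}(1), this yields an injective $\JSLf$-morphism $p\colon \SLD{L}\monoto S$ with $|J(S)|=k$. Since $\BLRD{L}\cong \Pow(\Syn{L})$ is a free (hence projective, and by self-duality of $\JSLf$ also injective) object in $\JSLf$, the canonical inclusion $\SLD{L}\hookrightarrow \BLRD{L}$ extends along $p$ to some $q\colon S\to \BLRD{L}$ with $q\circ p$ equal to that inclusion. For the endomorphisms $\tau_a$, nuclearity is the crucial ingredient: writing $\delta_a = h_a\circ g_a$ with $g_a\colon \SLD{L}\to D_a$ and $h_a\colon D_a\to \SLD{L}$ for a finite distributive lattice $D_a$, and using that finite distributive lattices are also injective in $\JSLf$ (as retracts of free objects $\Pow(J(D_a))$ via the Birkhoff embedding), I would extend $g_a$ along $p$ to $\tilde{g}_a\colon S\to D_a$ and set $\tau_a \defeq p\circ h_a\circ \tilde g_a$. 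Then $\tau_a\circ p = p\circ h_a\circ g_a = p\circ \delta_a$, so the lower square of the subatomic diagram commutes.

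The hard part will be ensuring the upper square $q\circ \tau_a = \delta_a''\circ q$. Arbitrary extensions $q$ and $\tilde{g}_a$ from the two independent applications of injectivity will agree on $p(\SLD{L})$ (where both sides collapse to the inclusion of $\delta_a$) but may differ elsewhere in $S$. Overcoming this requires a simultaneous, constrained choice of $q$ and $(\tilde g_a)_{a\in \Sigma}$ coherently witnessing the entire subatomic diagram. I expect the cleanest route is to reformulate the construction inside $\Dep$ via the fundamental equivalence $\JSLf\simeq \Dep$ (\autoref{thm:jsl_vs_dep}): the biclique cover furnishes a $\Dep$-object $\rS$ with $|\rS_\src|=k$, and the nuclear factorizations translate into especially rigid ``distributive'' $\Dep$-witnesses for $\rDR{L,a}$ that fit tightly with the subatomic diagram $\id_1\to \id_{\Syn{L}}\to \id_{\Syn{L}}\to \id_1$ from \autoref{ex:blrd-vs-syn}. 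Arranging these witnesses so that a single $q$ simultaneously commutes the upper square for every $a\in \Sigma$ is the main technical obstacle; once accomplished, the induced nfa $J(A)$ for $A=(S,\tau,p\circ i,f''\circ q)$ is a subatomic $k$-state acceptor for $L$, yielding $\nmu{L}\leq k = \ns{L}$.
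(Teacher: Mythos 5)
Your overall frame is correct: $\ns{L}\leq\nmu{L}$ holds by \eqref{eq:complexity_ineq}, \autoref{prop:nuclear-complexity}(1) reduces the task to showing $\nmu{L}\leq k$ for $k=\dim{\rDR{L}}$, and \autoref{thm:na_nmu_char}(2) is the right tool. The embedding $p\colon\SLD{L}\monoto S$ with $|J(S)|=k$ obtained from a minimum biclique cover via \autoref{lem:dep-iso-preserves-dim}(1), and the injectivity of finite distributive lattices (hence of the boolean algebra $\BLRD{L}$) in $\JSLf$, are both sound. But the argument stops exactly where the proof has to begin. Producing $q$ and the $\tilde g_a$ by two independent applications of injectivity only controls their behaviour on $p[\SLD{L}]$; worse, with $\tau_a:=p\circ h_a\circ\tilde g_a$ the image of $\tau_a$ lies inside $p[\SLD{L}]$, so $q\circ\tau_a(s)=h_a(\tilde g_a(s))\in\SLD{L}$ for every $s\in S$, whereas $\delta_a''\circ q(s)=a^{-1}q(s)$ has no reason to lie in $\SLD{L}$ when $s\notin p[\SLD{L}]$. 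So the upper square is not merely unverified for a generic choice of extensions: your construction structurally forces it to fail unless $q$ is chosen so that $a^{-1}q(s)$ lands back in $\SLD{L}$ for all $s$ and $a$, a constraint that no abstract injectivity argument delivers. You explicitly defer this to an unexecuted reformulation in $\Dep$ with ``rigid distributive witnesses''; that deferred step is the entire content of the theorem, so what you have is a plan rather than a proof.

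For comparison, the intended argument does not extend morphisms abstractly at all. It takes the explicit nuclear decomposition $\delta_a=\bigvee_i x_i\ostar y_i$, extends it to $\gamma_a=\bigvee_i e(x_i)\ostar e(y_i)$ on $S$ exactly as in the proof of \autoref{prop:nuclear-complexity}(1), passes to the simple quotient of the resulting $\JSL$-dfa, and then proves by a direct language-level analysis of which words each state accepts --- carried out first for the lattice language $L(\SLD{L})$ and then transported to $L$ --- that every state accepts a language in $\BLRD{L}$. Once that is known, the upper square comes for free, since a simple $\JSL$-dfa all of whose states accept languages in $\BLRD{L}$ embeds into $\BLRD{L}$ with transitions $a^{-1}(\dash)$. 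That concrete analysis of the accepted languages of the extended automaton is precisely the missing idea in your proposal.
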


\begin{proof}
  Previously we proved $ns(L) = na(L)$ whenever $L := L(S_0)$ is a lattice language. We extended super-semilattices $\SLD{L} \subseteq S$ to super $\JSL$-dfas $A = (S, (\delta_a)_{a \in \Sigma}, L, F)$ where:
  \[
    \delta_{\bra{j}} := \bot_S \ostar \bra{j}^{-1} L
    \qquad
    \delta_{\ket{m}} := dr_L(\bra{m}^{-1} L^r) \ostar \top_S,
  \]
  and showed $\simple{A} \subseteq \BLD{L}$. We may assume $A = \simple{A}$ for otherwise we can quotient and extend. We may enforce $\top_A = \top_{\SLD{L}} = L$ and also $F = S \setminus \{\bot_S\}$. 
  
  \begin{enumerate}
    \item 
    We begin by proving the refinement $A \subseteq \DLD{L}$.  Instead of the Myhill-Nerode left congruence we'll use the preorder $\leq_L\ \subseteq \Sigma^* \times \Sigma^*$ defined:
    \[
      v_1 \leq_L v_2 :\iff \forall X \in \LD{L}. [ v_1^r \in X \To v_2^r \in X ]
    \]
    or equivalently $v_1 \leq_L v_2 \iff (v_1^r)^{-1} L^r \subseteq (v_2^r)^{-1} L^r$. Then \emph{assuming $v_1^{-1} L^r \subseteq v_2^{-1} L^r$ and $v_1^r \in s \in S$ we must show $v_2^r \in s$}. The reverse of a lattice language is the lattice language of the order-dual lattice, modulo relabelling $\bra{j} \mapsto \ket{j}$ and $\ket{m} \mapsto \bra{m}$. There are three cases.
    \begin{enumerate}
      \item $v_1^{-1} \rev{L} = \emptyset$ i.e.\ $v_1 \in \overline{\rev{L}}$.
      \item $v_2^{-1} \rev{L} = \rev{L}$ i.e.\ $v_2 = \epsilon$ or $v_2 \in L^r \cap \Sigma^* \{ \bra{j} : j \in J(S_0) \}$.
      \item $v_i^{-1} \rev{L} = \ket{m_i}^{-1} \rev{L}$ where $m_2 \leq_{S_0} m_1$ i.e.\ $v_i = w_i \ket{m_i}$ where $w_i \in L^r$.
    \end{enumerate}
  
    Case (1) never occurs i.e.\ we cannot have $v_1^r \in \overline{L}$ because $\top_S = L$. For (2) if $v_1^r \in s$ then it is non-empty hence $\epsilon \in s$. Moreover if $v_2^r = \bra{j}w$ then $v_2^r \in s$ iff $w \in \bra{j}^{-1} L$ via $\delta_{\bra{j}}$, which holds because we also assume $v_2^r \in L$. For case (3) assume $\ket{m_1} w_1^r \in s$ and also:
    \[
      m_1 \leq_{S_0^\pOp} m_2  
      \iff \bra{m_1}^{-1} L^r \subseteq \bra{m_2}^{-1} L^r
      \iff dr_L(\bra{m_2}^{-1} L^r) \subseteq dr_L(\bra{m_1}^{-1} L^r).
    \]
    By considering $\delta_{\ket{m_1}}$ we have $s \nsubseteq dr_L(\bra{m_1}^{-1} L^r)$, hence $s \nsubseteq dr_L(\bra{m_2}^{-1} L^r)$ too. Thus $\ket{m_2} w_2^r \in s$ because $w_2^r \in L$ by assumption.

    Then we've established $A = \simple{A} \subseteq \DLD{L}$ for lattice  languages $L$.

    \item
    We now show $\gamma_{U_1} = \gamma_{U_2} \iff \delta_{U_1} = \delta_{U_2}$ for all $U_1, U_2 \subseteq \Sigma^*$ where $\gamma_a : \SLD{L} \to \SLD{L}$ are the transitions of a lattice language $L$.
    Letting $M(v) := dr_L(v^{-1} L^r)$ we first establish:
    \[
      M(v_0) \ostar u_0^{-1} L \leq \gamma_U
      \iff M(v_0) \ostar u_0^{-1} L \leq \delta_U
      \qquad
      \text{for any $u_0, v_0 \in \Sigma^*$ and $U \subseteq \Sigma^*$}.
    \]
    The implication $(\oT)$ follows because $\delta_U$ restricts to $\gamma_U$ and $M(v_0) \ostar u_0^{-1} L$ restricts too. Conversely $M(v_0) \ostar u_0^{-1} L \leq \gamma_U$ means $\forall u \in \Sigma^*. [ v_0^r \in u^{-1} L \To u_0^{-1} L \subseteq U^{-1}(u^{-1} L) ]$. We know each join-irreducible of $S$ is a union of intersections $j_0 = \bigcup_i \bigcap_k u_{ik}^{-1} L$. Then whenever $v_0^r \in j_0$ we know $\exists i_0. v_0^r \in \bigcap_k u_{i_0 k}^{-1} L$ and thus $u_0^{-1} L \subseteq U^{-1} \bigcap_k u_{i_0 k}^{-1} L \subseteq U^{-1} j_0$. So we have established $M(v_0) \ostar u_0^{-1} L \leq \delta_U$ as desired.

    So assume $\gamma_{U_1} = \gamma_{U_2}$ where $U_1, U_2 \subseteq \Sigma^*$. We know $M(v) \ostar u^{-1} L \leq \gamma_{U_i} \iff M(v) \ostar u^{-1} L \leq \delta_{U_i}$ for $i = 1,2$. Moreover $\delta_{\bra{j}} = \Lor_{v \in \Sigma^*} M(v) \ostar \bra{j}^{-1} L$ because $s \nsubseteq M(v) \iff v^r \in s$, and $\delta_{\ket{m}} = \Lor_{u \in \Sigma^*} dr_L(\bra{m}^{-1} L^r) \ostar u^{-1} L$ because $\top_S = L = [\Sigma^*]^{-1} L$. Since composition is bilinear w.r.t.\ joins, $\delta_{U_i}$ is a join of $M(v) \ostar u^{-1} L$'s for $i = 1,2$. Thus $\delta_{U_1} = \delta_{U_2}$ as desired.

    \item Finally suppose $L$ is a nuclear language. Then we have the lattice language $L_0 := L(\SLD{L})$. Given $\SLD{L} \subseteq S$ then since $\SLD{L_0} \cong \SLD{L}$

  \end{enumerate}

\end{proof}
}

\takeout{
\subsection{Missing bits and bobs}

\begin{lemma}
  \label{lem:up_morphism_basics}
  \begin{enumerate}
    \item Given $s \ostar t : S \to T$ then $(s \ostar t)_* = t \ostar s : T^{\pOp} \to S^{\pOp}$.
    \item If $s \ostar t_1 : S \to T$ and $t_2 \ostar u : T \to U$ then:
    \[
      t_2 \ostar u \circ s \ostar t_1
      = \begin{cases}
        s \ostar u & \text{if $t_1 \nleq_T t_2$}
        \\
        \top_S \ostar \bot_U & \text{otherwise}.
      \end{cases}
    \]
  \end{enumerate}
\end{lemma}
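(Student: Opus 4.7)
The morphism $s\ostar t\colon S\to T$ in $\JSLf$ associated to $s\in S$ and $t\in T$ is the composite $S\xra{\chi_{F_s}}\mathbb{2}\xra{[1\mapsto t]}T$, where $\chi_{F_s}$ picks out the prime filter $F_s=\{x\in S:x\not\leq_S s\}$ and the second morphism sends $1\mapsto t$. Unfolding this yields the explicit formula
\[
(s\ostar t)(x)\;=\;\begin{cases} t & \text{if } x\not\leq_S s, \\ \bot_T & \text{if } x\leq_S s.\end{cases}
\]
Both parts of the lemma follow from this description by a direct computation; the obstacle, if any, is only to keep the case distinctions straight.

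For Part (1), I would compute $(s\ostar t)_*\colon T^\op\to S^\op$ via the adjoint relationship $(s\ostar t)(x)\leq_T y \iff x\leq_S (s\ostar t)_*(y)$ from the preliminaries. Fixing $y\in T$ and splitting on whether $t\leq_T y$: if $t\leq_T y$ then every $x\in S$ satisfies $(s\ostar t)(x)\leq_T y$, forcing $(s\ostar t)_*(y)=\top_S$; otherwise $(s\ostar t)(x)\leq_T y$ holds precisely when $(s\ostar t)(x)=\bot_T$, i.e.\ when $x\leq_S s$, giving $(s\ostar t)_*(y)=s$. Translating this into the opposite order $\leq_{T^\op}=\geq_T$ (so $t\leq_T y$ becomes $y\leq_{T^\op} t$) and using $\bot_{S^\op}=\top_S$, one sees the formula coincides term-for-term with $(t\ostar s)(y)$ as defined above.

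For Part (2), I would evaluate the composite $h \defeq (t_2\ostar u)\circ (s\ostar t_1)$ pointwise on $x\in S$. On input $x$ with $x\leq_S s$ we get $(s\ostar t_1)(x)=\bot_T$ and then $(t_2\ostar u)(\bot_T)=\bot_U$ (since $\bot_T\leq_T t_2$ always), so $h(x)=\bot_U$ in all cases. On input $x$ with $x\not\leq_S s$ we get $(s\ostar t_1)(x)=t_1$ and then $h(x)=u$ if $t_1\not\leq_T t_2$ and $h(x)=\bot_U$ otherwise. Assembling the two cases: if $t_1\not\leq_T t_2$ then $h$ matches $s\ostar u$ pointwise; if $t_1\leq_T t_2$ then $h$ is the constant zero morphism, which is exactly $\top_S\ostar \bot_U$ since the defining clause evaluates to $\bot_U$ regardless of whether $x\leq_S \top_S$.

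Both parts are essentially bookkeeping once the explicit formula for $s\ostar t$ is fixed; there is no genuine obstacle beyond a careful treatment of the two cases in the definition and of the order reversal in Part~(1).
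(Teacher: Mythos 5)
Your proposal is correct and follows essentially the same route as the paper: part (1) via the adjoint relationship $(s\ostar t)(x)\leq_T y\iff x\leq_S(s\ostar t)_*(y)$, which the paper verifies by noting both sides amount to ``$x\leq_S s$ or $t\leq_T y$'', and part (2) by the pointwise case analysis that the paper dismisses as a routine calculation (noting only $\top_S\ostar\bot_U=\lambda x.\,\bot_U$). Your write-up just makes the bookkeeping explicit.
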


\begin{proof}
  For the first claim observe $s_0 \ostar_{S, T} t_0 (s) \leq_T t \iff s \leq_S t_0 \ostar_{T^\pOp, S^\pOp} s_0 (t)$ because $s \leq_S s_0 \,\lor\, t_0 \leq_T t$ holds iff $t \leq_{T^\pOp} t_0 \,\lor\, s \leq_S s_0$ holds. The second claim is another routine calculation, noting that $\top_S \ostar \bot_U = \lambda s. \bot_U$.
\end{proof}

The morphisms $\JSL_f(S, T)$ define a semilattice via \emph{pointwise-joins}.

\begin{lemma}
  If $f : S \to T$ is a morphism between finite semilattices then t.f.a.e.
  \begin{enumerate}
    \item $f$ is nuclear.
    \item $f$ factors through a finite boolean algebra.
    \item $f$ is a pointwise-join of morphisms $s \ostar t : S \to T$ where $(s, t) \in S \times T$.
    \item $f$ is a pointwise-join of morphisms $m \ostar j : S \to T$ where $(m, j) \in M(S) \times J(T)$.
  \end{enumerate}
\end{lemma}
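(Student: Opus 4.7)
My plan is to establish the equivalences via a cycle of implications, using the obvious $(4) \Rightarrow (3)$ and $(2) \Rightarrow (1)$, and proving $(3) \Rightarrow (4)$, $(4) \Rightarrow (2)$, and $(1) \Rightarrow (3)$ as substantive steps. The first of these, $(3) \Rightarrow (4)$, will rely on the bilinearities
\[ (s_1 \wedge s_2)\ostar t \;=\; (s_1\ostar t)\vee (s_2\ostar t), \qquad s \ostar (t_1\vee t_2) \;=\; (s\ostar t_1)\vee (s\ostar t_2), \]
which are direct pointwise verifications from the formula $(s\ostar t)(x) = t$ if $x \nleq_S s$ and $\bot_T$ otherwise. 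Combined with the fact that $M(S)$ meet-generates $S$ and $J(T)$ join-generates $T$, these identities yield
\[ s\ostar t \;=\; \bigvee \set{\,m\ostar j \,:\, m\in M(S),\,j\in J(T),\,s\leq_S m,\,j\leq_T t\,} \quad \text{in } \JSLf(S,T), \]
so every pointwise join of $s\ostar t$'s is already a pointwise join of $m\ostar j$'s.

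For $(4) \Rightarrow (2)$ I would give an explicit factorization. Given a decomposition $f = \bigvee_{i=1}^n m_i\ostar j_i$, I take the finite Boolean algebra $B = \Pow\set{1, \ldots, n}$ and the maps
\[ g\colon S\to B,\ \ s\mapsto \set{\,i\,:\, s\nleq_S m_i\,};\qquad h\colon B\to T,\ \ I\mapsto \bigvee_{i\in I} j_i. \]
A routine check, using $s_1\vee s_2 \nleq m \iff s_1\nleq m \text{ or } s_2\nleq m$, shows that $g$ and $h$ are $\JSLf$-morphisms and that $h\circ g = f$. Since finite Boolean algebras are distributive lattices, $(2) \Rightarrow (1)$ is then immediate.

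The hard part will be $(1) \Rightarrow (3)$, where the distributivity assumption enters in an essential way. Assuming $f = h\circ g$ factorizes as $S \xra{g} D \xra{h} T$ through a finite distributive lattice $D$, the key tool is that every join-irreducible $j\in J(D)$ is \emph{join-prime}, i.e.\ $j\leq_D \bigvee X$ implies $j\leq_D x$ for some $x\in X$. For each such $j$, I would examine the set
\[ A_j \;=\; \set{\,s\in S \,:\, j\nleq_D g(s)\,}, \]
and observe that $A_j$ is a downset of $S$ (immediate from monotonicity of $g$) and closed under binary joins (this step uses join-primality of $j$ pushed through $g$). Consequently $A_j$ contains its own join, yielding an element $m_j = \bigvee A_j \in S$ characterized by $s\leq_S m_j \iff j\nleq_D g(s)$. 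Combining this with the canonical representation $g(s) = \bigvee\set{\,j\in J(D)\,:\, j\leq_D g(s)\,}$ and applying $h$ then gives
\[ f(s) \;=\; \bigvee_{j\in J(D),\,s\nleq_S m_j} h(j) \;=\; \Bigl(\bigvee_{j\in J(D)} m_j\ostar h(j)\Bigr)(s), \]
which is a pointwise-join decomposition of the required form, establishing $(3)$. The main obstacle is precisely the use of join-primality: without distributivity of $D$, the set $A_j$ need not be closed under binary joins, and the construction of $m_j$ collapses.
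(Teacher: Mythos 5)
Your proof is correct and rests on the same key ingredients as the paper's: join-primality of join-irreducibles in finite distributive lattices for the direction out of (1), an explicit factorization through a finite Boolean power of $\mathbb{2}$ indexed by the summands for the direction into (2), and refinement of $s \ostar t$ to a join of $m \ostar j$'s via meet-/join-generation by irreducibles. The only organizational difference is that you prove $(1)\Rightarrow(3)$ directly by constructing the elements $m_j = \bigvee\{s : j \nleq_D g(s)\}$, whereas the paper first exhibits $D$ as a retract of $\Pow J(D)$ to get $(1)\Rightarrow(2)$ and then decomposes the Boolean factorization componentwise; the two arguments are essentially the same computation packaged differently.
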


\begin{proof}
  \begin{enumerate}
    \item[($1 \iff 2$)]
    Assume $f$ factors through a finite distributive lattice $D$. It suffices to show $D$ arises as a retract $r \circ e = id_D$ of a finite boolean algebra. Recalling that join-irreducibles of finite distributive lattices are join-prime \cite{GratzerGeneralLattice1998}, we have the well-defined retract:
    \[
      \begin{tabular}{lllll}
        $e : D \monoto \Pow J(D)$
        & $r : \Pow J(D) \epito D$
        &&
        $e(a) := \{ j \in J(D) : j \leq_D q \}$
        & $r(S) := \Lor_D S$.
      \end{tabular}
    \]

    \item[($2 \iff 3$)]
      We now know $f : S \to T$ is nuclear iff $f = S \xto{\alpha} 2^Z \xto{\beta} T$ for some finite set $Z$ and morphisms $\alpha$, $\beta$. The coproduct and product coincide in $\JSL_f$ so we equivalently have morphisms $(\alpha_z : S \to \mathbb{2})_{z \in Z}$ and $(\beta_z : \mathbb{2} \to T)_{z \in Z}$ such that:
      \[
        f = [\beta_z]_{z \in Z} \circ \ang{\alpha_z}_{z \in Z}
        \qquad
        \text{where}\quad
        \begin{cases}
          \ang{\alpha_z}_{z \in Z}(s)
          &
          := \lambda z \in Z. \alpha_z(s)
          \\
          [\beta_z]_{z \in Z}(\delta : Z \to \mathbb{2})
          &
          := \Lor_T \{ \beta_z(\delta(z)) : z\in Z \}.
        \end{cases}
      \]
      Each morphism $\beta_z : \mathbb{2} \to T$ satisfies $\beta_z = \bot_S \ostar b_z$ where $b_z := \beta_z(1)$; dually each $\alpha_z : S \to \mathbb{2}$ satisfies $\alpha_z = a_z \ostar \top_S$ where $a_z := (\alpha_z)_*(0)$. Thus:
      \[
        \begin{tabular}{lll}
        $f$
        & $= \Lor_{\JSL(S,T)} \{ (\beta_z \circ \alpha_z) : z \in Z \}$
        & see above
        \\&
        $= \Lor \{  ((\bot_S \ostar b_z )\circ (a_z \ostar \top_S))  : z \in Z \}$
        & see above
        \\ &
        $= \Lor \{ a_z \ostar b_z : z \in Z \}$
        & by Lemma \ref{lem:up_morphism_basics}, assuming $|S| > 1$.
        \end{tabular}
      \]
      It follows that nuclear morphisms $f : S \to T$ are precisely the joins of morphisms $s \ostar t$. 

    \item[($3 \iff 4$)]
    We'll show:
    \[
      s_0 \ostar t_0 =
      \Lor_{\JSL(S, T)} \{ m \ostar j : s_0 \leq_S m \in M(S), \, J(T) \ni j \leq_T t_0 \}
      \qquad
      \text{for any $(s_0, t_0) \in S \times T$}.
    \]
    Each summand satisfies $m \ostar j \leq s_0 \otimes t_0$ because $s \nleq_S m$ implies $s \nleq_S s_0$ and moreover $j \leq_T t_0$. Conversely, whenever $s \nleq_S s_0$ then $\exists m \in M(S).[ s_0 \leq_S m \,\land\, m \nleq_S s_0]$ because $s$ is a meet of meet-irreducibles, so the action of RHS on $s$ is no less than $\Lor_T \{ j \in J(T) : j \leq_T t_0 \} = t_0$.
  \end{enumerate}
\end{proof}

We can also take joins of $\Dep$-morphisms i.e.\ \emph{unions}. In particular $\rR_1 \cup \rR_2 : \rG \to \rH$ is a morphism by taking the union of the respective witnesses, and yields the pointwise-join $\lambda Y. (\rR_1)_+\spbreve[Y] \cup (\rR_2)_+\spbreve[Y]$. To put it another way, $\JSL_f \cong \Dep$ restricts to lattice isomorphisms between hom-semilattices. We now apply this reasoning to nuclear morphisms.

\begin{lemma}
  \begin{enumerate}
    \item $\Pirr\, m \ostar j = (\Pirr S)[m] \times (\Pirr T)\spbreve[j]$ for any $(m, j) \in M(S) \times J(T)$.
    \item $\Pirr\, s \ostar t = \bigcup \{ (\Pirr S)[m] \times (\Pirr T)\spbreve[j] : s \leq_S m, j \leq_T t \}$ for any $(s, t) \in S \times T$.
  \end{enumerate}
\end{lemma}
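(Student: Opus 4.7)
The plan is to compute both relations directly from the pointwise formula for morphisms of the form $s_0 \ostar t_0$. Recall from the adjoint equation $s_0 \ostar t_0(s) \leq_T t \iff s \leq_S s_0 \,\lor\, t_0 \leq_T t$ (established earlier in the section) that the morphism $s_0 \ostar t_0 \colon S \to T$ sends $s$ to $t_0$ when $s \not\leq_S s_0$ and to $\bot_T$ otherwise. With this formula in hand, I would attack the two claims in turn.

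For claim (1), I would fix $(m, j) \in M(S) \times J(T)$ and an arbitrary test pair $(j', m') \in J(S) \times M(T)$. By definition $\Pirr(m \ostar j)(j', m')$ holds iff $(m \ostar j)(j') \not\leq_T m'$. A two-case split on whether $j' \leq_S m$ reduces this to the conjunction
\[
(j' \not\leq_S m)\,\land\,(j \not\leq_T m'),
\]
since the first alternative forces the image to be $\bot_T$ (hence below $m'$) while the second makes the image equal to $j$. The conjunction decouples the two coordinates, exhibiting $\Pirr(m \ostar j)$ as the rectangle $\{\,j' : j' \not\leq_S m\,\} \times \{\,m' : j \not\leq_T m'\,\}$, which is precisely the Cartesian product $(\Pirr S)[m] \times (\Pirr T)\spbreve[j]$.

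For claim (2), the same computation yields $\Pirr(s \ostar t) = \{\,(j', m') : j' \not\leq_S s \text{ and } t \not\leq_T m'\,\}$. The idea is to rewrite each of the two independent conditions using the generation properties of the irreducibles. Since $s = \bigwedge\{\,m \in M(S) : s \leq_S m\,\}$, the condition $j' \not\leq_S s$ is equivalent to the existence of some such $m$ with $j' \not\leq_S m$; dually, since $t = \bigvee\{\,j \in J(T) : j \leq_T t\,\}$, the condition $t \not\leq_T m'$ is equivalent to the existence of some such $j$ with $j \not\leq_T m'$. Because the witnessing quantifiers on $m$ and $j$ refer to disjoint coordinates of $(j', m')$, their product can be flattened into a single existential over pairs $(m, j)$ with $s \leq_S m$ and $j \leq_T t$. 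Invoking claim (1) to identify each resulting summand with $(\Pirr S)[m] \times (\Pirr T)\spbreve[j]$ then delivers the required union decomposition.

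Neither part presents a real obstacle; both are routine unpackings once the pointwise formula for $s_0 \ostar t_0$ is recorded. The only mildly subtle step is the flattening in (2), which relies on the two witnesses $m$ and $j$ acting on separate coordinates of $(j', m')$, so that $(\exists m)(\exists j)$ genuinely carves out the same relation as a single existential over the pair, yielding the biclique cover by principal rectangles.
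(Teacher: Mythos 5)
Your proof is correct. Part (1) coincides with the paper's argument: both unfold $\Pirr(m \ostar j)(j', m')$ to $(m\ostar j)(j')\not\leq_T m'$, note that the case $j'\leq_S m$ collapses because $\bot_T\leq_T m'$, and arrive at the decoupled conjunction $j'\not\leq_S m \,\land\, j\not\leq_T m'$. For part (2) you take a genuinely more elementary route. The paper does not recompute the relation; it invokes the identity $s\ostar t=\bigvee\{m\ostar j : s\leq_S m,\ j\leq_T t\}$ established in the nuclear-characterization lemma (\autoref{lem:nuclear_char}, the step from joins of $s\ostar t$'s to joins of $m\ostar j$'s), together with the fact that $\Pirr$ carries pointwise joins of $\JSLf$-morphisms to unions of relations, and then applies part (1) to each summand. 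You instead compute $\Pirr(s\ostar t)=\{(j',m') : j'\not\leq_S s,\ t\not\leq_T m'\}$ directly and rewrite the two conditions using $s=\bigwedge\{m\in M(S):s\leq_S m\}$ and $t=\bigvee\{j\in J(T):j\leq_T t\}$. The underlying generation facts are identical — the paper's proof of that step of \autoref{lem:nuclear_char} uses the same meet-of-meet-irreducibles argument — so your version buys self-containedness (no appeal to the earlier lemma or to the join-preservation of $\Pirr$) at the cost of redoing that computation inline, whereas the paper's version makes the structural point that the union decomposition of the relation is exactly the $\Pirr$-image of the join decomposition of the morphism. Your closing remark about flattening the two independent existentials into one existential over pairs is precisely the point that needs care, and you handle it correctly since the witnesses $m$ and $j$ constrain disjoint coordinates.
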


\begin{proof}
  $\Pirr\, m_0 \ostar j_0 (j, m)$ iff $m_0 \ostar j_0(j) \nleq_T m$ iff $j \nleq_S m_0 \,\land\, j_0 \nleq_T m$ because $\bot_T \leq_T m_0$ i.e.\ the first claim holds. The second follows via Lemma \ref{lem:nuclear_char} because the pointwise-join of $\Dep$-morphisms amounts to taking their union.
\end{proof}

\begin{definition}
  $\DLD{L}$ is the closure of $\LD{L}$ under unions and intersections.
\end{definition}

\begin{lemma}
  $M(\DLD{L}) = dr_L[\LD{L}]$.
\end{lemma}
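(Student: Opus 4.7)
The key observation is that the fundamental anti-isomorphism $\dr_L\colon\SLD{\rev{L}}^\op \xra{\cong} \SLD{L}$ from~\eqref{eq:drl-iso} already identifies $\dr_L[\LD{\rev{L}}]$ as a set of meet-generators of $\SLD{L}$, by Example~\ref{ex:sld-vs-dlr}. Reading the right-hand side of the stated equality accordingly (type-compatibility of $\dr_L$ forces the argument to sit inside $\SLD{\rev{L}}$), the lemma asserts that these elements remain meet-irreducible after passing to the distributive closure $\DLD{L}$. The plan is to prove this by two inclusions, analyzing upper covers inside the ambient boolean algebra $\BLD{L}$ whose atoms are the $\sim_L$-equivalence classes~\eqref{eq:nerodecong}.

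For the inclusion $\dr_L[\LD{\rev{L}}] \seq M(\DLD{L})$, I would start from the explicit description $m := \dr_L(v^{-1}\rev{L}) = \bigcup\{u^{-1}L : uv^r \notin L\}$ obtained by unfolding~\eqref{eq:drl-iso}, and check that its complement in $\BLD{L}$ is exactly the single $\sim_L$-class $[v^r]_{\sim_L}$. Since $\DLD{L} \seq \BLD{L}$ and $m \in \DLD{L}$, every element of $\DLD{L}$ strictly above $m$ must contain the atom $[v^r]_{\sim_L}$; hence the candidate $m \cup [v^r]_{\sim_L}$ (once shown to lie in $\DLD{L}$) is the unique upper cover of $m$, so $m$ is meet-irreducible.

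For the converse, I would take $m \in M(\DLD{L})$, let $m^+$ denote its unique upper cover, and observe that the difference $m^+ \setminus m$ must be a single $\sim_L$-class $[v^r]_{\sim_L}$: otherwise one could adjoin two distinct $\sim_L$-subclasses separately to $m$ to obtain two incomparable intermediates in $\DLD{L}$, contradicting uniqueness of $m^+$. I would then verify $m = \dr_L(v^{-1}\rev{L})$ by characterizing both sides as the maximum element of $\DLD{L}$ disjoint from $[v^r]_{\sim_L}$, invoking identity~\eqref{eq_DR_vs_dr} to match $\dr_L(v^{-1}\rev{L})$ with this maximum.

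The technical heart of the argument is establishing that $m \cup [v^r]_{\sim_L}$ always lies in $\DLD{L}$; since $\DLD{L}$ is generated from $\LD{L}$ only by unions and intersections, adjoining a single $\sim_L$-atom is not an a priori legal operation. The natural route is to identify $J(\DLD{L})$ with certain minimal intersections $\bigcap F$ of left derivatives and to show that every $\sim_L$-atom admissible as a difference $m^+ \setminus m$ arises from such an intersection modulo $m$. This is where the duality furnished by $\dr_L$ and the combinatorics of the dependency relation $\rDR{L}$ enter, converting meet-irreducibility questions inside $\DLD{L}$ into statements about $\LD{\rev{L}}$ on the other side of the anti-isomorphism.
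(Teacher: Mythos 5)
Your reading of the statement (the right-hand side is really $\dr_L[\LD{\rev{L}}]$) is correct, but the plan has genuine gaps. First, the claim that the complement of $m:=\dr_L(v^{-1}\rev{L})=\bigcup\{u^{-1}L : u\rev{v}\notin L\}$ in $\BLD{L}$ is exactly the single atom $[\rev{v}]_{\sim_L}$ is false in general: $\ol{m}$ contains every word $w$ for which $\{u : w\in u^{-1}L\}\seq\{u:\rev{v}\in u^{-1}L\}$, not only those for which the two sets coincide. (For $L=\{a\}$ over $\Sigma=\{a\}$ and $v=\epsilon$ one gets $m=\{a\}$ and $\ol{m}=[\epsilon]_{\sim_L}\cup[aa]_{\sim_L}$.) Second, and more seriously, the step you yourself single out as the technical heart --- that $m\cup[\rev{v}]_{\sim_L}$ lies in $\DLD{L}$ --- is left entirely open, and the corresponding step in your converse direction (``adjoin two distinct $\sim_L$-subclasses separately to $m$ to obtain two incomparable intermediates in $\DLD{L}$'') silently presupposes exactly the closure property you flagged as problematic. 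As written, neither inclusion is established.

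The difficulty disappears if you do not try to identify the upper cover. Every element of $\DLD{L}$ is a union of intersections of left derivatives, so any $Y\in\DLD{L}$ with $\rev{v}\notin Y$ satisfies $Y\seq m$: each intersection occurring in $Y$ omits $\rev{v}$, hence is contained in some $u^{-1}L$ with $\rev{v}\notin u^{-1}L$, hence in $m$. Thus $m$ is the \emph{largest} element of $\DLD{L}$ not containing $\rev{v}$, so every element of $\DLD{L}$ strictly above $m$ contains $\rev{v}$ and therefore contains $\bigcap\{X\in\DLD{L}: m\seq X,\ \rev{v}\in X\}$; this is a unique upper cover of $m$, and meet-irreducibility follows without ever deciding whether that cover equals $m\cup[\rev{v}]_{\sim_L}$. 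For the reverse inclusion the same description yields $Y=\bigcap\{\dr_L((\rev{w})^{-1}\rev{L}) : w\notin Y\}$ for all $Y\in\DLD{L}$, i.e.\ $\dr_L[\LD{\rev{L}}]$ meet-generates $\DLD{L}$, and every meet-generating set contains $M(\DLD{L})$. This is essentially the paper's argument (phrased there by complementation, via join-generation of $\DLD{\ol{L}}$ by the sets $\ol{m}$); your detour through upper covers in the ambient boolean algebra and the join-irreducibles of $\DLD{L}$ is not needed.
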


\begin{proof}
  By the Dependency theorem $M_v := dr_L(v^{-1} L^r) = \bigcup \{ X \in \LD{L} : v^r \nin X \}$, so that $\overline{M_v} = \bigcap \{ X \in \LD{\overline{L}} : v^r \in X \}$. The $\overline{M_v}$ clearly join-generate $\DLD{\overline{L}}$ hence $dr_L[\LD{L}]$ meet-generates $\DLD{L} \cong (\DLD{\overline{L}})^\pOp$. Finally each $dr_L(v^{-1} L^r)$ is meet-irreducible because it is covered by the smallest $X \supseteq dr_L(v^{-1} L^r)$ such that $v^r \in X$.
\end{proof}
}

\takeout{
\subsection{Standalone bits}

\begin{theorem}
  $\DFA \to \NFA_{\mathbf{syn}}$ is PSPACE-complete.
\end{theorem}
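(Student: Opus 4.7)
The plan is to transfer the $\PSPACE$-completeness proof of Jiang and Ravikumar for $\DFA\to\NFA$~\cite{MinNfaHard1993} to the subatomic setting, mirroring the strategy already used in the proof of \autoref{thm:atomic-subatomic-pspacecomplete}. Both the upper and lower bounds transfer, with a small additional argument on the upper-bound side to handle the subatomicity check in polynomial space.

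For the upper bound, I would guess an nfa $N$ with $k$ states and verify two things in polynomial space: (i) $L(N)=L(A)$, which is the standard emptiness test on the symmetric difference; and (ii) every state language $L(N,q)$ is a union of syntactic congruence classes of $L := L(A)$. The main obstacle is (ii), since the syntactic monoid $\Syn{L}$ can have super-exponential size relative to $|A|$ and so cannot be materialised. I would circumvent this by using the fact that $u\equiv_L v$ iff $u$ and $v$ induce the same transition map on the states of $A$, a polynomial-sized object. Then to refute subatomicity of $N$ at a state $q$ it suffices to guess, letter-by-letter, two words $u,v$, together with the transition maps they induce on $A$ and the subsets of $N$ reached from $q$ by $u$ and by $v$, accepting if the two induced transition maps coincide while exactly one of $u,v$ is in $L(N,q)$. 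Running this nondeterministic check in polynomial space per state and complementing via Savitch's theorem yields $\PSPACE$ membership.

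For the lower bound, I would reuse Jiang and Ravikumar's polynomial-time reduction from \textbf{UNIVERSALITY OF MULTIPLE DFAS}. It produces a pair $(A,k)$ with the dichotomy: either $\ns{L(A)} \geq k+1$, or the explicit $k$-state nfa $N$ from~\cite[Fig.~1]{MinNfaHard1993} accepts $L(A)$ and, after $\epsilon$-elimination, is a \emph{residual} nfa, meaning every state accepts some left derivative $w^{-1}L(A)$. Since $\LD{L(A)} \seq \BLRD{L(A)}$, such an $N$ is already subatomic, hence $\nmu{L(A)}\leq k$ in the YES-case. Combined with $\ns{L}\leq \nmu{L}$ from~\eqref{eq:complexity_ineq}, the dichotomy becomes $\nmu{L(A)} \leq k$ versus $\nmu{L(A)} \geq k+1$, giving precisely the YES/NO answer for $\DFA\to \NFA_{\mathbf{syn}}$. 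The reduction therefore transfers verbatim and completes the proof.
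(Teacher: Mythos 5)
Your lower-bound argument is exactly the paper's: reuse the Jiang--Ravikumar reduction from \textbf{UNIVERSALITY OF MULTIPLE DFAS}, observe that the explicit $k$-state witness nfa is residual after $\epsilon$-elimination, hence subatomic (every left derivative lies in $\BLRD{L}$), and conclude that the same map $(A_1,\ldots,A_n)\mapsto(A,k)$ reduces to $\DFA\to\NFAmu$. Where you genuinely add something is the upper bound: the paper's proof addresses only hardness and leaves $\PSPACE$ membership implicit, whereas membership does \emph{not} follow trivially from that of $\DFA\to\NFA$, since one must additionally certify subatomicity of the guessed nfa without materialising $\Syn{L}$, which can be super-exponential in $|A|$. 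Your solution --- guess $u,v$ letter by letter while tracking the induced transition maps on the dfa and the reached state-sets of $N$, then complement within $\PSPACE$ --- is sound and fills a real gap in the write-up.

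One small correction is needed in that argument: $u\equiv_L v$ is equivalent to $u$ and $v$ inducing the same transition map on the \emph{minimal} dfa for $L$, not on an arbitrary dfa $A$ accepting $L$. For non-minimal $A$, equality of transition maps on $A$ only implies $u\equiv_L v$, so your refutation search would range over a finer relation and could miss witnesses, wrongly certifying a non-subatomic nfa. The fix is immediate --- minimize $A$ in polynomial time before running the check --- but it should be stated.
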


\begin{proof}
 Jiang and Ravikumar proved $\DFA \to \NFA$ is PSPACE-complete in \cite[Theorem 3.2]{jr91}. Modulo dfa minimization, the input is $(\dfa{L}, k)$ and the question is $\ns{L} \leq k$? The crucial PSPACE-hardness reduction uses the \emph{universality of the union of $n$ dfas} problem, a reformulation of \cite[Lemma 3.2.3]{Kozen1977}. Their construction is rather involved so we will refer to the notation from their proof.
 
 Fix dfas $(M_i)_{1 \leq i \leq n}$ over common alphabet $\Sigma$. They define $k := 4 + \sum_{1 \leq i \leq n} |M_i|$ and (implicitly) construct a small dfa $D$ accepting $L := L(D)$ over alphabet $\Sigma_0 \supseteq \Sigma$ such that:
 \[
  k \leq ns(L) \leq k + 1
  \qquad
  ns(L) = k \iff \bigcup_{1\leq i \leq n} L(M_i) = \Sigma^*.
 \]
 The $k+1$ upper bound is witnessed by an explicit nfa denoted $N'$.\footnote{Technically their $N'$ has $\epsilon$-transitions but we may assume they have been eliminated} It is constructed from the disjoint union $\coprod_{1 \leq i \leq n} M_i$ by adding $5$ special states $\{ q_0, f, p_1, p_2, p_3 \}$ and various transitions labelled using $\Sigma_0$.
 In particular, $L(N') = L$. In the crucial situation $\bigcup_{1\leq i \leq n} L(M_i) = \Sigma^*$ they observe $L(N'')= L$ where $N''$ is $N'$ without the state \emph{f}. Importantly they prove $k \leq ns(L)$ regardless of the answer to the original decision problem,  essentially by the biclique edge cover method \cite[Theorem 4]{LowerBoundsHard}. Finally they prove $ns(L) = k \iff \bigcup_{1\leq i \leq n} L(M_i) = \Sigma^*$ which (recalling the witness $N''$) amounts to proving $ns(L) \geq k + 1$ whenever $\bigcup_{1\leq i \leq n} L(M_i) \neq \Sigma^*$.
 
 Having summarised their approach, we observe properties of the nfa $N'' = (X, N_a, I, F)$.
  \begin{enumerate}
    \item $q_0$ occurs in precisely one reachable subset, namely $I$.
    \item We have $N''_c[I] = \{ p_2 \}$, $N''_{cd}[I] = \{ p_3 \}$, $N''_{cdf}[I] = \{ p_1 \}$.
    \item Each state of each dfa $M_i$ occurs in a singleton reachable subset $N''_{a_i w}[I]$ where $w \in \Sigma$.
  \end{enumerate}
  
  Consider the $\JSL$-dfa $\delta := \jslReach{\Pow N''}$ obtained by closing the reachable subset construction $rsc(N'')$ under unions. By the above remarks, the underlying semilattice of $\delta$ has precisely $|N''| = k$ join-irreducibles. Let us assume $ns(L) = k$ or equivalently $L(N'') = L$. Then the behaviour map provides a surjective semilattice morphism $\beh : \delta \epito \SLD{L}$, hence $|J(\SLD{L})| \leq k$. But since the respective nfa of irreducibles $N_{L}$ accepts $L$ and has $|J(\SLD{L})|$ states, $N_{L}$ has $k$ states by the state-minimality of $N''$. Thus $ns(L) = k \iff |N_L| = k \iff n\mu(L) = k$ because $N_L$ is subatomic.
\end{proof}

{\bf TODO}

\begin{remark}[Simple semirings]
  The syntactic monoids of lattice languages are intimately related to the finite \emph{simple} semirings with zero \cite[Theorem 1.7]{ZumbSemiring2007}.
  Let $R$ be a finite semiring with zero; we do not require a unit and may assume its multiplication is non-trivial.\footnote{A semiring with zero has trivial multiplication if $x \cdot y = 0$. The respective finite simple semirings correspond to simple abelian groups and the two element semilattice.}

  \begin{quote}
    Modulo isomorphism, $R$ is simple iff (a) $R = End(V)$ for some finite vector space $V$, or (b) $\ts{\SLD{L}} \subseteq R \subseteq End(\SLD{L})$ for some lattice language $L$.
  \end{quote}

  The characterisation of finite simple rings comes from the Artin-Wedderburn theorem. The other cases amount to \cite[Theorem 1.7]{ZumbSemiring2007}, where $\ts{\SLD{L}}$ is the non-unital transition semiring of the $\JSL$-dfa $\SLD{L}$ (see \cite{mmu21}). If $L$ is a lattice language these semirings are precisely the \emph{dense} cores used by Zumbr\"{a}gel. They are also  non-unital syntactic semirings.
\end{remark}

\subsection{$\Dep$ is equivalent to $\JSL_f$}

We provide a self-contained proof of the equivalence $\Dep \cong \JSL_f$, originally proved in \cite{myers2020representing}. We'll describe witnessing functors $\Open : \Dep \to \JSL_f$, $\Nleq : \JSL_f \to \Dep$ and also $\Pirr : \JSL_f \to \Dep$ i.e.\ the functorial generalisation of Markowsky's poset of irreducibles \cite{MarkowskyLat1975}.

\begin{lemma}[Maximal witnesses]
  $\rR \subseteq \rG_\src \times \rH_\trg$ defines a $\Dep$-morphism $\rR : \rG \to \rH$ iff:
  \[
    \rR_- ; \rH = \rR = \rG ; \rR_+\spbreve
    \qquad\text{where}\quad
    \begin{cases}
      \rR_-(g_s, h_s) ~\mathrel{\vcentcolon\Longleftrightarrow}\ \rH[h_s] \subseteq \rR[g_s]
      \\
      \rR_+(g_t, h_t) ~\mathrel{\vcentcolon\Longleftrightarrow}\ \breve{\rG}[g_t] \subseteq \breve{\rR}[h_t].
    \end{cases}
  \]
  Moreover if $\rR_l ; \rH = \rR = \rG ; \rR_u\spbreve$ then $\rR_l \subseteq \rR_-$ and $\rR_u \subseteq \rR_+$.
\end{lemma}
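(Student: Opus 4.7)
The plan is to prove the ``moreover'' part first, since it will feed directly into the forward direction of the biconditional. The backward direction of the iff is immediate: if $\rR_-;\rH = \rR = \rG;\rR_+\spbreve$, then $\rR_-$ and $\rR_+$ themselves serve as lower and upper witnesses, certifying that $\rR$ is a $\Dep$-morphism. So the real work lies in establishing the equations when some witnesses $\rR_l, \rR_u$ are assumed to exist, and this reduces to showing that $\rR_l \subseteq \rR_-$ and $\rR_u \subseteq \rR_+$.

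For the maximality claim, I would unpack the definitions point-wise. Assuming $\rR_l;\rH = \rR$, I take $(g_s, h_s) \in \rR_l$ and verify $\rH[h_s] \seq \rR[g_s]$: for any $h_t \in \rH[h_s]$ we have $(g_s, h_t) \in \rR_l;\rH = \rR$ by composing the pair $(g_s, h_s)\in \rR_l$ with $(h_s, h_t)\in\rH$. Hence $\rR_l(g_s, h_s)$ entails $\rR_-(g_s, h_s)$. The symmetric argument for $\rR_+$ assumes $\rG;\rR_u\spbreve = \rR$ and, given $(h_t, g_t) \in \rR_u$ and any $g_s \in \breve{\rG}[g_t]$, deduces $(g_s, h_t) \in \rG;\rR_u\spbreve = \rR$, so $g_s \in \breve{\rR}[h_t]$; thus $\rR_u \seq \rR_+$.

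With maximality in hand, I turn to the forward direction of the biconditional. The inclusion $\rR_-;\rH \seq \rR$ holds unconditionally, directly from the definition of $\rR_-$: if $\rR_-(g_s, h_s)$ and $\rH(h_s, h_t)$, then $h_t \in \rH[h_s] \seq \rR[g_s]$ by the defining condition of $\rR_-$, so $\rR(g_s, h_t)$. Conversely, given any witness $\rR_l$ with $\rR = \rR_l;\rH$, monotonicity of relational composition combined with $\rR_l \seq \rR_-$ yields $\rR = \rR_l;\rH \seq \rR_-;\rH$. Thus $\rR_-;\rH = \rR$, and the equation $\rG;\rR_+\spbreve = \rR$ follows by the analogous argument using the maximality of $\rR_+$.

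There is essentially no obstacle here; the lemma is a bookkeeping exercise that makes the ``Galois''-like behaviour of $(\dash)_-$ and $(\dash)_+$ explicit. The one pitfall to watch for is keeping the domain/codomain typing straight when manipulating the converse relations $\rR_+\spbreve$ and $\breve{\rG}$, but this is a matter of care rather than substance.
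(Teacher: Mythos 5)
Your proof is correct and follows essentially the same route as the paper's: both arguments rest on the observation that $\rR_-$ and $\rR_+$ are the maximal solutions of the inclusions $\rR_l;\rH\subseteq\rR$ and $\rG;\rR_u\spbreve\subseteq\rR$, so that the existence of any witness attaining equality forces the maximal ones to attain it as well. You merely spell out the pointwise verifications that the paper leaves implicit; there is no gap.
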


\begin{proof}
  If $\rR_- ; \rH = \rR = \rG ; \rR_+\spbreve$ we have a $\Dep$-morphism $\rR : \rG \to \rH$. Conversely given $\rR : \rG \to \rH$ then $\rR_-$ is the maximal solution of $\rI ; \rH \subseteq \rR$, thus $\rR_- ; \rH = \rR$ because a solution attaining equality exists. Similarly $\rR_+$ is the maximal solution of $\rG ; \rI\spbreve \subseteq \rR$ so $\rG ; \rR_+\spbreve = \rR$.
\end{proof}

\begin{definition}
  \begin{enumerate}
    \item 
    The functor $\Nleq : \JSL_f \to \Dep$ is defined $\Nleq S :=\ \nleq_S\ \subseteq S \times S$ and $\Nleq f(x, y) ~\mathrel{\vcentcolon\Longleftrightarrow}\ f(x) \nleq_S y$ for any semilattice morphism $f : S_1 \to S_2$. Moreover:
    \[
      (\Nleq f)_-(s_1, s_2) \iff s_2 \leq_{S_2} f(s_1)
      \qquad
      (\Nleq f)_+(s_2, s_1) \iff f_*(s_2) \leq_{S_1} s_1.
    \]
    \item
    $\Open : \Dep \to \JSL_f$ is defined $\Open \rG := (O(\rG),\cup,\emptyset)$ where $O(\rG) := \{ \rG[S] : S \subseteq \rG_\src \}$ and also $\Open \rR := \lambda Y.\rR_+\spbreve[Y]$.
    Equivalently $\Open\rR = \lambda Y.\rR_u\spbreve[Y]$ for any upper witness $\rR_u$.

    \item For any relation $\rG$ define $\inte_\rG : \Pow \rG_\trg \to \Pow \rG_\trg$ as $\inte_\rG(Y) := \bigcup \{ \rG[X] : X \subseteq \rG_\src, \rG[X] \subseteq Y \}$.
  \end{enumerate}
\end{definition}

\begin{lemma}
  $\Nleq$ and $\Open$ are well-defined functors.
\end{lemma}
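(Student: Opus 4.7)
The plan is, for each of $\Nleq$ and $\Open$, to verify the action on objects lands in the correct category, to verify the action on morphisms yields a valid morphism, and then to check preservation of identities and composition. I will freely use the preceding Maximal witnesses lemma and the adjoint relationship $f(s)\leq_T t \iff s\leq_S f_*(t)$ characterizing $f_*$ for $f\colon S\to T$ in $\JSLf$.

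For $\Nleq$ on objects, $\,\not\leq_S\,\subseteq S\times S$ is a relation between finite sets, hence a valid $\Dep$-object. For a morphism $f\colon S_1\to S_2$ I claim $\Nleq(f)$ has $f$ (viewed as the functional relation $\{(s,f(s)):s\in S_1\}$) as a lower witness and the converse of $f_*$ as an upper witness. The two factorizations
\[
f\,;\,\not\leq_{S_2}\;=\;\Nleq(f)\;=\;\not\leq_{S_1}\,;\,f_*\spbreve
\]
both follow by directly unfolding the adjoint relation. For functoriality, $\Nleq(\id_S)=\,\not\leq_S\,=\id_{\Nleq(S)}$ is immediate, and for composable $f,g$ the composition formula from \autoref{def:dep} yields $\Nleq(f)\fatsemi\Nleq(g) = f\,;\,g\,;\,\not\leq_{S_3}\,=\,\Nleq(g\circ f)$ by a direct unfolding.

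For $\Open$, the object $\Open(\rR) = \{\rR[X]:X\subseteq \rR_\src\}$ is a finite join-semilattice under union: it is closed under finite unions since $\rR[X_1]\cup\rR[X_2]=\rR[X_1\cup X_2]$, contains $\emptyset=\rR[\emptyset]$, and is finite as a subset of $\Pow(\rR_\trg)$. For a morphism $\rP\colon \rR\to \rS$ with any lower witness $\rP_l$ and upper witness $\rP_u$, applying the factorization identities on an open set $O=\rR[X]$ gives
\[
\rP_u\spbreve[\rR[X]]\;=\;(\rR\,;\,\rP_u\spbreve)[X]\;=\;\rP[X]\;=\;(\rP_l\,;\,\rS)[X]\;=\;\rS[\rP_l[X]]\;\in\;\Open(\rS),
\]
which simultaneously shows $\Open(\rP)$ is independent of the choice of upper witness, lands in $\Open(\rS)$, and preserves unions (since any $\rP_u\spbreve[-]$ commutes with unions).

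For functoriality of $\Open$, the identity morphism $\id_\rR = \rR$ admits $\id_{\rR_\trg}$ as a valid upper witness (since $\rR\,;\,\id\spbreve = \rR$), whence $\Open(\id_\rR)(O)=O$. For composition of $\rP\colon\rR\to\rS$ and $\rQ\colon\rS\to\rT$, the formula $\rP\fatsemi\rQ = \rR\,;\,\rP_u\spbreve\,;\,\rQ_u\spbreve$ from \autoref{def:dep} reveals $\rQ_u\,;\,\rP_u$ as a valid upper witness of $\rP\fatsemi\rQ$, giving
\[
\Open(\rP\fatsemi\rQ)(O)\;=\;(\rQ_u\,;\,\rP_u)\spbreve[O]\;=\;\rQ_u\spbreve[\rP_u\spbreve[O]]\;=\;\Open(\rQ)(\Open(\rP)(O)).
\]
The main obstacle is nothing conceptually deep, but rather keeping witness bookkeeping straight: distinguishing the morphism-relation $\rP$ from its (non-unique) witnesses, and repeatedly interchanging the five equivalent formulations of composition in $\Dep$ via the Maximal witnesses lemma. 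Once this discipline is in place, every verification reduces to a one-line calculation.
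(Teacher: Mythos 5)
Your proof is correct and follows essentially the same route as the paper's: it identifies $f$ and $f_*$ as the lower and upper witnesses of $\Nleq(f)$ via the (contrapositive) adjoint relationship, and verifies that $\Open$ on morphisms is independent of the chosen witness and compatible with the equivalent composition formulas in $\Dep$. The only cosmetic difference is that you establish $\Nleq(g\circ f)=\Nleq(f)\fatsemi\Nleq(g)$ using the lower witnesses via $f\,;\,g\,;\,\not\leq_{S_3}$, whereas the paper argues with the upper witnesses and the self-duality identity $(g\circ f)_*\spbreve=f_*\spbreve\,;\,g_*\spbreve$; both are immediate.
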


\begin{proof}
  Consider $\Nleq$. Given a semilattice morphism $f : S_1 \to S_2$, the adjoint relationship $f(s_1) \leq_{S_2} s_2 \iff s_1 \leq_{S_1} f_*(s_2)$ has contrapositive form $f ; \nleq_{S_2}\ = \Nleq f  =\ \nleq_{S_1} ; f_*\spbreve$, hence $\Nleq f :\ \nleq_{S_1} \to\ \nleq_{S_2}$ is a well-defined $\Dep$-morphism. Then $\Nleq\ id_S =\ \nleq_S\ = id_{\Nleq S}$ and moreover $\Nleq (g \circ f) = \Nleq g \fatsemi \Nleq f$ since $(f ; g)_*\spbreve = (g_* ; f_*)\spbreve = f_*\spbreve ; g_*\spbreve$ by $\JSL_f$ self-duality. The maximal witnesses follow by unwinding the definitions.
  
  Consider $\Open$. Given $\rR : \rG \to \rH$ then $\Open\rR$ is a well-defined function because $\Open\rR(\rG[S]) = \rR_+\spbreve[\rG[S]] = \rH[\rR_-[S]]$ and clearly preserves unions. Moreover $\Open id_\rG = id_{\Open S}$ via upper witness $\rG = \rG ; id_{\rG_\trg}\spbreve$. Finally $\Open (\rR \fatsemi \rS) = \Open \rS \circ \Open\rR$ because $\rS_+ ; \rR_+$ is an upper witness for $\rR \fatsemi \rS$.
\end{proof}

\begin{lemma}
  \label{lem:el_interior_rep}
  $Y \nsubseteq \inte_\rG(\overline{g_t}) \iff g_t \in Y$ for any $g_t \in \rG_\trg$ and $Y \in O(\rG)$.
\end{lemma}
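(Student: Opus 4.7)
The plan is to unwind the definition of $\inte_\rG$ and exploit the observation that it computes the largest open subset of its argument. I would first note that the operator $\inte_\rG$ always produces an open set: by definition $\inte_\rG(Z) = \bigcup \{ \rG[X] : X \seq \rG_\src,\, \rG[X] \seq Z \}$ is a (finite) union of subsets of the form $\rG[X]$, and since $\rG[X_1] \cup \rG[X_2] = \rG[X_1 \cup X_2]$, such unions themselves lie in $O(\rG)$. Moreover, $\inte_\rG(Z) \seq Z$ directly from the definition, and $\inte_\rG$ is characterized by the universal property
\[ Y' \seq \inte_\rG(Z) \quad\text{iff}\quad Y' \in O(\rG) \text{ and } Y' \seq Z, \]
for any $Y' \in O(\rG)$.

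With that in hand, the two directions are straightforward. For $(\Leftarrow)$, assume $g_t \in Y$. Since $\inte_\rG(\overline{g_t}) \seq \overline{g_t} = \rG_\trg \setminus \{g_t\}$, the point $g_t$ cannot belong to $\inte_\rG(\overline{g_t})$, so $Y$ contains an element missing from $\inte_\rG(\overline{g_t})$ and hence $Y \nsubseteq \inte_\rG(\overline{g_t})$. For $(\Rightarrow)$, I would argue contrapositively: if $g_t \notin Y$ then $Y \seq \overline{g_t}$, and since $Y \in O(\rG)$ is an open set contained in $\overline{g_t}$, the universal property above forces $Y \seq \inte_\rG(\overline{g_t})$.

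There is essentially no obstacle here: the only ingredient beyond the definitions is the closure of $O(\rG)$ under finite unions, which is immediate from $\rG[X_1] \cup \rG[X_2] = \rG[X_1 \cup X_2]$. The lemma is really just a restatement of the fact that $\inte_\rG(-)$ is the interior operator associated with the closure system of open sets of $\rG$, applied to the co-singleton $\overline{g_t}$.
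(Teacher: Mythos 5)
Your proof is correct and follows essentially the same route as the paper's: the forward direction via $\inte_\rG(\overline{g_t})\subseteq\overline{g_t}$, and the converse by noting that $g_t\notin Y$ forces $Y\subseteq\overline{g_t}$, whence the open set $Y$ is one of the sets swept up in the union defining $\inte_\rG(\overline{g_t})$. The extra remarks about closure of $O(\rG)$ under unions and the universal property of $\inte_\rG$ are harmless but not needed; the paper dispatches both directions directly from the definition.
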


\begin{proof}
  If $g_t \in Y$ then $Y \nsubseteq \inte_\rG(\overline{g_t})$.
  Conversely if $g_t \nin Y$ then $Y \subseteq \overline{g_t}$ hence $Y \subseteq \inte_\rG(\overline{g_t})$.
\end{proof}

\begin{theorem}[Natural isomorphisms for $\Open$ and $\Nleq$]
  \label{thm:nat_isos_open_nleq}
  \item
  \begin{enumerate}
    \item $\alpha : Id \Rightarrow \Open\circ\Nleq$ defined $\alpha_S := \lambda x.\{ y \in S : x \nleq_S y \}$ is a natural isomorphism.
    \item $\beta : Id \Rightarrow \Nleq\circ\Open$ defined $\beta_\rG(x, Y) ~\mathrel{\vcentcolon\Longleftrightarrow} \rG[x] \nsubseteq Y$ is a natural isomorphism.
  \end{enumerate}
\end{theorem}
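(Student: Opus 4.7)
The plan is to treat $\alpha$ and $\beta$ in turn, exploiting the explicit formulas defining $\Open$ and $\Nleq$. Starting with $\alpha$, I observe that every element of $\Open(\Nleq(S))$ has the form $\nleq_S[X] = \{z \in S : \Lor X \nleq_S z\} = \alpha_S(\Lor X)$, which gives surjectivity of $\alpha_S$; injectivity follows because a finite semilattice is determined by its principal upsets. The identities $\alpha_S(x \vee y) = \alpha_S(x) \cup \alpha_S(y)$ and $\alpha_S(\bot) = \emptyset$ are immediate, so $\alpha_S$ is a $\JSLf$-isomorphism. For naturality along $f \colon S_1 \to S_2$, I expand $\Open(\Nleq f)$ using the formula for the maximal upper witness $(\Nleq f)_+(s_2, s_1) \Lra f_*(s_2) \leq s_1$, yielding $\Open(\Nleq f)(\alpha_{S_1}(x)) = \{s_2 : \exists s_1,\, x \nleq s_1 \wedge f_*(s_2) \leq s_1\} = \{s_2 : x \nleq f_*(s_2)\}$, and by the adjoint relationship this equals $\{s_2 : f(x) \nleq s_2\} = \alpha_{S_2}(f(x))$.

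For $\beta$, the first task is to check that $\beta_\rG \colon \rG \to {\nleq_{\Open(\rG)}}$ is a well-defined $\Dep$-morphism. I exhibit the lower witness $(\beta_\rG)_l \seq \rG_\src \times \Open(\rG)$ defined by $(\beta_\rG)_l(x, Y) \Lra Y \seq \rG[x]$, and the upper witness $(\beta_\rG)_u \seq \Open(\rG) \times \rG_\trg$ defined by $(\beta_\rG)_u(Y, g_t) \Lra g_t \notin Y$. The factorization $(\beta_\rG)_l ; {\nleq_{\Open(\rG)}} = \beta_\rG$ is immediate, picking $Y = \rG[x]$ in the nontrivial direction, and $\rG ; (\beta_\rG)_u\spbreve = \beta_\rG$ is just the definition of $\nsubseteq$.

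Next, I construct an explicit inverse $\gamma_\rG \colon {\nleq_{\Open(\rG)}} \to \rG$ given by $\gamma_\rG(Y, g_t) \Lra g_t \in Y$, with lower witness $(\gamma_\rG)_l(Y, g_s) \Lra \rG[g_s] \seq Y$ and upper witness $(\gamma_\rG)_u(g_t, Y) \Lra g_t \notin Y$. The lower factorization works because openness of $Y$ lets any $g_t \in Y$ be expressed as $g_t \in \rG[g_s]$ for some $g_s$ with $\rG[g_s] \seq Y$. The upper factorization relies on the maximality of $\intop_\rG(\rG_\trg \setminus \{g_t\})$ among open sets avoiding $g_t$ (compare the description of $M(\Open(\rR))$ in \autoref{lem:jirr-mirr-open}): any open $Y_1$ with $g_t \notin Y_1$ satisfies $Y_1 \seq \intop_\rG(\rG_\trg \setminus \{g_t\})$, ruling out a $Y_2$ with $g_t \notin Y_2$ and $Y_1 \nsubseteq Y_2$. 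Finally, the composites $\gamma_\rG \fatsemi \beta_\rG$ and $\beta_\rG \fatsemi \gamma_\rG$ unfold to $\exists g_s,\, \rG[g_s] \seq Y_1 \wedge \rG[g_s] \nsubseteq Y_2$ (equivalent to $Y_1 \nsubseteq Y_2$ by extracting a witness $g_t \in Y_1 \setminus Y_2$) and $\exists Y,\, Y \seq \rG[g_s] \wedge g_t \in Y$ (equivalent to $g_t \in \rG[g_s]$ via $Y = \rG[g_s]$), which are the identities on ${\nleq_{\Open(\rG)}}$ and $\rG$ respectively.

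For naturality of $\beta$ along $\rR \colon \rG \to \rH$, I compare $\rR \fatsemi \beta_\rH$ and $\beta_\rG \fatsemi \Nleq(\Open(\rR))$ as relations on $\rG_\src \times \Open(\rH)$. Choosing a lower witness $\rR_l$ on the left yields $\exists h_s,\, \rR_l(g_s, h_s) \wedge \rH[h_s] \nsubseteq Y$, which collapses to $\rR[g_s] \nsubseteq Y$ via the factorization $\rR = \rR_l ; \rH$. On the right, using $(\beta_\rG)_l$ and the formula $\Nleq(\Open(\rR))(Y_1, Y_2) \Lra \rR_+\spbreve[Y_1] \nsubseteq Y_2$, I obtain $\exists Y_1,\, Y_1 \seq \rG[g_s] \wedge \rR_+\spbreve[Y_1] \nsubseteq Y_2$, which reduces to $\rR[g_s] \nsubseteq Y_2$ via $\rR = \rG ; \rR_+\spbreve$ and the choice $Y_1 = \rG[g_s]$. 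The main delicacy lies in this last step: tracking the two nested layers of witnesses — those of the component $\beta$ and those of $\Nleq(\Open(\rR))$ — and selecting them so that both sides collapse to the common relation $\{(g_s, Y) : \rR[g_s] \nsubseteq Y\}$; once the identity $\rR[g_s] = \rR_+\spbreve[\rG[g_s]]$ is in hand, however, the remaining calculations are routine.
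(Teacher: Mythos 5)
Your treatment of $\alpha$ matches the paper's essentially verbatim, and your verification of the two composites and of naturality for $\beta$ — done by direct relational computation with lower witnesses, where the paper instead invokes faithfulness of $\Open$ to reduce naturality of $\beta$ to that of $\alpha$ — is correct. However, there is a genuine error in your verification that the inverse $\gamma_\rG = \breve{\in}$ is a $\Dep$-morphism: the upper witness you propose, $(\gamma_\rG)_u(g_t,Y) \Lra g_t\notin Y$, does not satisfy the required factorization $\gamma_\rG = {\nleq_{\Open(\rG)}}\, ;\, (\gamma_\rG)_u\spbreve$. Unfolding, the right-hand side relates $(Y_1,g_t)$ iff there exists an open $Y_2$ with $Y_1\nsubseteq Y_2$ and $g_t\notin Y_2$; taking $Y_2=\emptyset$ shows this holds for every nonempty open $Y_1$ and every $g_t$, which is strictly larger than the relation $g_t\in Y_1$ in general (e.g.\ for $\rG_\src=\{a,b\}$, $\rG_\trg=\{1,2\}$, $\rG[a]=\{1\}$, $\rG[b]=\{2\}$, the composite relates $(\{2\},1)$ although $1\notin\{2\}$). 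Your supporting argument — that $g_t\notin Y_1$ forces $Y_1\seq\intop_\rG(\rG_\trg\setminus\{g_t\})$ — only rules out the single choice $Y_2=\intop_\rG(\rG_\trg\setminus\{g_t\})$, not an arbitrary open $Y_2$ avoiding $g_t$.

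The error is local and fixable: replace your witness by $(\gamma_\rG)_u(g_t,Y) \Lra Y = \intop_\rG(\rG_\trg\setminus\{g_t\})$ (the paper's choice), or by the maximal one $(\gamma_\rG)_u(g_t,Y)\Lra \intop_\rG(\rG_\trg\setminus\{g_t\})\seq Y$. With either choice, $\exists Y_2.\,[\,Y_1\nsubseteq Y_2 \wedge (\gamma_\rG)_u(g_t,Y_2)\,]$ reduces to $Y_1\nsubseteq\intop_\rG(\rG_\trg\setminus\{g_t\})$, which is equivalent to $g_t\in Y_1$ for open $Y_1$ by exactly the maximality property you cite. After this repair the remainder of your argument goes through unchanged, since the computations of $\gamma_\rG\fatsemi\beta_\rG$, $\beta_\rG\fatsemi\gamma_\rG$ and of naturality use only the lower witnesses, which you state correctly.
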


\begin{proof}
  Consider $\alpha$.
  Each partial order $(P, \leq_P)$ embeds into its upsets ordered by reverse-inclusion via $p \mapsto\ \up_{\leq_P} p$. Since $P$-joins becomes intersections, $\alpha$ is an injective semilattice morphism. It is surjective because $\nleq_S[X] =\ \nleq_S[\Lor_S X]$ for any $X \subseteq S$, hence it is an isomorphism. Concerning naturality,
  \[
  \begin{tabular}{lll}
    $\Open(\Nleq f) \circ \alpha_{S_1}(x_1)$
    &
    $= \Open(\Nleq f)(\{ y_1 \in S_1 : x_1 \nleq_{S_1} y_1\})$
    \\ &
    $= (\Nleq f)_+\spbreve[\{ y_1 \in S_1 : x_1 \nleq_{S_1} y_1 \}]$
    \\ &
    $= \{ y_2 \in S_2 : \exists y_1 \in S_1.(x_1 \nleq_{S_1} y_1 \text{ and } f_*(y_2) \leq_{S_1} y_1) \}$
    \\ &
    $= \{ y_2 : \neg\forall y_1 \in S_1.(f_*(y_2) \leq_{S_1} y_1 \To x_1 \leq_{S_1} y_1 ) \}$
    \\ &
    $= \{ y_2 : x_1 \nleq_{S_1} f_*(y_2) \}$
    \\ &
    $= \{ y_2 : f(x_1) \nleq_{S_2} y_2 \}$
    & via adjoints
    \\ &
    $= \alpha_{S_2} \circ f(x_1)$.
    \end{tabular}
  \]
  Consider $\beta$. Given any relation $\rG$ we have the commutative $\Rel$-diagram below.
  \[
    \xymatrix@=15pt{
      \rG_\trg \ar[rrr]^-{\nin} &&& O(\rG) \ar[rrr]^-{(\lambda x. \inte_{\rG}(\overline{x}))\spbreve} &&& \rG_\trg
      & \beta_\rG \fatsemi \breve{\in} = (\lambda x.\rG[x]) ; \breve{\in} = \rG = id_\rG
      \\
      \rG_\src \ar[u]^{\rG} \ar[urrr]^{\beta_\rG} \ar[rrr]_-{\lambda x. \rG[x]} &&& O(\rG) \ar[u]_{\nsubseteq} \ar[urrr]^{\breve{\in}} \ar[rrr]_-{\supseteq ; (\lambda x. \rG[x])\spbreve} &&& \rG_\src \ar[u]_\rG
      & \breve{\in} \fatsemi \beta_\rG = \breve{\in} ; \nin\ = \Nleq(\Open\rG) = id_{\Nleq(\Open\rG)}
    }
  \]
  The left square commutes because $\rG[x] \nsubseteq Y$ iff $\exists y. \rG(x, y) \land y \nin Y$, so $\beta_\rG$ is a $\Dep$-morphism. Concerning the right square, $y \in \rG[X]$ iff $\exists x \in X. y \in \rG[x]$ (lower triangle) iff $\rG[X] \nsubseteq \inte_\rG(\overline{y})$ (upper triangle) by Lemma \ref{lem:el_interior_rep}. Then $\breve{\in}$ is a $\Dep$-morphism, so $\beta_\rG$ is a $\Dep$-isomorphism by the compositions above right.  It remains to verify naturality -- the left $\Dep$-diagram below. Now $\Open$ is faithful i.e.\ for $\rR, \rS : \rG \to \rH$, $\rR = \rS \iff \rR_+ = \rS_+ \iff \Open\rR = \Open\rS$ since $\Dep$-morphisms are determined by the action of their maximal witnesses on $\rG[g_s]$'s.
  \[
  \xymatrix@=15pt{
    \rG \ar[rr]^-{\beta_\rG} \ar[d]_{\rR} && \Nleq(\Open\rG) \ar[d]^{\Nleq \circ \Open\rR}
    \\
    \rH \ar[rr]_-{\beta_\rH} && \Nleq(\Open\rH) 
    }
    \qquad
    \xymatrix@=15pt{
    \Open\rG \ar[rr]^-{\Open\, \beta_\rG} \ar[d]_{\Open\rR} && \Nleq(\Open\rG) \ar[d]^{\Open(\Nleq \circ \Open\rR)}
    \\
    \Open\rH \ar[rr]_-{\Open\, \beta_\rH} && \Open(\Nleq(\Open\rH))
  }
  \]
  Then it suffices to establish the commutative diagram above right.
  It instantiates $\alpha$'s naturality i.e.\ $\Open \, \beta_\rG = \lambda Y. \nin[Y] = \lambda Y. \{ Y_0 : Y \nsubseteq Y_0 \} = \alpha_{\Open\rG}$ via the earlier upper witness.
\end{proof}

\begin{corollary}
   We have the categorical equivalence $\Dep \cong \JSL_f$ via $\Open$ and $\Nleq$.
\end{corollary}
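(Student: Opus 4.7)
The corollary is essentially a direct consequence of Theorem \ref{thm:nat_isos_open_nleq}, so the ``proof'' is mostly a matter of unpacking the definition of an equivalence of categories. The plan is to observe that an equivalence between categories $\mathcal{A}$ and $\mathcal{B}$ is by definition a pair of functors $F : \mathcal{A} \to \mathcal{B}$ and $G : \mathcal{B} \to \mathcal{A}$ together with natural isomorphisms $\mathrm{Id}_{\mathcal{A}} \cong G \circ F$ and $\mathrm{Id}_{\mathcal{B}} \cong F \circ G$, and then invoke the theorem with $F = \Nleq$ and $G = \Open$.

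Concretely, I would proceed as follows. First, note that the preceding lemmas establish that $\Open : \Dep \to \JSL_f$ and $\Nleq : \JSL_f \to \Dep$ are well-defined functors. Second, part (1) of Theorem \ref{thm:nat_isos_open_nleq} provides the natural isomorphism $\alpha : \mathrm{Id}_{\JSL_f} \xRightarrow{\cong} \Open \circ \Nleq$, and part (2) provides the natural isomorphism $\beta : \mathrm{Id}_{\Dep} \xRightarrow{\cong} \Nleq \circ \Open$. These two data, together with the functors $\Open$ and $\Nleq$, exactly fit the definition of an equivalence of categories, yielding $\Dep \simeq \JSL_f$.

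Since the content of the equivalence has been done in the preceding theorem, I do not foresee any obstacle here beyond simply citing the two natural isomorphisms in sequence. If desired, one could additionally remark that this shows $\Open$ and $\Nleq$ are quasi-inverse, that both are full, faithful and essentially surjective, and that the equivalence restricts to a bijection between isomorphism classes of objects in each category (finite semilattices on one side, finite relations on the other). One could also briefly remark that $\Pirr$ from Theorem~\ref{thm:jsl_vs_dep} is naturally isomorphic to $\Nleq$ (via the bijective relabeling by join- and meet-irreducibles, as in Remark \ref{rem:join_meet_generators}), which ties this corollary back to the ``fundamental equivalence'' used earlier in the paper.
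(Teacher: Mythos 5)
Your proposal is correct and matches the paper's treatment exactly: the corollary is an immediate consequence of the preceding theorem, which supplies the two natural isomorphisms $\alpha\colon \Id_{\JSLf}\Rightarrow \Open\circ\Nleq$ and $\beta\colon \Id_{\Dep}\Rightarrow \Nleq\circ\Open$, together with the earlier lemma that $\Open$ and $\Nleq$ are well-defined functors. The paper offers no further argument beyond this, so nothing is missing.
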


Finally we provide an alternative equivalence functor $\Pirr : \JSL_f \to \Dep$. It is the domain/codomain restriction of $\Nleq$ to join/meet-irreducibles respectively. 

\begin{definition}
  $\Pirr : \JSL_f \to \Dep$ is defined $\Pirr S :=\ \nleq_S\ \subseteq J(S) \times M(S)$ and $\Pirr f :=\ f ;\ \nleq_{S_2}\ \subseteq J(S_1) \times M(S_2)$ for any semilattice morphism $f : S_1 \to S_2$. Moreover:
  \[
    (\Pirr f)_-(j_1, j_2) \iff j_2 \leq_{S_2} f(j_1)
    \qquad
    (\Pirr f)_+(m_2, m_1) \iff f_*(m_2) \leq_{S_1} m_1.
  \]
\end{definition}

\begin{lemma}
  $\Pirr: \JSL_f \to \Dep$ is a well-defined functor.
\end{lemma}

\begin{proof}
  The witnesses $(\Nleq f)_- ; \nleq_{S_2} = \Nleq f =\ \nleq_{S_1} ; (\Nleq f)_+\spbreve$ restrict to $(\Pirr f)_- ; \Pirr S_2 = \Pirr f =\ \Pirr S_1 ; (\Pirr f)_+\spbreve$. Indeed:
  \[
    \begin{tabular}{c}
      $\exists s_2 \in S_2.[ s_2 \leq_{S_2} f(j) \land s_2 \nleq_{S_2} m ]
      \iff \exists j_2 \in J(S_2).[ j_2 \leq_{S_2} f(j) \land j_2 \nleq_{S_2} m ]$
      \\
      $\exists s_1 \in S_1.[ j \nleq_{S_1} s_1 \land f_*(m) \leq_{S_2} s_1 ]
      \iff \exists m_1 \in M(S_1).[ j \nleq_{S_1} m_1 \land f_*(m) \leq_{S_2} m_1 ]$
    \end{tabular}
  \]
  since $s_2 = \Lor_S \{ j \in J(S_2) : j \leq_S s_2 \}$ and $s_1 = \Land_S \{ m \in M(S_1) : s_1 \leq_S m \}$. Now $\Pirr$ certainly preserves identities. It preserves composition i.e.\ 
  $\Pirr S_1 ; (\Pirr f)_+\spbreve ; (\Pirr g)_+\spbreve
    = \Pirr f ; (\Pirr g)_+\spbreve
    = \Pirr (g \circ f)$ via the equivalences $\exists m \in M(S_2). [f(j_1) \nleq_{S_2} m \land g_*(m_3) \leq_{S_2} m] \iff f(j_1) \nleq_{S_2} g_*(m_3) \iff g(f(j_1)) \nleq_{S_3} m_3$,
  recalling that $M(S_2)$ meet-generates $S_2$.
\end{proof}

\begin{theorem}[Natural isomorphisms $rep$ and $red$]
  \label{thm:nat_isos_open_pirr}
  \item
  \begin{enumerate}
    \item $rep : Id \Rightarrow \Open\Pirr$ with $rep_S := \lambda x.\{ m \in M(S) : x \nleq_S m \}$ is a natural isomorphism.
    \item $red : Id \To \Pirr\Open$ with $red_\rG (x, Y)$ iff $\rG[x] \nsubseteq Y$ is a natural isomorphism.
  \end{enumerate}
  
\end{theorem}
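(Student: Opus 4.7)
The two claims mirror Theorem~\ref{thm:nat_isos_open_nleq}: statement~(1) is the $\Pirr$-analogue of $\alpha$, and statement~(2) is the $\Pirr$-analogue of $\beta$. The only substantive new ingredient over Theorem~\ref{thm:nat_isos_open_nleq} is the use of Markowsky's observation that $J(S)$ join-generates $S$ and $M(S)$ meet-generates $S$, which is what makes restriction to the $(J(S),M(S))$-rectangle lose no information. I would prove~(1) by direct verification of bijectivity, join-preservation, and naturality, and deduce (2) by combining~(1) with the faithfulness of $\Open$ and the description of irreducibles of $\Open\rG$ given in Lemma~\ref{lem:jirr-mirr-open}.

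\textbf{Proof of (1).} First I would show $rep_S(x)\in \Open\,\Pirr S$. Since $x=\bigvee\{j\in J(S):j\leq_S x\}$ we have $x\nleq_S m \iff \exists j\in J(S).\, j\leq_S x\wedge j\nleq_S m$, so
\[
rep_S(x)=(\Pirr S)[\{j\in J(S):j\leq_S x\}],
\]
which is an element of $\Open\,\Pirr S$. Join-preservation is immediate because $x_1\vee x_2\nleq_S m$ iff $x_1\nleq_S m$ or $x_2\nleq_S m$. Injectivity of $rep_S$ uses that $M(S)$ meet-generates $S$: if $rep_S(x_1)=rep_S(x_2)$ then the two elements agree on all meet-irreducibles above them, forcing $x_1=\bigwedge\{m\in M(S):x_1\leq_S m\}=\bigwedge\{m\in M(S):x_2\leq_S m\}=x_2$. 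Surjectivity: an arbitrary element of $\Open\,\Pirr S$ is $(\Pirr S)[X]$ for some $X\seq J(S)$, and the displayed formula above shows this equals $rep_S(\bigvee X)$. For naturality, I compute using the maximal upper witness $(\Pirr f)_+(m_2,m_1)\iff f_*(m_2)\leq_{S_1}m_1$:
\[
\Open(\Pirr f)(rep_{S_1}(x_1))
=\{m_2\in M(S_2):\exists m_1\in M(S_1).\,f_*(m_2)\leq_{S_1} m_1\wedge x_1\nleq_{S_1} m_1\},
\]
which (since $f_*(m_2)=\bigwedge\{m_1\in M(S_1):f_*(m_2)\leq_{S_1}m_1\}$) collapses to $\{m_2:x_1\nleq_{S_1} f_*(m_2)\}$, and by the adjoint relationship this is $\{m_2:f(x_1)\nleq_{S_2}m_2\}=rep_{S_2}(f(x_1))$.

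\textbf{Proof of (2).} The definition of $red_\rG$ is literally $\beta_\rG$ from Theorem~\ref{thm:nat_isos_open_nleq} restricted to the codomain $M(\Open\rG)\seq \Open\rG$. I would first verify that $red_\rG\colon \rG\to \Pirr\,\Open\rG$ is a valid $\Dep$-morphism by exhibiting witnesses: the lower witness $(red_\rG)_l\seq \rG_\src\times J(\Open\rG)$ given by $(x,j)\mapsto j\seq \rG[x]$ and the upper witness $(red_\rG)_u\seq M(\Open\rG)\times \rG_\trg$ given by $(m,y)\mapsto y\notin m$ both factorize $red_\rG$ through $\nleq_{\Open\rG}$ and through $\rG$ respectively; this uses Lemma~\ref{lem:jirr-mirr-open} to describe $J(\Open\rG)$ and $M(\Open\rG)$ concretely. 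To see $red_\rG$ is an isomorphism, I apply $\Open$: since $\Open$ is an equivalence it is faithful and reflects isomorphisms, and a direct computation shows $\Open(red_\rG)=rep_{\Open\rG}$, which is an isomorphism by part~(1). Naturality of $red$ reduces, via the same faithfulness of $\Open$, to naturality of $rep$ (applied to the morphism $\Open\rR\colon \Open\rG\to\Open\rH$), which is already established.

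\textbf{Main obstacle.} The one step that requires care is verifying that the witnesses for $red_\rG$ actually factor it correctly, i.e.\ that the equation $(red_\rG)_l;\nleq_{\Open\rG}=red_\rG=\rG;(red_\rG)_u\spbreve$ holds. This reduces to: for $x\in \rG_\src$ and $Y\in M(\Open\rG)$, the condition $\rG[x]\nsubseteq Y$ is equivalent to the existence of a join-irreducible $j\in J(\Open\rG)$ with $j\seq \rG[x]$ and $j\nsubseteq Y$ (and dually for the upper witness). The forward direction uses that $\Open\rG$ is join-generated by $\{\rG[x']:x'\in \rG_\src\}$ and hence by its join-irreducibles (using Lemma~\ref{lem:jirr-mirr-open}(1) to replace a candidate $\rG[x']$ by an irreducible one below it); the reverse direction is immediate. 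Once this is in hand, the rest of the argument is mechanical.
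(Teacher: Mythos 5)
Your proposal is correct and follows essentially the same route as the paper, which proves this theorem by restricting the arguments for $\alpha$ and $\beta$ from the $\Nleq$/$\Open$ equivalence to join-/meet-irreducibles, using that $M(S)$ meet-generates $S$, that $\Pirr(S)[X]=\rep_S(\bigvee X)$, and the description of $J(\Open\rR)$ and $M(\Open\rR)$ from \autoref{lem:jirr-mirr-open}. Your treatment of part~(2) -- verifying the witnesses directly and then transporting isomorphy and naturality along the faithful functor $\Open$ back to part~(1) -- is exactly the mechanism the paper's (much terser) proof relies on.
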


\begin{proof}
  For $rep$ restrict the argument concerning $\alpha$ from Theorem \ref{thm:nat_isos_open_nleq}. In particular, (a)  $s \mapsto\ \up_S s \cap M(S)$ is also injective, (b) $\Pirr S [X] =\ \up_S \Lor_S X \cap M(S)$ for any $X \subseteq J(S)$.

  Likewise for $red$ restrict $\beta$ to join/meet-irreducibles. Additionally use (a) $J(\Open \rG) \subseteq \{ \rG[g_s] : g_s \in \rG_\src \}$ because $\Open\rG$ is union-generated by the singleton images, (b) $M(\Open\rG) \subseteq \{ \inte_\rG(\overline{g_t}) : g_t \in \rG_\trg \}$ because $\rG[X] = \Land_{\Open\rG} \{ \inte_\rG(\overline{g_t}) : g_t \nin \rG[X] \}$ for any $X \subseteq \rG_\src$.
\end{proof}
}

\bibliographystyle{plainurl}
\bibliography{refs,bib-2019,bib-2020}

\begin{thebibliography}{10}

\bibitem{ammu14_2}
Ji\v{r}\'\i\ Ad\'amek, Stefan Milius, Robert S.~R. Myers, and Henning Urbat.
\newblock On continuous nondeterminism and state minimality.
\newblock In Bart Jacobs, Alexandra Silva, and Sam Staton, editors, {\em
  Proc.~30th Conference on Mathematical Foundations of Programming Science
  (MFPS'14)}, volume 308 of {\em Electron.~Notes Theor.~Comput.~Sci.}, pages
  3--23. Elsevier, 2014.

\bibitem{arbib_manes_1975}
Michael~A. Arbib and Ernest~G. Manes.
\newblock Fuzzy machines in a category.
\newblock {\em Bulletin of the Australian Mathematical Society},
  13(2):169–210, 1975.

\bibitem{bro-pip1999}
Alex Brodsky and Nicholas Pippenger.
\newblock Characterizations of 1-way quantum finite automata.
\newblock {\em SIAM J.~Comput.}, 31:73--91, 1999.

\bibitem{TheoryOfAtomataBrzTamm2014}
Janusz Brzozowski and Hellis Tamm.
\newblock Theory of átomata.
\newblock {\em Theoretical Computer Science}, 539:13--27, 2014.

\bibitem{BrzozowskiDRE1964}
Janusz~A. Brzozowski.
\newblock Derivatives of regular expressions.
\newblock {\em J. ACM}, 11(4):481--494, 1964.

\bibitem{chrobak1986finite}
Marek Chrobak.
\newblock Finite automata and unary languages.
\newblock {\em Theoretical Computer Science}, 47:149--158, 1986.

\bibitem{ResidFSA2001}
Fran{\c{c}}ois Denis, Aur{\'e}lien Lemay, and Alain Terlutte.
\newblock Residual finite state automata.
\newblock In Afonso Ferreira and Horst Reichel, editors, {\em Proc.~18th Annual
  Symposium on Theoretical Aspects of Computer Science (STACS'01)}, pages
  144--157. Springer, 2001.

\bibitem{garey1979computers}
Michael~R. Garey and David~S. Johnson.
\newblock {\em Computers and Intractability: A Guide to the Theory of
  NP-Completeness}.
\newblock W. H. Freeman, 1979.

\bibitem{NfaSat}
Jaco Geldenhuys, Brink van~der Merwe, and Lynette van Zijl.
\newblock Reducing nondeterministic finite automata with {SAT} solvers.
\newblock In Anssi Yli-Jyr{\"a}, Andr{\'a}s Kornai, Jacques Sakarovitch, and
  Bruce Watson, editors, {\em Proc.~8th International Workshop on Finite-State
  Methods and Natural Language Processing (FSMNLP'09)}, pages 81--92. Springer,
  2010.

\bibitem{GratzerGeneralLattice1998}
George Gr{\"a}tzer.
\newblock {\em General Lattice Theory}.
\newblock Birkh{\"a}user Verlag, 2. edition, 1998.

\bibitem{LowerBoundsHard}
Hermann Gruber and Markus Holzer.
\newblock Finding lower bounds for nondeterministic state complexity is hard.
\newblock In Oscar~H. Ibarra and Zhe Dang, editors, {\em Proc.~10th
  International Conference on Developments in Language Theory (DLT'06)}, pages
  363--374. Springer, 2006.

\bibitem{gh07}
Hermann Gruber and Markus Holzer.
\newblock Computational complexity of {NFA} minimization for finite and unary
  languages.
\newblock In Remco Loos, Szil{\'{a}}rd~Zsolt Fazekas, and Carlos
  Mart{\'{\i}}n{-}Vide, editors, {\em Proc.~1st International Conference on
  Language and Automata Theory and Applications (LATA'07)}, pages 261--272.
  Research Group on Mathematical Linguistics, Universitat Rovira i Virgili,
  Tarragona, 2007.

\bibitem{HiggsRoweNuclear1989}
D.A. Higgs and K.A. Rowe.
\newblock Nuclearity in the category of complete semilattices.
\newblock {\em Journal of Pure and Applied Algebra}, 57(1):67 -- 78, 1989.

\bibitem{MinNfaUnary91}
Tao Jiang, Edward McDowell, and B.~Ravikumar.
\newblock The structure and complexity of minimal {NFA}’s over a unary
  alphabet.
\newblock {\em International Journal of Foundations of Computer Science},
  02(02):163--182, 1991.

\bibitem{jk91}
Tao Jiang and B.~Ravikumar.
\newblock Minimal {NFA} problems are hard.
\newblock In Javier~Leach Albert, Burkhard Monien, and Mario~Rodr{\'i}guez
  Artalejo, editors, {\em Proc. 18th International Colloquium on Automata,
  Languages, and Programming (ICALP'91)}, pages 629--640. Springer, 1991.

\bibitem{MinNfaHard1993}
Tao Jiang and B.~Ravikumar.
\newblock Minimal {NFA} problems are hard.
\newblock {\em SIAM Journal on Computing}, 22(6):1117--1141, 1993.

\bibitem{jipsen12}
Peter Jipsen.
\newblock Categories of algebraic contexts equivalent to idempotent semirings
  and domain semirings.
\newblock In Wolfram Kahl and Timothy~G. Griffin, editors, {\em Proc.~13th
  International Conference on Relational and Algebraic Methods in Computer
  Science (RAMiCS'12)}, pages 195--206. Springer, 2012.

\bibitem{Johnstone1982}
Peter~T. Johnstone.
\newblock {\em {Stone spaces}}.
\newblock Cambridge University Press, 1982.

\bibitem{KamedaWeiner1970}
T.~{Kameda} and P.~{Weiner}.
\newblock On the state minimization of nondeterministic finite automata.
\newblock {\em IEEE Transactions on Computers}, C-19(7):617--627, 1970.

\bibitem{BiRFSA2009}
Michel Latteux, Yves Roos, and Alain Terlutte.
\newblock Minimal {NFA} and {biRFSA} languages.
\newblock {\em RAIRO - Theoretical Informatics and Applications - Informatique
  Th\'eorique et Applications}, 43(2):221--237, 2009.

\bibitem{UniversalAut2008}
Sylvain Lombardy and Jacques Sakarovitch.
\newblock The universal automaton.
\newblock In J{\"{o}}rg Flum, Erich Gr{\"{a}}del, and Thomas Wilke, editors,
  {\em Logic and Automata: History and Perspectives (in Honor of Wolfgang
  Thomas)}, volume~2 of {\em Texts in Logic and Games}, pages 457--504.
  Amsterdam University Press, 2008.

\bibitem{maclane}
Saunders {Mac Lane}.
\newblock {\em Categories for the working mathematician}.
\newblock Springer, 2 edition, 1998.

\bibitem{malcher04}
Andreas Malcher.
\newblock Minimizing finite automata is computationally hard.
\newblock {\em Theor. Comput. Sci.}, 327(3):375--390, 2004.

\bibitem{MarkowskyLat1975}
George Markowsky.
\newblock The factorization and representation of lattices.
\newblock {\em Transactions of the American Mathematical Society},
  203:185--200, 1975.

\bibitem{moshier16}
M.~Andrew Moshier.
\newblock A relational category of formal contexts.
\newblock Preprint, 2016.

\bibitem{mamu15}
Robert S.~R. Myers, Ji\v{r}\'i Ad\'amek, Stefan Milius, and Henning Urbat.
\newblock Coalgebraic constructions of canonical nondeterministic automata.
\newblock {\em Theor. Comput. Sci.}, 604:81--101, 2015.

\bibitem{mmu21}
Robert S.~R. Myers, Stefan Milius, and Henning Urbat.
\newblock Nondeterministic syntactic complexity.
\newblock In Stefan Kiefer and Christine Tasson, editors, {\em Proc.~24th
  International Conference on Foundations of Software Science and Computation
  Structures (FoSSaCS'21)}. Springer, 2021.
\newblock URL: \url{https://arxiv.org/abs/2101.03039}.

\bibitem{pin20}
Jean-{\'E}ric Pin.
\newblock Mathematical foundations of automata theory.
\newblock Available at
  \url{http://www.liafa.jussieu.fr/~jep/PDF/MPRI/MPRI.pdf}, September 2020.

\bibitem{RoweNuclear1988}
K.~A. Rowe.
\newblock Nuclearity.
\newblock {\em Canadian Mathematical Bulletin}, 31(2):227–235, 1988.

\bibitem{tamm16}
Hellis Tamm.
\newblock New interpretation and generalization of the {K}ameda-{W}einer
  method.
\newblock In Ioannis Chatzigiannakis, Michael Mitzenmacher, Yuval Rabani, and
  Davide Sangiorgi, editors, {\em Proc.~43rd International Colloquium on
  Automata, Languages, and Programming (ICALP'16)}, volume~55 of {\em LIPIcs},
  pages 116:1--116:12. Schloss Dagstuhl - Leibniz-Zentrum f{\"{u}}r Informatik,
  2016.

\bibitem{TAMM2004135}
Hellis Tamm and Esko Ukkonen.
\newblock Bideterministic automata and minimal representations of regular
  languages.
\newblock {\em Theoretical Computer Science}, 328(1):135--149, 2004.

\bibitem{ZumbSemiring2007}
Jens Zumbrägel.
\newblock Classification of finite congruence-simple semirings with zero.
\newblock {\em Journal of Algebra and Its Applications}, 7, 03 2007.

\end{thebibliography}

\clearpage
\appendix 

\section*{Appendix: Omitted Proofs}
This appendix provides all proofs omitted for space reasons.

\section{Details for \autoref{sec:dep}}
\begin{notation}
Recall that the self-duality $\JSLf^\op\xra{\simeq} \JSLf$ associates to every $f\colon S\to T$ the \emph{dual morphism} $f_*\colon S^\op\to T^\op$ mapping each $t\in T$ to the $\leq_S$-largest $s\in S$ with $f(s)\leq_T t$, and that we have the adjoint relationship
\[ f(s)\leq_T t \;\iff\; s\leq_S f_*(t) \qquad\text{for all $s\in S$, $t\in T$}. \]
We shall often use it in its contrapositive form:
\[ f(s)\not\leq_T t \;\iff\; s\not\leq_S f_*(t) \qquad\text{for all $s\in S$, $t\in T$}. \]
\end{notation}

We first show that relations $\rP_l$ and $\rP_u$ witnessing that $\rP\colon \rR\to\rS$ is a $\Dep$-morphism can always be replaced by \emph{maximal witnesses}, as stated in \autoref{rem:dep-witnesses}.

\begin{lemma}[Maximal witnesses]
Let $\rR$ and $\rS$ be relations between finite sets. Then
  $\rP \subseteq \rR_\src \times \rS_\trg$ defines a $\Dep$-morphism $\rP\colon \rR \to \rS$ if and only if
  \[
    \rP_- ; \rS = \rP = \rR ; \rP_+\spbreve,
    \qquad\text{where}\quad
    \begin{cases}
      \rP_-(r_\src, s_\src) ~\mathrel{\vcentcolon\Longleftrightarrow}\ \rS[s_\src] \subseteq \rP[r_\src],
      \\
      \rP_+(s_\trg, r_\trg) ~\mathrel{\vcentcolon\Longleftrightarrow}\ \breve{\rR}[r_\trg] \subseteq \breve{\rP}[s_\trg].
    \end{cases}
  \]
  Moreover if $\rP_l ; \rS = \rP = \rR ; \rP_u\spbreve$ then $\rP_l \subseteq \rP_-$ and $\rP_u \subseteq \rP_+$.
\end{lemma}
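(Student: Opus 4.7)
The plan is to reduce everything to a single observation characterizing $\rP_-$ and $\rP_+$ as the largest relations satisfying the appropriate sub-inclusion, so that both directions of the equivalence and the maximality clause fall out together.

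First I would establish the key claim: for any $\rP_l \subseteq \rR_\src \times \rS_\src$, the inclusion $\rP_l ; \rS \subseteq \rP$ holds if and only if $\rP_l \subseteq \rP_-$. This is a direct quantifier manipulation — unfolding $(\rP_l ; \rS)(r_\src, s_\trg)$ as $\exists s_\src.\,\rP_l(r_\src, s_\src) \wedge \rS(s_\src, s_\trg)$ rewrites $\rP_l ; \rS \subseteq \rP$ as: whenever $\rP_l(r_\src, s_\src)$ then $\rS[s_\src] \subseteq \rP[r_\src]$, which is exactly the defining condition of $\rP_-$. The dual claim $\rR ; \rP_u\spbreve \subseteq \rP \iff \rP_u \subseteq \rP_+$ is entirely analogous; the converse $(\dash)\spbreve$ merely reshuffles one quantifier, flipping the roles of $\rR, \rP$ into $\breve{\rR}, \breve{\rP}$.

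Taking $\rP_l := \rP_-$ and $\rP_u := \rP_+$ in the key claims yields the unconditional inclusions $\rP_- ; \rS \subseteq \rP$ and $\rR ; \rP_+\spbreve \subseteq \rP$. With these in hand the backward direction is immediate: if $\rP_- ; \rS = \rP = \rR ; \rP_+\spbreve$, then $\rP_-$ and $\rP_+$ themselves witness $\rP$ as a $\Dep$-morphism. For the forward direction, assume $\rP$ is a $\Dep$-morphism with witnesses $\rP_l$ and $\rP_u$. The equality $\rP_l ; \rS = \rP$ in particular gives $\rP_l ; \rS \subseteq \rP$, hence $\rP_l \subseteq \rP_-$ by the key claim, and so $\rP = \rP_l ; \rS \subseteq \rP_- ; \rS \subseteq \rP$, forcing $\rP_- ; \rS = \rP$. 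The equality $\rR ; \rP_+\spbreve = \rP$ follows by the dual argument.

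Finally, the ``moreover'' clause is nothing more than the two key claims applied to arbitrary witnesses: from $\rP_l ; \rS = \rP$ one extracts $\rP_l ; \rS \subseteq \rP$ and reads off $\rP_l \subseteq \rP_-$; similarly $\rP_u \subseteq \rP_+$. I do not anticipate any genuine obstacle in carrying this out; the only thing to watch is tracking the converse operation on the upper side, since the definition of $\rP_+$ is phrased using $\breve{\rR}$ and $\breve{\rP}$ while the factorization condition is phrased using $\rP_u\spbreve$.
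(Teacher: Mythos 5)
Your proposal is correct and matches the paper's argument: the paper likewise observes that $\rP_-$ (resp.\ $\rP_+$) is the maximal solution of $\rI;\rS\subseteq\rP$ (resp.\ $\rR;\rI\spbreve\subseteq\rP$), so that the existence of any witness attaining equality forces the maximal one to attain it as well. Your write-up merely spells out the quantifier manipulation and the monotonicity step more explicitly.
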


\begin{proof}
  If $\rP_- ; \rS = \rP = \rR ; \rP_+\spbreve$ we have a $\Dep$-morphism $\rP\colon \rR \to \rS$. Conversely, given $\rP : \rR \to \rS$ then $\rP_-$ is the maximal solution of $\rI ; \rS \subseteq \rP$. Thus $\rP_- ; \rS = \rP$ because a solution attaining equality exists. Similarly, $\rP_+$ is the maximal solution of $\rR ; \rI\spbreve \subseteq \rP$, so $\rR ; \rP_+\spbreve = \rP$.
\end{proof}

\section*{Proof of \autoref{thm:jsl_vs_dep}}
To prove the equivalence, it turns out to be convenient to replace $\Pirr$ with a naturally isomorphic functor $\Nleq\colon \JSLf\xra{\simeq} \Dep$ that considers the full complemented order $\not\leq_S\leq S\times S$ of a finite semilattice rather than its restriction to $J(S)\times M(S)$:

\begin{defn}
    The functor $\Nleq\colon  \JSL_f \to \Dep$ maps $S\in \JSLf$ to the $\Dep$-object
\[\Nleq(S) :=\ \nleq_S\ \subseteq S \times S\]
and a $\JSLf$-morphism $f\colon S_1\to S_2$ to the $\Dep$-morphism
\[ \Nleq(f)\colon \Nleq(S_1)\to \Nleq(S_2),\qquad  \Nleq(f)(s_1, s_2) ~\mathrel{\vcentcolon\Longleftrightarrow}\ f(s_1) \nleq_{S_2} s_2\quad\text{for $s_1\in S_1$, $s_2\in S_2$}.\]
\end{defn}

\begin{rem}\label{rem:witnesses}
The maximal witnesses for $\Nleq(f)$ are given by 
    \[
      (\Nleq f)_-(s_1, s_2) \iff s_2 \leq_{S_2} f(s_1)
      \qquad
      (\Nleq f)_+(s_2, s_1) \iff f_*(s_2) \leq_{S_1} s_1.
    \]
\end{rem}

\begin{rem}
Recall that for any $\Dep$-morphism $\rP\colon \rR\to \rS$ the map $\Open(\rP)\colon \Open(\rR)\to \Open(\rS)$ is given by $\Open(\rR)(O)=\rP\spbreve_+[O]$. One may replace $\rP_+$ by an arbitrary upper witness $\rP_u$ of $\rP$. In fact, if $O=\rR[X]$ for $X\seq \rR_\src$, then
\[ \rP\spbreve_+[O] = \rR;\rP\spbreve_+[X] = \rP[X] = \rR;\rP\spbreve_u[X] = \rP\spbreve_u[O].  \]
\end{rem}

\begin{lemma}\label{lem:nleq-open-welldefined}
  $\Nleq\colon \JSLf\to \Dep$ and $\Open\colon \Dep\to \JSLf$ are well-defined functors.
\end{lemma}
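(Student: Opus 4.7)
The plan is to check the object and morphism assignments, then the two functorial axioms, for $\Nleq$ and $\Open$ in turn. On objects both functors are manifestly well-defined: $\Nleq(S) \subseteq S\times S$ is a relation between finite sets, and $\Open(\rR) \subseteq \Pow(\rR_\trg)$ is a finite join-semilattice because $\rR[X]\cup \rR[Y]=\rR[X\cup Y]$ and $\rR[\emptyset]=\emptyset$, with finiteness inherited from $\rR_\src$.

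The key calculation is for $\Nleq$ on morphisms. Given $f\colon S_1\to S_2$ in $\JSLf$, I will exhibit $f$ (as a functional relation) as a lower witness and $f_*$ as an upper witness of $\Nleq(f)$. Indeed, the contrapositive form of the adjoint relationship yields
\[
 f ;\,\not\leq_{S_2}\ =\ \Nleq(f)\ =\ \not\leq_{S_1};\,f_*\spbreve,
\]
so $\Nleq(f)$ is a genuine $\Dep$-morphism, and the description of its maximal witnesses in \autoref{rem:witnesses} drops out of the usual argument that functional relations already attain maximality. Preservation of identities is immediate, as $\Nleq(\id_S)(s,s')\Lra s\not\leq_S s'$, i.e.\ the identity morphism on $\Nleq(S)$. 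For composition $g\circ f$ where $g\colon S_2\to S_3$, picking $f$ as the lower witness of $\Nleq(f)$ gives
\[
 \Nleq(f)\fatsemi \Nleq(g)\ =\ f;\Nleq(g)\ =\ f;g;\,\not\leq_{S_3}\ =\ \Nleq(g\o f).
\]

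For $\Open$ on morphisms, given $\rP\colon\rR\to\rS$ I will use that the formula $\Open(\rP)(O) = \rP_+\spbreve[O]$ may equivalently be computed with any upper witness $\rP_u$: for $O=\rR[X]$ the factorization $\rP=\rR;\rP_u\spbreve$ gives $\rP_u\spbreve[O]=\rP[X]\in \Open(\rS)$, which also confirms well-definedness on $\Open(\rR)$ (the result depends only on $O$). Preservation of binary joins and bottom is automatic because taking images of relations commutes with unions. Identity preservation follows from the observation that $\id_\rR=\rR$ admits $\id_{\rR_\trg}$ as an upper witness, hence $\Open(\id_\rR)(O)=O$. For composition, if $\rP_u$ and $\rQ_u$ are upper witnesses of $\rP\colon \rR\to\rS$ and $\rQ\colon\rS\to\rT$, the factorization
\[
 \rP\fatsemi\rQ\ =\ \rR;\rP_u\spbreve;\rQ_u\spbreve\ =\ \rR;(\rQ_u;\rP_u)\spbreve
\]
exhibits $\rQ_u;\rP_u$ as an upper witness of $\rP\fatsemi\rQ$, so
\[
 \Open(\rP\fatsemi\rQ)(O)\ =\ (\rQ_u;\rP_u)\spbreve[O]\ =\ \rQ_u\spbreve\big[\rP_u\spbreve[O]\big]\ =\ \Open(\rQ)\o\Open(\rP)(O).
\]

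The only mildly non-obvious point is the adjoint computation underpinning $\Nleq$, which is the one place where the $\JSLf$-self-duality enters; everything else is a direct unwinding of definitions and standard relational manipulations.
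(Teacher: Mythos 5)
Your proof is correct and follows essentially the same route as the paper's: the contrapositive adjoint relationship yields $f$ and $f_*$ as lower and upper witnesses for $\Nleq(f)$, and functoriality of $\Open$ is verified through (arbitrary) upper witnesses exactly as in the paper. The only cosmetic differences are that you compute $\Nleq(g\circ f)$ via the lower-witness leg $f;\Nleq(g)$ where the paper passes through the dual morphisms $f_*,g_*$, and that your claim $\rP_u\spbreve[O]=\rP[X]\in\Open(\rS)$ silently relies on the lower-witness factorization $\rP=\rP_l;\rS$ (giving $\rP[X]=\rS[\rP_l[X]]$), which deserves one explicit word.
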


\begin{proof}
\begin{enumerate}
\item $\Nleq$ is well-defined: Given a semilattice morphism $f\colon S_1 \to S_2$, the contrapositive adjoint relationship $f(s_1) \not\leq_{S_2} s_2 \iff s_1 \not\leq_{S_1} f_*(s_2)$ shows $f ; \nleq_{S_2}\ = \Nleq(f)  =\ \nleq_{S_1} ; f_*\spbreve$, hence $\Nleq(f)\colon \nleq_{S_1} \to\ \nleq_{S_2}$ is a $\Dep$-morphism with witnesses $f$ and $f_*$. Moreover, 
\[\Nleq(\id_S) = \nleq_S = \id_{\Nleq(S)}\quad \text{and}\quad \Nleq (g \circ f) = \Nleq(f) \fatsemi \Nleq(g)\]
where the second equality uses that $(f ; g)_*\spbreve = (g_* ; f_*)\spbreve = f_*\spbreve ; g_*\spbreve$ by the self-duality of $\JSLf$.
  
\item $\Open$ is well-defined: Given $\rP\colon \rR \to \rS$, the map $\Open(\rP)\colon \Open(\rR)\to \Open(\rS)$ is a well-defined semilattice morphism because
\[ \Open(\rP)(\rR[X]) = \rP_+\spbreve[\rR[X]] = \rS[\rP_-[X]]\]
is an open set of $\rS$ for each $X\seq \rR_\src$, and moreover
$\Open(\rP)$ clearly preserves unions. We have $\Open(\id_\rR) = \id_{\Open(\rR)}$ since
\[ \Open(\id_\rR)(\rR[X]) = \rR_+\spbreve[X] = \id_{\rR_\trg}[X] = X \qquad\text{for $X\seq \rR_\src$},  \]
and $\Open (\rP \fatsemi \rQ) = \Open \rQ \circ \Open\rP$ because 
\[ \Open(\rP\fatsemi \rQ)(\rR[X]) = [\rP\fatsemi \rQ]_+[X] = \rQ_+[\rP_+[X]] = \Open(\rQ)\circ \Open(\rP)(\rR[X]) \quad\text{for $X\seq \rR_\src$}.\]
\end{enumerate}
\end{proof}

%

\begin{theorem}\label{thm:nleq_open_equiv}
The categories $\JSLf$ and $\Dep$ are equivalent via $\Nleq$ and $\Open$.  The natural isomorphisms $\alpha\colon \Id_{\JSLf} \Rightarrow \Open\circ\Nleq$ and $\beta\colon \Id_\Dep \Rightarrow \Nleq\circ\Open$ are given by 
\[
\begin{array}{llll}
\alpha_S\colon S\xra{\cong} \Open(\Nleq(S)),&  x\mapsto \{ y\in S: x \not\leq_S y \}&&(S\in \JSLf),\\
\beta_\rR\colon \rR\xra{\cong} \Nleq(\Open(\rR)),&  
\beta_\rR(x, Y) ~\mathrel{\vcentcolon\Longleftrightarrow}\ \rR[x] \nsubseteq Y && (\rR\in \Dep).
\end{array}
\]
\end{theorem}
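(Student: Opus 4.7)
Since Lemma~\ref{lem:nleq-open-welldefined} already yields that $\Nleq$ and $\Open$ are functors, the theorem reduces to showing that the stated $\alpha_S$ and $\beta_\rR$ are isomorphisms in their respective categories and are natural in $S$ and $\rR$. I would treat $\alpha$ first (the easier semilattice side), then $\beta$ (the main work).

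For $\alpha_S$ the image $\alpha_S(x) = \{y : x \not\leq_S y\}$ is the complement of the principal upset $\uparrow\!x$ and hence an open set of $\Nleq(S)$. Join-preservation is immediate from $x \vee y \not\leq_S z \iff x \not\leq_S z$ or $y \not\leq_S z$; injectivity holds because a finite poset is determined by its principal upsets; and surjectivity follows from $\Nleq(S)[X] = \alpha_S(\bigvee X)$ for every $X \seq S$. For naturality, given a $\JSLf$-morphism $f\colon S_1 \to S_2$, a short calculation using the maximal upper witness $(\Nleq f)_+(z, y) \iff f_*(z) \leq_{S_1} y$ from \autoref{rem:witnesses} together with the contrapositive adjoint relationship $f(x)\not\leq_{S_2} z \iff x \not\leq_{S_1} f_*(z)$ reduces $\Open(\Nleq(f))(\alpha_{S_1}(x))$ to $\{z \in S_2 : f(x) \not\leq_{S_2} z\} = \alpha_{S_2}(f(x))$.

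For $\beta_\rR$ I would first verify it is a $\Dep$-morphism by exhibiting explicit witnesses. Take the lower witness $(\beta_\rR)_l(x, Y) :\iff Y \seq \rR[x]$ and the upper witness $(\beta_\rR)_u(Y, y) :\iff y \notin Y$; both factorizations $(\beta_\rR)_l ; \nleq_{\Open(\rR)}$ and $\rR ; (\beta_\rR)_u\spbreve$ recover the relation $\rR[x] \not\seq Y$. For the inverse, define $\beta_\rR^{-1}(Y, y) :\iff y \in Y$, with lower witness $(\beta_\rR^{-1})_l(Y, x) :\iff \rR[x] \seq Y$ and upper witness $(\beta_\rR^{-1})_u(y, Y) :\iff y \in Y$. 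The identities $\beta_\rR \fatsemi \beta_\rR^{-1} = \id_\rR$ and $\beta_\rR^{-1} \fatsemi \beta_\rR = \id_{\Nleq(\Open(\rR))}$ then reduce to two elementary facts: $\rR[x]$ is itself open (hence a candidate $Y$ witnessing membership of any of its points), and every open set $Y \in \Open(\rR)$ decomposes as $\bigcup_{x \in X} \rR[x]$, so any $y \in Y$ lies in some $\rR[x] \seq Y$.

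The main obstacle will be naturality of $\beta$: for every $\Dep$-morphism $\rP\colon \rR \to \rS$ one must show $\rP \fatsemi \beta_\rS = \beta_\rR \fatsemi \Nleq(\Open(\rP))$. My plan is to exploit the flexibility built into the definition of $\fatsemi$, writing the left-hand side as $\rP ; (\beta_\rS)_u\spbreve$ (which immediately yields the relation $\rP[x] \not\seq Y'$ for $x \in \rR_\src$ and $Y' \in \Open(\rS)$) and the right-hand side as $(\beta_\rR)_l ; \Nleq(\Open(\rP))$. The latter unfolds to the existence of an open $Y \seq \rR[x]$ with $\Open(\rP)(Y) \not\seq Y'$; using monotonicity of $\Open(\rP)$ together with the identity $\Open(\rP)(\rR[x]) = \rP[x]$ (a direct consequence of the factorization $\rP_-;\rS = \rP$), both directions collapse to $\rP[x] \not\seq Y'$. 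Choosing the right formulation among the five equivalent descriptions of $\fatsemi$ in \autoref{def:dep} is essential here to avoid having to compute the dual $\Open(\rP)_*$ explicitly.
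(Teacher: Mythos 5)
Your treatment of $\alpha$, your witnesses for $\beta_\rR$ itself, and your direct computation of naturality of $\beta$ (which is a nice, more elementary alternative to the paper's reduction to naturality of $\alpha$ via faithfulness of $\Open$) are all fine. There is, however, one concrete step that fails: the upper witness you propose for $\beta_\rR^{-1}=\breve{\in}\colon \Nleq(\Open(\rR))\to\rR$, namely $(\beta_\rR^{-1})_u(y,Y):\iff y\in Y$, does not satisfy the required factorization $\Nleq(\Open(\rR)) \mathbin{;} (\beta_\rR^{-1})_u\spbreve = \breve{\in}$. Unfolding, that composite relates $Y$ to $y$ iff there exists an open $Y'$ with $Y\nsubseteq Y'$ and $y\in Y'$, which is neither implied by nor implies $y\in Y$. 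For a counterexample to one inclusion take $\rR=\{(x,y)\}$ on singletons, $Y=\{y\}$: then $y\in Y$ but the only open set $Y'$ with $Y\nsubseteq Y'$ is $\emptyset$, which does not contain $y$. For the other inclusion take $\rR[x_1]=\{y_1\}$, $\rR[x_2]=\{y_2\}$, $Y=\{y_1\}$, $Y'=\{y_2\}$, $y=y_2$: the composite relates $Y$ to $y$ although $y\notin Y$. Since a $\Dep$-morphism requires \emph{both} a lower and an upper witness, your argument does not yet establish that $\breve{\in}$ is a morphism, and this is precisely the nontrivial point of the whole proof.

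The repair is the interior operator: the correct (and in fact functional) upper witness sends $y\in\rR_\trg$ to $\intop_{\rR}(\rR_\trg\setminus\{y\})$, the largest open set omitting $y$. One then checks that for open $Y$ one has $Y\nsubseteq \intop_{\rR}(\rR_\trg\setminus\{y\})$ iff $y\in Y$ (if $y\notin Y$ then $Y$ is an open subset of $\rR_\trg\setminus\{y\}$, hence contained in its interior; the converse is immediate), which gives exactly $\Nleq(\Open(\rR))\mathbin{;}(\beta_\rR^{-1})_u\spbreve=\breve{\in}$. This is exactly how the paper's proof proceeds, and it is also the source of the description of the meet-irreducibles of $\Open(\rR)$ in \autoref{lem:jirr-mirr-open}. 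With that single substitution the rest of your argument (the two composites being identities, and naturality) goes through as you describe.
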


\begin{proof}
\begin{enumerate}
\item $\alpha$ is a natural isomorphism: We first show that $\alpha_S$ is a semilattice isomorphism. Indeed, $\alpha_S$ preserves joins because $x\vee x'\not \leq_S y$ iff $x'\not\leq_S y$ or $x'\not\leq_S y$ for any $x,x',y\in S$; it is injective because in any poset every element $x$ is determined by the elements above $x$; it is surjective because $\mathop{\not\leq_S}[X]=\alpha_S(\bigvee X)$ for every $X\seq S$. To prove naturality, we compute for every $f\colon S_1\to S_2$ in $\JSLf$ and $x_1\in S_1$:
  \begin{align*}
    \Open(\Nleq(f)) \circ \alpha_{S_1}(x_1)
    &
    = \Open(\Nleq(f))(\{ y_1 \in S_1 : x_1 \nleq_{S_1} y_1\})
    \\ &
    = (\Nleq f)_+\spbreve[\{ y_1 \in S_1 : x_1 \nleq_{S_1} y_1 \}]
    \\ &
    = \{ y_2 \in S_2 : \exists y_1 \in S_1.(x_1 \nleq_{S_1} y_1 \text{ and } f_*(y_2) \leq_{S_1} y_1) \}
    \\ &
    = \{ y_2 : \neg\forall y_1 \in S_1.(f_*(y_2) \leq_{S_1} y_1 \To x_1 \leq_{S_1} y_1 ) \}
    \\ &
    = \{ y_2 : x_1 \nleq_{S_1} f_*(y_2) \}
    \\ &
    = \{ y_2 : f(x_1) \nleq_{S_2} y_2 \}
    \\ &
    = \alpha_{S_2} \circ f(x_1).
    \end{align*}
\item  $\beta$ is a natural isomorphism: For any $\rR\in \Dep$ the $\Rel$-diagram below commutes.
  \begin{equation}\label{eq:isodiag}
    \xymatrix{
      \rR_\trg \ar[rrr]^-{\nin} &&& \Open(\rR) \ar[rrr]^-{(\lambda y. \inte_{\rR}(\rR_\trg\setminus \{y\}))\spbreve} &&& \rR_\trg
      \\
      \rR_\src \ar[u]^{\rR} \ar[urrr]^{\beta_\rR} \ar[rrr]_-{\lambda x. \rR[x]} &&& \Open(\rR) \ar[u]_{\nsubseteq} \ar[urrr]^{\breve{\in}} \ar[rrr]_-{\supseteq ; (\lambda x. \rR[x])\spbreve} &&& \rR_\src \ar[u]_\rR
    }
  \end{equation}
In fact, both triangles in the left square commute by definition of $\beta_\rR$; the lower triangle in the right square commutes trivially, and the the upper triangle by definition of $\intop_{\rR}$. This proves that $\beta_{\rR}\colon \rR\to \Nleq(\Open(\rR))$ and $\breve{\epsilon}\colon \Nleq(\Open(\rR)) \to \rR$ are $\Dep$-morphisms. Moreover, $\breve{\in}$ is the inverse of $\beta_{\rR}$ in $\Dep$: we have
\[ \beta_\rR \fatsemi \breve{\in} = (\lambda x.\rR[x]) ; \breve{\in} = \rR = \id_\rR \quad\text{and}\quad \breve{\in} \fatsemi \beta_\rR = \breve{\in} ; \nin\ = \Nleq(\Open(\rR)) = \id_{\Nleq(\Open(\rR))}.\]
Thus, $\beta_{\rR}$ is a $\Dep$-isomorphism.  It remains to verify naturality, i.e.\ that the left $\Dep$-diagram below commutes for all $\rP\colon \rR\to \rS$.
  \[
  \xymatrix@=15pt{
    \rR \ar[rr]^-{\beta_\rR} \ar[d]_{\rP} && \Nleq(\Open(\rR)) \ar[d]^{\Nleq \circ \Open(\rP)}
    \\
    \rS \ar[rr]_-{\beta_\rS} && \Nleq(\Open(\rS)) 
    }
    \qquad
    \xymatrix@=15pt{
    \Open(\rR) \ar[rr]^-{\Open(\beta_\rR)} \ar[d]_{\Open(\rP)} && \Open(\Nleq(\Open(\rR))) \ar[d]^{\Open(\Nleq \circ \Open(\rP))}
    \\
    \Open(\rS) \ar[rr]_-{\Open(\beta_\rS)} && \Open(\Nleq(\Open(\rS)))
  }
  \]
 Note that the functor $\Open$ is faithful: for any two parallel $\Dep$-morphisms $\rP$, $\rQ$ we have
\[ \rP = \rQ \iff \rP_+ = \rQ_+ \iff \Open(\rP) = \Open(\rQ).\]
Thus, it suffices to show commutativity of the right diagram. For each $Y\in \Open(\rR)$ we have
\[ \Open(\beta_\rR)(Y) = (\beta_{\rR})\spbreve_+[Y] = \not\in[Y] = \{ Y_0\in \Open(\rR) : Y\not\seq Y_0 \} = \alpha_{\rR}. 
\]
This shows $\Open(\beta_{\rR})=\alpha_{\Open(\rR)}$ and analogously $\Open(\beta_\rS) =\alpha_{\Open(\rS)}$, so the commutativity of the right diagram follows from the naturality of $\alpha$.\qedhere
\end{enumerate}
\end{proof}
Finally, we show that also $\Pirr\colon \JSL_f \to \Dep$, the restriction of $\Nleq$ to join- and meet-irreducible elements, is an equivalence functor.

\begin{rem}
In analogy to \autoref{rem:witnesses}, for any $\JSLf$-morphism $f\colon S_1\to S_2$ the maximal lower and upper witnesses for $\Pirr(f)\colon \Pirr(S_1)\to \Pirr(S_2)$ are given by
  \[
    (\Pirr f)_-(j_1, j_2) \iff j_2 \leq_{S_2} f(j_1),
    \qquad
    (\Pirr f)_+(m_2, m_1) \iff f_*(m_2) \leq_{S_1} m_1.
  \]
\end{rem}

\begin{lemma}
  $\Pirr: \JSL_f \to \Dep$ is a well-defined functor.
\end{lemma}

\begin{proof}
For any $\JSLf$-morphism $f\colon S_1\to S_2$, we have
\[(\Pirr f)_- ; \Pirr S_2 = \Pirr f =\ \Pirr S_1 ; (\Pirr f)_+\spbreve\] 
since for any $j_1\in J(S_1)$ and $m_2\in M(S_2)$, the condition $f(j_1)\not\leq_{S_2} m_2$ is equivalent to both
\[ \exists j_2\in J(S_2).[j_2\leq_{S_2} f(j_1)\wedge j_2\not\leq m_2] \quad\text{and}\quad \exists m_1\in M(S_1).[j_1\not\leq_{S_1} m_1 \wedge f_*(m_1)\leq m_2]
\]
using that $j_1$ is a join of elements in $J(S_1)$ and $m_2$ is a meet of elements of $M(J_2)$. Thus, $\Pirr(f)$ is a $\Dep$-morphism. Moreover,
\[\Pirr(\id_S) = \not\leq_S = \id_{\Pirr(S)}\quad\]
 and, for any $g\colon S_2\to S_3$,
\[\Pirr (g \circ f) =  \Pirr f; (\Pirr g)_+\spbreve = \Pirr f \fatsemi \Pirr g. \]
To see this, note that for $j_1\in J(S_1)$ and $m_3\in M(S_3)$,
\[ g(f(j_1)) \not\leq_{S_3} m_3 \Lra f(j_1)\not\leq_{S_2} g_*(m_3)\Lra \exists m_2\in M(S_2).[f(j_1) \not\leq_{S_2} m_2\wedge g_*(m_3)\leq_{S_2} m_2], \]
where the last equivalence uses that $M(S_2)$ meet-generates $S_2$.
%
%
\end{proof}

\begin{theorem}
The categories $\JSLf$ and $\Dep$ are equivalent via $\Pirr$ and $\Open$.  The natural isomorphisms $\rep\colon \Id_{\JSLf} \Rightarrow \Open\circ\Nleq$ and $\red\colon \Id_\Dep \Rightarrow \Nleq\circ\Open$ are given by 
\[
\begin{array}{llll}
\rep_S\colon S\xra{\cong} \Open(\Pirr(S)),& x\mapsto \{ m\in M(S): x \not\leq_S m \}&&(S\in \JSLf),\\
\red_\rR\colon \rR\xra{\cong} \Pirr(\Open(\rR)),& 
\red_\rR(x, Y) ~\mathrel{\vcentcolon\Longleftrightarrow}\ \rR[x] \nsubseteq Y && (\rR\in \Dep).
\end{array}
\]
\end{theorem}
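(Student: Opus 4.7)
The plan is to reduce this theorem to the already-established equivalence $\Nleq \dashv \Open$ of \autoref{thm:nleq_open_equiv} by exhibiting a natural isomorphism $\iota \colon \Pirr \xRa \Nleq$ in $\Dep$. For each $S \in \JSLf$, the inclusions $J(S) \hookrightarrow S$ and $M(S) \hookrightarrow S$ are lower and upper witnesses making the inclusion of relations $\iota_S \colon \Pirr(S) \to \Nleq(S)$ a $\Dep$-morphism. Applying $\Open$ to $\iota_S$ yields the map $O \mapsto O \cap M(S)$, which is bijective because every open set of $\nleq_S$ on $S \times S$ is determined by its restriction to meet-irreducibles (since each element of $S$ is a meet of its meet-irreducibles above it). Hence $\iota_S$ is a $\Dep$-isomorphism ($\Open$ reflects isos by \autoref{thm:nleq_open_equiv}), and naturality in $S$ is immediate because $\Pirr(f)$ and $\Nleq(f)$ share the witnesses $f$ and $f_*$, merely restricted. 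Composing $\iota$ with the natural isomorphisms $\alpha$ and $\beta$ of \autoref{thm:nleq_open_equiv} then yields the two natural isomorphisms required for the equivalence $\Pirr \dashv \Open$.

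Next I would verify that the induced natural isomorphisms coincide with the explicit formulas $\rep$ and $\red$ given in the statement. For $\rep_S$, the composite $\Open(\iota_S)^{-1} \circ \alpha_S$ sends $x \in S$ to $\alpha_S(x) \cap M(S) = \{m \in M(S) : x \not\leq_S m\} = \rep_S(x)$ on the nose, and naturality transfers from $\alpha$ through $\Open(\iota)$. As an independent sanity check, $\rep_S$ is manifestly join-preserving, is injective because $M(S)$ meet-generates $S$ (so $x$ is determined by the complement of $\rep_S(x)$ in $M(S)$), and is surjective because every open set $\Pirr(S)[X]$ with $X \subseteq J(S)$ equals $\rep_S(\bigvee X)$.

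For $\red_\rR$, the commutative $\Rel$-diagram~\eqref{eq:isodiag} that exhibits $\beta_\rR$ as a $\Dep$-isomorphism restricts along the quotient maps $\rR_\src \twoheadrightarrow J(\Open(\rR))$ and $\rR_\trg \twoheadrightarrow M(\Open(\rR))$ induced by $x \mapsto \rR[x]$ and $y \mapsto \intop_\rR(\rR_\trg \setminus \{y\})$, whose surjectivity onto the irreducibles is guaranteed by \autoref{lem:jirr-mirr-open}. Combined with \autoref{rem:join_meet_generators} and the bijective-relabeling principle of \autoref{rem:dep_bip_isos}, this yields $\red_\rR$ with its stated formula $\red_\rR(x,Y) \Lra \rR[x] \nsubseteq Y$. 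Naturality of $\red$ follows from that of $\beta$ via the faithfulness of $\Open$, proved exactly as in \autoref{thm:nleq_open_equiv} since a $\Dep$-morphism is determined by its maximal upper witness. The main obstacle is precisely this last step's bookkeeping: reconciling the formula for $\red_\rR$, which treats the full sets $\rR_\src$ and $\rR_\trg$ as source and target, with the literal definition of $\Pirr(\Open(\rR))$ living on $J(\Open(\rR)) \times M(\Open(\rR))$. Carefully tracking the bijective relabeling together with the explicit characterization of irreducibles in \autoref{lem:jirr-mirr-open} is what makes this reconciliation go through.
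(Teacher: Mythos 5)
Your strategy is correct and organized genuinely differently from the paper's. The paper proves the theorem by re-running, with $J(S)$ and $M(S)$ in place of $S$, the arguments for $\alpha$ and $\beta$ from \autoref{thm:nleq_open_equiv}, together with a separate lemma establishing that $\Pirr$ is a well-defined functor. You instead package all of the ``irreducibles suffice'' content into a single natural isomorphism $\iota\colon \Pirr\Rightarrow\Nleq$ and transport $\alpha$ and $\beta$ along it. This is arguably cleaner, and it yields functoriality of $\Pirr$ essentially for free: once each $\iota_S$ is invertible one may define $\Pirr$ on morphisms by conjugation, provided one then checks that the transported morphism agrees with the stated formula $\Pirr(f)(j,m)\Leftrightarrow f(j)\nleq_T m$ --- a verification you should make explicit, since it is exactly the content of the paper's separate well-definedness lemma. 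The supporting facts you invoke (meet-irreducibles meet-generate, \autoref{lem:jirr-mirr-open}, \autoref{rem:join_meet_generators}) are the same ones the paper uses.

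One concrete error needs repair: the inclusion $M(S)\hookrightarrow S$ is \emph{not} an upper witness for $\iota_S\colon\Pirr(S)\to\Nleq(S)$. Taking $\rP_u\spbreve=\{(m,m):m\in M(S)\}\subseteq M(S)\times S$, the composite $\Pirr(S);\rP_u\spbreve$ contains only pairs $(j,s)$ with $s\in M(S)$, whereas $\iota_S$ must relate $j$ to \emph{every} $s\in S$ with $j\nleq_S s$ (e.g.\ $s=\bot_S$). The correct choice is the maximal upper witness $\rP_+=\{(s,m)\in S\times M(S):s\leq_S m\}$; the identity $\Pirr(S);\rP_+\spbreve=\iota_S$ is precisely the statement that $M(S)$ meet-generates $S$, which you already use elsewhere, so the fix is local. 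A related bookkeeping slip: $\Open(\iota_S)$ maps $\Open(\Pirr(S))$ to $\Open(\Nleq(S))$ and is therefore the \emph{inverse} of $O\mapsto O\cap M(S)$ rather than that map itself. Neither issue affects the viability of the approach once corrected.
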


\begin{proof}
\begin{enumerate}
\item
  For $\rep$ restrict the argument concerning $\alpha$ in the proof of \autoref{thm:nleq_open_equiv}. In particular, the map $\rep_S$ is injective because every element of $S$ is uniquely determined by the meet-irreducibles above it, and surjective because $\Pirr(S)[X]=\rep_S(\bigvee X)$ for $X\seq J(S)$.

\item  Likewise, for $\red$ restrict the argument concerning $\beta$. Additionally use that $J(\Open \rR) \subseteq \{ \rR[x] : x \in \rR_\src \}$ and $M(\Open\rR) \subseteq \{ \inte_\rR(\rR_\trg\setminus \{y\}) : y \in \rR_\trg \}$ by \autoref{lem:jirr-mirr-open}.\qedhere
\end{enumerate}
\end{proof}

\section*{Details for \autoref{rem:join_meet_generators}}
We prove the claim made in \autoref{rem:join_meet_generators} that for every $S\in \JSLf$ and sets $J,M\seq S$ of join- and meet-generators, the map
\[ h\colon \Open(\not\leq_S\cap J\times S)\to \Open(\not\leq_S\cap J(S)\times M(S)), \quad O\mapsto O\cap M(S),\]
defines an isomorphism of semilattices. For notational simplicity let us put 
\[\rR:=\not\leq_S\cap J(S)\times M(S)\qquad\text{and}\qquad \rS:=\not\leq_S\cap J\times S.\] 
First, $h$ is a well-defined map: for every $O\in \Open(\rS)$ we have $O=\rS[X]$ for some $X\seq J$, so
\[ h(O)=\rS[X]\cap M(S) = \rR[X\cap J(S)] \cap M(S) \in \Open(\rR), \]
using that for any $j\in J$ and $x\in S$ with $j\not\leq_S x$ one has $j'\not\leq_S x$ for some $j'\in J(S)$. Clearly $h$ preserves finite unions and is surjective. To show $h$ is injective, let $X,X'\seq J$ such that $\rS[X] \cap M(S)=\rS[X']\cap M(S)$. Thus, for all $m\in M(S)$ we have
\[ \forall x\in X: x\leq_S m\quad\iff\quad \forall x'\in X': x'\leq_S m.  \] 
Since $M(S)$ meet-generates $S$, this implies that the $\iff$ statement holds for all $m\in S$, in particular for $m\in M$; hence $\rS[X]=\rS[X']$.

\section*{Proof of \autoref{lem:dep-iso-preserves-dim}}
Part (1) was shown in the proof of \cite[Theorem 4.8]{mmu21}. Part (2) follows from (1): since $\Open\colon \Dep\to \JSLf$ is a functor, $\rR\cong \rS$ in $\Dep$ implies $\Open(\rR)\cong \Open(\rS)$ in $\JSLf$.

\section*{Proof of \autoref{lem:jirr-mirr-open}}
\begin{enumerate}
\item is obvious since the sets $\rR[x]$ ($x\in \rR_\src$) join-generate $\Open(\rR)$.
\item The sets $\intop_{\rR}(\rR_\trg\setminus \{y\})$ $(y\in Y$) meet-generate $\Open(\rR)$: for every $X\seq \rR_\src$ we have 
\[  \rR[X]=\bigcap_{y\not\in \rR[X]} \intop_{\rR}(\rR_\trg\setminus \{y\})\]
by definition of $\intop_{\rR}$. Then it is sufficient to prove that for all $y,y_1,y_2\in Y$, 
\[ \intop_{\rR}(\rR_\trg\setminus \{y\}) = \intop_{\rR}(\rR_\trg\setminus \{y_1\}) \wedge \intop_{\rR}(\rR_\trg\setminus \{y_2\})\quad\iff\quad \breve{\rR}[y]=\breve{\rR}[y_1]\cup \breve{\rR}[y_2]. \]
To this end, we compute
\begin{align*}
& \intop_{\rR}(\rR_\trg\setminus \{y\}) = \intop_{\rR}(\rR_\trg\setminus \{y_1\}) \wedge \intop_{\rR}(\rR_\trg\setminus \{y_2\})\\
\iff\quad & \intop_{\rR}(\rR_\trg\setminus \{y\}) = \intop_{\rR}(\rR_\trg\setminus \{y_1,y_2\})\\
\iff\quad & \forall x\in \rR_\src: \rR[x]\seq Y\setminus \{y\} \Lra \rR[x]\seq Y\setminus \{y_1,y_2\} \\
\iff\quad & \forall x\in \rR_\src: \rR(x,y) \Lra \rR(x,y_1) \vee \rR(x,y_2) \\
\iff\quad &  \breve{\rR}[y]=\breve{\rR}[y_1]\cup \breve{\rR}[y_2].\tag*{\qedhere}
\end{align*}
\end{enumerate}

\section{Details for \autoref{sec:nuclear-lattice}}
We first establish a number of equivalent descriptions of nuclear morphisms (\autoref{lem:nuclear_char}). For this purpose, we introduce the following special morphisms:

\begin{notation}
For any $S,T\in \JSLf$ and $s\in S$, $t\in T$, we define the semilattice morphism
\[ s\ostar_{S,T} t\colon S\to T,\qquad x\mapsto \begin{cases}
t, & x\not\leq_S s;\\
\bot_T, & x\leq_S s.
\end{cases}  \]
Whenever $S$ and $T$ are clear from the context, we drop subscripts and write $s\ostar t$ for $s\ostar_{S,T} t$.
\end{notation}

\begin{rem}\label{rem:up_mor_comp_closed}
  These morphisms are closed under composition:
  \[
    {t_2 \ostar u} \,\circ\, {s\ostar t_1}
    \;=\; \begin{cases}
      s\ostar u, & \text{$t_1 \nleq_T t_2$}
      \\
     \top_S, \ostar \bot_U & \text{$t_1 \leq_T t_2$}
    \end{cases}
    \qquad\text{for any}\quad
    s\ostar t_1 \colon S\to T\text{ and }t_2\ostar u\colon T\to U.
  \]
\end{rem}

\begin{rem}
  For any $S,T\in \JSLf$ the set $\JSL(S,T)$ of morphisms forms a finite semilattice where the join $f\vee g$ of $f,g\colon S\to T$ is given pointwise, i.e.\ $f\vee g(s)=f(s)\vee g(s)$ for $s\in S$.
\end{rem}

\begin{lemma}\label{lem:nuclear_char}
  \label{lem:nuclear_technical}
  For any $f\colon S \to T$ in $\JSLf$ the following are equivalent:
  \begin{enumerate}
	\item $f$ is nuclear.
    \item $f$ factorizes through a boolean algebra.
	\item $f$ is a join of morphisms $s\ostar t\colon S\to T$ where $s\in S$, $t\in T$.
    \item $f$ is a join of morphisms $m\ostar j\colon S \to T$ where $m \in M(S)$, $j \in J(T)$.
  \end{enumerate}
\end{lemma}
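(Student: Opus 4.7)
The plan is to establish the chain $(1)\Leftrightarrow(2)\Leftrightarrow(3)\Leftrightarrow(4)$. For $(1)\Leftrightarrow(2)$, the backwards direction is immediate since every finite boolean algebra is a finite distributive lattice. For $(1)\Rightarrow(2)$, the key fact is that every finite distributive lattice $D$ is a $\JSLf$-retract of a finite boolean algebra, via the Birkhoff embedding $e\colon D\to \Pow(J(D))$ sending $a\mapsto {\downarrow}a\cap J(D)$, paired with its left inverse $r\colon \Pow(J(D))\to D$, $X\mapsto \bigvee X$. The map $e$ preserves finite joins because every join-irreducible of a finite distributive lattice is join-prime, and $r$ trivially does. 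Any factorisation $f=(S\xrightarrow{g}D\xrightarrow{h}T)$ thus extends to $f = h\circ r\circ e\circ g$, factoring through the boolean algebra $\Pow(J(D))$.

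For $(2)\Leftrightarrow(3)$, I would use that $\JSLf$ has biproducts (finite products and coproducts coincide) and that every finite boolean algebra is isomorphic to some $\mathbb{2}^Z$. Given a factorisation $f=h\circ g$ through $\mathbb{2}^Z$, the biproduct structure decomposes $g$ via its projections $g_z:=\pi_z\circ g\colon S\to \mathbb{2}$ and $h$ via its coprojections $h_z:=h\circ \iota_z\colon \mathbb{2}\to T$, yielding $f(s)=\bigvee_{z\in Z} h_z(g_z(s))$ for all $s\in S$. Each $g_z$ is determined by a unique $s_z\in S$ via the correspondence of $\JSLf$-morphisms $S\to \mathbb{2}$ with prime filters $F_{s_z}=\{s: s\not\leq_S s_z\}$ recalled in \autoref{sec:prelim}, and each $h_z$ is determined by $t_z:=h_z(1)\in T$. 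A direct unpacking of the definitions gives $h_z\circ g_z = s_z\ostar t_z$, whence $f=\bigvee_{z\in Z} s_z\ostar t_z$. The converse is easier: a finite join $f=\bigvee_{i\in I} s_i\ostar t_i$ assembles into a factorisation $S\xrightarrow{\langle g_i\rangle_i} \mathbb{2}^I \xrightarrow{[h_i]_i} T$ through the boolean algebra $\mathbb{2}^I$, with $g_i$ and $h_i$ defined from $s_i$ and $t_i$ as above.

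For $(3)\Leftrightarrow(4)$, the direction $(4)\Rightarrow(3)$ is immediate, and for the reverse direction it suffices to show that each generator $s\ostar t$ further decomposes as a join of morphisms of the restricted form, namely
\[
  s\ostar t \;=\; \bigvee\bigl\{\, m\ostar j \;:\; s\leq_S m\in M(S),\; J(T)\ni j\leq_T t\,\bigr\}.
\]
Evaluating both sides at $x\in S$ splits into two cases. If $x\leq_S s$ then $x\leq_S m$ for every $m$ in the indexing set, so both sides return $\bot_T$. If $x\not\leq_S s$, then by meet-generation of $S$ some $m\in M(S)$ with $s\leq_S m$ satisfies $x\not\leq_S m$, so the right-hand side equals $\bigvee\{j\in J(T):j\leq_T t\}=t$ by join-generation of $T$, matching the left-hand side. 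The main obstacle I anticipate is $(2)\Rightarrow(3)$: while conceptually routine, it requires careful bookkeeping of the biproduct structure and of the prime-filter parametrisation, including the degenerate case $s_z=\top_S$ yielding the zero morphism $\top_S\ostar t_z=\bot_T$ via \autoref{rem:up_mor_comp_closed}.
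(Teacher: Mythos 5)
Your proposal is correct and follows essentially the same route as the paper's proof: the Birkhoff retract $D\monoto\Pow(J(D))\epito D$ for $(1)\Leftrightarrow(2)$, the biproduct decomposition of a factorisation through $\mathbb{2}^Z$ into a pointwise join of composites $h_z\circ g_z=s_z\ostar t_z$ for $(2)\Leftrightarrow(3)$, and the identity $s\ostar t=\bigvee\{m\ostar j: s\leq_S m,\ j\leq_T t\}$ verified via meet- and join-generation for $(3)\Leftrightarrow(4)$. No gaps.
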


\begin{proof}
  \begin{enumerate}
    \item[(1)$\To$(2)]
    Assume $f$ factorizes through a finite distributive lattice $D$. It suffices to show $D$ arises as a retract $r \circ e = \id_D$ of a finite boolean algebra. Since join-irreducibles of finite distributive lattices are join-prime \cite{GratzerGeneralLattice1998}, we have the following well-defined retract, where $\Pow X$ denotes the $\cup$-semilattice of subsets of a set $X$.
    \[
      \begin{tabular}{ll}
        $e\colon D \monoto \Pow J(D)$,
        &  $e(d) := \{ j \in J(D) : j \leq_D d \}$, \\
	$r\colon \Pow J(D) \epito D$,
        & $r(S) := \Lor_D S$.
      \end{tabular}
    \]
    \item[(2)$\To$(1)] is trivial.

    \item[(2)$\Lra$(3)] 
      Suppose  $f\colon S \to T$ factorizes through a boolean algebra, i.e.\ $f = S \xto{g} \mathbb{2}^Z \xto{h} T$ for some finite set $Z$ and morphisms $g$, $h$. Since products and coproducts coincide in $\JSL_f$, the morphisms $g$ and $h$ can be decomposed as
\[ g=\langle g_z\rangle_{z\in Z} \qquad\text{and}\quad h=[h_z]_{z\in Z} \]
for some $g_z\colon S\to \mathbb{2}$ and $h_z\colon \mathbb{2}\to T$ ($z\in Z$). Note that $g_z=a_z\ostar 1$ where $a_z$ is the greatest element of $S$ with $g(a_z)=0$, and $h_z=0\ostar b_z$ for $b_z=h_z(1)$.
Thus:
      \begin{align*}
        f &= [h_z]_{z\in Z} \circ \langle g_z\rangle_{z\in Z} \\
       &= \Lor_{z\in Z}  h_z \circ g_z : 
        \\
        &= \Lor_{z\in Z} [0 \ostar b_z]\circ [a_z \ostar 1]
        \\ 
       & = \Lor_{z\in Z} a_z \ostar b_z.
      \end{align*}
Conversely, by reasoning backwards, if $f= \Lor_{z\in Z} a_z \ostar b_z$ for some $a_z\in S$, $b_z\in T$ ($z\in Z$) then $f$ factorizes through the boolean algebra $\mathbb{2}^Z$ via $\langle a_z\ostar 1 \rangle_{z\in I}$ and $[0\ostar b_z]_{z\in I}$.
\item[(4)$\To$(3)] is trivial.
 \item[(3)$\To$(4)] It suffices to show
    \[
      s \ostar t =
      \Lor \{ m \ostar j : m\in M(S),\; j\in J(T),\; s \leq_S m,\; j \leq_T t \}
      \qquad
      \text{for all $s\in S$, $t\in T$}.
    \]
    Clearly, each summand satisfies $m \ostar j \leq s \otimes t$, so ``$\geq$'' holds. Conversely, given $s'\not\leq_S s$ there exists $m\in M(S)$ with $s'\not\leq_S m$ and $s\leq_S m$, using that $s$ is a meet of meet-irredubibles. Thus, $m\ostar j$ is summand of the right-hand side for each $j\leq t$, and $[m\ostar j](s')=j$. Thus, the action of the right-hand side on $s'$ is no less than $\Lor_T \{ j \in J(T) : j \leq_T t \} = t$, proving ``$\leq$''.\qedhere
  \end{enumerate}
\end{proof}

\noindent The following lemma characterizes nuclear languages in terms of the relations $\rDR{L}, \rDR{L,a}\seq \LD{L}\times \LD{\rev{L}}$, cf. \autoref{ex:sld-vs-dlr}. It implies that nuclearity of $L$ is decidable in polynomial time when $\rDR{L}$ and $\rDR{L,a}$ are given.
\begin{lemma}\label{lem:recog_nuclear_langs} A regular language $L\seq \Sigma^*$ is nuclear iff for each $a \in \Sigma$, the relation $\rDR{L,a}$ is a union of sets of the form  $\rDR{L}\spbreve[v^{-1} L^r] \times \rDR{L}[u^{-1} L]$ where $u,v\in \Sigma^*$.
\end{lemma}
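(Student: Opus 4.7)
My plan is to use the fundamental equivalence $\Pirr\colon \JSLf \xra{\simeq} \Dep$ of \autoref{thm:jsl_vs_dep} to translate the nuclearity of the transition morphisms $\delta_a\colon \SLD{L} \to \SLD{L}$ into a condition on the $\Dep$-morphisms $\rDR{L,a}\colon \rDR{L} \to \rDR{L}$ (see \autoref{ex:sld-vs-dlr}).

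First I would reformulate nuclearity using \autoref{lem:nuclear_char}. Recall from \autoref{ex:sld-vs-dlr} that $\LD{L}$ is a set of join-generators of $\SLD{L}$ and $\dr_L[\LD{\rev{L}}]$ is a set of meet-generators. Since $J(\SLD{L})$ lies in every set of join-generators (and dually for $M(\SLD{L})$), parts~(3) and~(4) of \autoref{lem:nuclear_char} combine to show that $\delta_a$ is nuclear iff it is a pointwise join of morphisms of the form $\dr_L(v^{-1}\rev{L}) \ostar u^{-1}L$ with $u,v\in \Sigma^*$. The forward direction uses~(4) to write $\delta_a$ as a join of $m\ostar j$ with $m\in M(\SLD{L})\seq \dr_L[\LD{\rev{L}}]$ and $j\in J(\SLD{L})\seq \LD{L}$; the converse is immediate from~(3).

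Next I would transport this across $\Pirr$. Pointwise joins in $\JSL(\SLD{L},\SLD{L})$ correspond to unions of relations in $\Dep(\rDR{L},\rDR{L})$, because the maximal witnesses of a union of $\Dep$-morphisms are the unions of the individual witnesses. The core calculation is to identify $\Pirr(\dr_L(v^{-1}\rev{L}) \ostar u^{-1}L)$ explicitly: for any join-generator $u_0^{-1}L$ and meet-generator $\dr_L(v_0^{-1}\rev{L})$ of $\SLD{L}$, the definitions unfold to
\[
  \Pirr(\dr_L(v^{-1}\rev{L}) \ostar u^{-1}L)(u_0^{-1}L,\dr_L(v_0^{-1}\rev{L})) \;\Lra\; u_0^{-1}L \nsubseteq \dr_L(v^{-1}\rev{L}) \;\text{and}\; u^{-1}L \nsubseteq \dr_L(v_0^{-1}\rev{L}),
\]
which by~\eqref{eq_DR_vs_dr} amounts to $(u_0^{-1}L,v_0^{-1}\rev{L})\in \rDR{L}\spbreve[v^{-1}\rev{L}]\times \rDR{L}[u^{-1}L]$. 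Via the $\Dep$-isomorphism $\Pirr(\SLD{L})\cong \rDR{L}$ (cf.\ \autoref{ex:sld-vs-dlr}, \autoref{rem:join_meet_generators}, and \autoref{rem:dep_bip_isos}), this identifies $\Pirr(\dr_L(v^{-1}\rev{L}) \ostar u^{-1}L)$ with the product relation $\rDR{L}\spbreve[v^{-1}\rev{L}]\times \rDR{L}[u^{-1}L]$ inside $\LD{L}\times \LD{\rev{L}}$.

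Chaining these equivalences yields the desired characterization, since $L$ is nuclear iff every $\delta_a$ is. The main obstacle I anticipate is the bookkeeping around the $\Dep$-isomorphism $\Pirr(\SLD{L})\cong \rDR{L}$: one must verify carefully that $\Pirr$ sends pointwise joins to unions and that the substitution of $(J(\SLD{L}),M(\SLD{L}))$ by $(\LD{L},\dr_L[\LD{\rev{L}}])$ commutes with the morphism computation; once these points are settled, the rest is a direct unfolding of definitions.
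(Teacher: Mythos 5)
Your proposal is correct and follows essentially the same route as the paper's proof: reduce nuclearity of $\delta_a$ to a pointwise join of morphisms $\dr_L(v^{-1}\rev{L})\ostar u^{-1}L$ via \autoref{lem:nuclear_char}, compute $\Pirr$ of each such morphism to obtain the product relation $\rDR{L}\spbreve[v^{-1}\rev{L}]\times\rDR{L}[u^{-1}L]$ using \eqref{eq_DR_vs_dr}, and note that $\Pirr$ carries pointwise joins to unions. The bookkeeping points you flag (generators versus irreducibles, joins versus unions) are exactly the ones the paper dispatches via \autoref{rem:join_meet_generators} and the observation $\Pirr(f\vee g)=\Pirr(f)\cup\Pirr(g)$, so no gap remains.
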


%

\begin{proof}
We simply translate nuclearity from $\JSLf$ into the equivalent category $\Dep$, recalling $J(\SLD{L})\seq \LD{L}$ and $M(\SLD{L})\seq \dr_L[\LD{\rev{L}}]$ by \eqref{eq:drl-iso}. By \autoref{lem:nuclear_char} we know $L$ is nuclear iff each $\delta_a\colon \SLD{L}\to \SLD{L}$ is a join of morphisms $\dr_L(v^{-1} \rev{L}) \ostar u^{-1} L$ for $u,v\in \Sigma^*$. Moreover $\delta_a\colon \SLD{L}\to \SLD{L}$ corresponds to the $\Dep$-morphism $\rDR{L,a}\colon \rDR{L}\to \rDR{L}$ by \autoref{ex:sld-vs-dlr}. Next, $\dr_L(v^{-1} L^r) \ostar u^{-1} L$ corresponds to the $\Dep$-morphism
\[
    \rDR{L}\spbreve[v^{-1} L^r] \times \rDR{L}[u^{-1} L] : \rDR{L} \to \rDR{L},
\]
as shown by the following computation for $x,y\in \Sigma^*$:
\begin{align*}
& \Pirr(\dr_L(v^{-1} L^r)\ostar u^{-1} L)(x^{-1}L,\dr_L(y^{-1}\rev{L})) & \\
\iff~~ & [\dr_L(v^{-1} L^r) \ostar u^{-1} L](x^{-1}L)\not\seq \dr_L(y^{-1}\rev{L}) & \text{def. $\Pirr$} \\
\iff~~ &x^{-1}L\not\seq \dr_L(v^{-1}\rev{L}) \text{ and } u^{-1}L\not\seq \dr_L(y^{-1}\rev{L}) & \text{def. $\ostar$} \\
 \iff~~ &\rDR{L}(x^{-1}L,v^{-1}\rev{L}) \text{ and } \rDR{L}(u^{-1}L,y^{-1}\rev{L}) & \text{by \eqref{eq_DR_vs_dr}} \\
\iff~~ & (x^{-1}L,y^{-1}\rev{L})\in \rDR{L}\spbreve[v^{-1} L^r] \times \rDR{L}[u^{-1} L].& 
\end{align*}
Finally observe $\Pirr$ translates joins of $\JSLf$-morphisms into unions of relations, i.e.\ we have $\Pirr(f\vee g)=\Pirr(f)\cup \Pirr(g)$ for any two parallel morphisms $f,g$.
%
\end{proof}

\section*{Proof of \autoref{lem:lattice_lang_nuclear}}
We first describe the minimal $\JSL$-dfa of a lattice language:
\begin{lemma}\label{lem:lattice_aut}
 For any $S\in \JSLf$, the minimal $\JSL$-dfa for $L(S)$ is given by 
\[ A_S=(S,\delta,\top_S, S\setminus \{\bot_S\})\qquad\text{where}\qquad \begin{tabular}{ll}
      $\delta_{\bra{j}} :=\ \bot_S\ostar j\colon S\to S$
      & for $j \in J(S)$,
      \\[1ex]
     $\delta_{\ket{m}} :=\ m\ostar \top_S\colon S\to S$
      & for $m \in M(S)$.
    \end{tabular} \] 
\end{lemma}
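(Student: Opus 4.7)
The plan is to verify that $A_S$ is a well-defined $\JSL$-dfa accepting $L(S)$, and then invoke the characterization recalled in Section~2: up to isomorphism, $\SLD{L}$ is the unique $\JSL$-dfa for $L$ that is $\JSL$-reachable and simple. Well-definedness is immediate: each transition morphism has the form $s\ostar t$ and is therefore a $\JSLf$-morphism, the initial state $\top_S$ is a legitimate element, and $S\setminus\{\bot_S\} = F_{\bot_S} = \{s\in S : s\not\leq_S \bot_S\}$ is a prime filter of $S$ in the sense recalled in Section~2.

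For the language computation, I would first unfold the transitions using the definition of $\ostar$: $\delta_{\bra{j}}(x)=j$ if $x\neq \bot_S$ and $\delta_{\bra{j}}(\bot_S)=\bot_S$, while $\delta_{\ket{m}}(x)=\top_S$ if $x\not\leq_S m$ and $\delta_{\ket{m}}(x)=\bot_S$ otherwise. The key observation is that $\bot_S$ is a non-final sink: $\bot_S\leq_S s$ for every $s\in S$, so both transition types fix $\bot_S$. Hence a word accepted by $A_S$ is exactly one whose run from $\top_S$ never enters $\bot_S$. Tracing the run, a letter $\ket{m}$ read from $\top_S$ returns $\top_S$ (since $m\in M(S)$ implies $m\neq \top_S$, hence $\top_S\not\leq_S m$); a letter $\bra{j}$ from any non-$\bot_S$ state moves to $j$; from $j\in J(S)$ a subsequent $\ket{m}$ enters $\bot_S$ precisely when $j\leq_S m$. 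Therefore the only way to die is to read a factor $\bra{j}\ket{m}$ with $j\leq_S m$, which gives $L(A_S)=L(S)$ by \autoref{def:lattice_langs}.

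Finally, I would establish $\JSL$-reachability and simplicity. For reachability, $\top_S$ is the initial state, each $j\in J(S)$ is reached from $\top_S$ by the single letter $\bra{j}$, the state $\bot_S$ is the empty join, and every other $s\in S$ equals the join $\bigvee\{j\in J(S) : j\leq_S s\}$ of reachable states. For simplicity, I would use the single-letter tests $\ket{m}$ for $m\in M(S)$: the transition analysis above shows $\ket{m}\in L(A_S,s)$ iff $s\not\leq_S m$, so any state $s\neq \bot_S$ is uniquely determined by the set $\{m\in M(S) : \ket{m}\in L(A_S,s)\}$, since $M(S)$ meet-generates $S$; and $\bot_S$ is distinguished by accepting the empty language. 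Uniqueness of the state-minimal $\JSL$-dfa then yields $A_S\cong \SLD{L(S)}$.

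I do not expect a serious obstacle; the entire argument is careful bookkeeping once one observes that $\bot_S$ acts as a sink and that the letters $\ket{m}$ provide separating single-letter tests via the meet-generators. The only minor pitfall is to correctly handle the edge cases at $\top_S$ and $\bot_S$, which behave cleanly thanks to the standing conventions $\top_S\notin M(S)$ and $\bot_S\notin J(S)$.
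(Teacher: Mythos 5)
Your proposal is correct and follows essentially the same route as the paper: verify that $A_S$ accepts $L(S)$, show it is $\JSL$-reachable (each $j\in J(S)$ reached via $\bra{j}$, the rest by joins) and simple (states separated by the single-letter tests $\ket{m}$, since $\ket{m}\in L(A_S,s)$ iff $s\not\leq_S m$ and $M(S)$ meet-generates $S$), then invoke uniqueness of the state-minimal $\JSL$-dfa. You merely spell out the sink behaviour of $\bot_S$ and the language computation in more detail than the paper, which dismisses that part as clear from the definition.
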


\begin{proof}
It is clear from the definition that $A_S$ accepts $L(S)$. The automaton $A_S$ is $\JSL$-reachable because every state $j\in J(S)$ is reached on input $\bra{j}$ from the initial state $\top_S$; thus, every state is a join of states reachable via transitions. To see that $A_S$ is simple, suppose that $s\neq s'$ are distinct states, w.l.o.g. $s\not\leq_S s'$. Then there exists $m\in M(S)$ such that $s\not\leq_S m$ and $s'\leq_S m$. Therefore, the state $s$ accepts $\ket{m}$ and the state $s'$ does not accept $\ket{m}$, showing that $L(A_S,s)\neq L(A_S,s')$. This proves that $A_S$ is a minimal $\JSL$-dfa for $L(S)$.
\end{proof}
This immediately implies \autoref{lem:lattice_lang_nuclear}: the isomorphism $S\cong \SLD{L(S)}$ follows from the uniqueness of minimal $\JSL$-dfas, and the nuclearity of $L$ by \autoref{lem:nuclear_technical}.

\section*{Proof of \autoref{prop:nuclear-complexity}(1)}

Since $\dim{\rDR{L}}\leq \ns{L}$ holds for all regular languages $L$ by \eqref{eq:complexity_ineq}, we need only prove $\ns{L}\leq \dim{\rDR{L}}$. Suppose that $\rDR{L}$ has a biclique cover of size $k$; our task is to construct an nfa for $L$ with at most $k$ states.
Since $\Pirr(\SLD{L})\cong \rDR{L}$ by \autoref{ex:sld-vs-dlr}, we have $\Open(\rDR{L})\cong \Open(\Pirr(\SLD{L}))\cong \SLD{L}$ in $\JSLf$. Thus, by \autoref{lem:dep-iso-preserves-dim}, there exists a $\JSLf$-monomorphism $e\colon \SLD{L}\monoto S$ into a finite semilattice $S$ with $|J(S)|\leq k$. We shall equip $S$ with the structure of a $\JSL$-dfa $A=(S,\gamma,i_S,f_S)$ such that $e$ 
is a $\JSL$-dfa morphism from the minimal $\JSL$-dfa $\SLD{L}$ into $A$, i.e. the following diagram commutes for all $a\in \Sigma$:
\[
\xymatrix@R-1.5em@C+1em{
& S \ar[r]^{\gamma_a} & S \ar[dr]^{f_S} & \\
\mathbb{2} \ar[ur]^{i_S} \ar[dr]_{i} & & & \mathbb{2} \\
& \SLD{L} \ar@{>->}[uu]^e \ar[r]_{\delta_a} & \SLD{L} \ar@{>->}[uu]_e \ar[ur]_{f} &
}
\]
The morphism $i_S$ is given by $i_S := e\circ i$. To define $f_S$, observe that by definition of the final states of $\SLD{L}$ we have $f= K \ostar 1$ where $K$ is the largest language in $\SLD{L}$ not containing $\epsilon$. Then $f_S:=e(K)\ostar 1$ satisfies $f_S\circ e=f$ as required. For the transitions $\gamma_a$ we use our assumption that $L$ is nuclear, i.e.\ 
\[\delta_a\;=\;\bigvee_{i=1}^p {x_i\ostar y_i}\qquad \text{for some $x_i,y_i\in \SLD{L}$,} \]
so we may define:
\[ \gamma_a \;:=\; \bigvee_{i=1}^p {e(x_i)\ostar e(y_i)}. \]
The central square commutes because composition of $\JSLf$-morphisms preserves joins in each argument and $[e(x_i) \ostar e(y_i)] \circ e = e \circ [x_i \ostar y_i]$, since $e(x_i) \nleq_S e(y_i) \iff x_i \nsubseteq y_i$ recalling that injective join-semilattice morphisms are order-embeddings.

Then we have proved $e$ to be a $\JSL$-dfa morphism. Since $\JSL$-dfa morphisms preserve the accepted language, we see that $A$ accepts the language $L$. Thus its corresponding nfa $J(A)$ of join-irreducibles is an nfa accepting $L$ with $|J(S)|\leq k$ states.

\section*{Proof of \autoref{prop:lattice-complexity}(2)}
 Let $L=L(S_0)$ be the lattice language for $S_0\in \JSLf$. Since $\dim{\rDR{L}}\leq \nalpha{L}$ holds for all regular languages $L$ by \eqref{eq:complexity_ineq}, we need only prove $\nalpha{L}\leq \dim{\rDR{L}}$. To this end, suppose $\rDR{L}$ has a biclique cover of size $k$; our task is to construct an atomic nfa for $L$ with at most $k$ states. Note that $S_0\cong \SLD{L}$ in $\JSLf$ by \autoref{lem:lattice_lang_nuclear}. Thus, by \autoref{lem:dep-iso-preserves-dim} there exists a $\JSLf$-monomorphism $e\colon \SLD{L}\monoto S$ into a finite semilattice $S$ with $|J(S)|\leq k$. We may assume $e(\top_{S_0})=\top_S$; otherwise, replace $S$ by the subsemilattice of all $s\in S$ with $s\leq_S e(\top_{S_0})$, which has no more join-irreducibles than $S$.

We equip $S$ with the structure of a $\JSL$-dfa $A=(S,\gamma,S\setminus \{\bot_{S}\},\top_S)$ where the transitions are given by $\gamma_{\bra{j}}=\bot_S \ostar e(j)$ for $j\in J(S_0)$ and $\gamma_{\ket{m}}=e(m)\ostar \top_{S}$ for $m\in M(S_0)$. This makes $e$ a $\JSL$-automata morphism from the minimal automaton $A_{S_0}$ for $L(S_0)$ (see \autoref{lem:lattice_lang_nuclear}) into the automaton $A$. Indeed, both triangles and the $j$- and $m$-square in the following diagram commute:
\[
\xymatrix@R-1.5em@C+1em{
& S \ar@<0.5ex>[r]^{\bot_S\ostar e(j)} \ar@<-0.5ex>[r]_{e(m)\ostar \top_S} & S \ar[dr]^{\bot_s\ostar 1} & \\
\mathbb{2} \ar[ur]^{\top_S} \ar[dr]_{\top_{S_0}} & & & \mathbb{2} \\
& S_0 \ar@{>->}[uu]^e \ar@<0.5ex>[r]^{\bot_{S_0}\ostar j} \ar@<-0.5ex>[r]_{m\ostar \top_{S_0}} & S_0 \ar@{>->}[uu]_e \ar[ur]_{\bot_{S_0}\ostar 1} &
}
\]
Note that for any starting state $s\in S$, the automaton $A$ reaches a state from $e[A_{S_0}]\cong A_{S_0}$ after reading the first input letter and the remaining computation takes place in that subautomaton.

Since $\JSL$-automata morphisms preserve the accepted language, we conclude that the automaton $A$ accepts the language $L$. We will show below that for every state $s\in S$ its accepted language $L(A,s)$ is invariant under the Nerode left congruence, i.e.\ for all $v,w\in \Sigma^*$,
\begin{equation}\label{eq:nerode} v\sim_L w \qquad\text{implies}\qquad v\in L(A,s) \iff w\in L(A,s). \end{equation}
This suffices to conclude the proof: since the congruence classes of $\sim_L$ are precisely the atoms of $\BLD{L}$, it follows from \eqref{eq:nerode} that every state of $A$ accepts some language from $\BLD{L}$, so the nfa $J(A)$ of join-irreducibles is an atomic nfa for $L$ with $|J(S)|\leq k$ states.

It remains to prove \eqref{eq:nerode}. Suppose $v\sim_L w$, i.e.\ $xv\in L$ iff $xw\in L$ for every $x\in \Sigma^*$. By definition of $L$, this means precisely that $v$ and $w$ match one of following three cases:

\medskip\noindent \textbf{Case 1.} $v=\bra{j}v'$ and $w=\bra{j'}w'$ for some $j,j'\in J(S_0)$ and $v',w'\in \Sigma^*$:
\[ v\in L(A,s)\iff v'\in L(A_{S_0},j) \iff \bra{j}v'\in L \iff \bra{j'}w'\in L \iff \cdots \iff w\in L(A,s). \]

\medskip\noindent \textbf{Case 2.} $v=\ket{m}v'$ and $w=\ket{m}w'$ for some $m\in M(S_0)$, and $v',w'\in \Sigma^*$. 

\medskip\noindent If $s\leq_{S_0} e(m)$ then $A$ goes to state $\bot_S$ after reading the first $\ket{m}$ and thus $v,w\not\in L(A,s)$. If $s\not\leq_{S_0} e(m)$, then
\[ v\in L(A,s)\Lra v'\in L(A_{S_0},\top_{S_0}) \Lra v'\in L \Lra \ket{m}v'\in L \Lra \ket{m}w'\in L \Lra \cdots \Lra  w\in L(A,s). \]

\medskip\noindent \textbf{Case 3.} $v=\epsilon$, $w\in L$ and $w=jw'$ for some $j\in J(S_0)$ and $w'\in \Sigma^*$ (or symmetrically).

\medskip\noindent The claim holds for $s=\bot_S$ since $L(A,\bot_S)=\emptyset$, so assume $s\neq \bot_S$. Then $v=\epsilon\in L(A,s)$. Moreover, $w=jw'\in L$ implies $w'\in L(A_{S_0},j)$, i.e. $w=jw'\in L(A,s)$. \qedhere

\section{Details for \autoref{sec:complexity}}

\section*{Proof of \autoref{thm:subatomic_npc}}

The proof is similar to the one of \autoref{thm:atomic_npc}. Again, we split it into two propositions:

\begin{proposition}\label{prop:subatomic_np}
The problem $\MONTOSUBATOMICNFA$ is in $\NP$.
\end{proposition}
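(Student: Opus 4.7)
My plan is to adapt the argument of \autoref{prop:atomic_np}, using the characterization of $\nmu{L}$ from \autoref{thm:na_nmu_char}(2) in place of the one for $\nalpha{L}$ and translating via \autoref{ex:blrd-vs-syn} rather than \autoref{ex:bld-vs-nerode}.

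First I would prepare the relational data from the monoid recognizer $(M,h,F)$ of $L \defeq L(M,h,F)$. All of the following are computable in polynomial time from the multiplication table of $M$: (i) the minimal dfa $\dfa{L}$, obtained by minimizing the action dfa on $M$ with initial state $1_M$ and final set $F$; (ii) the minimal dfa $\dfa{\rev{L}}$, obtained as the quotient of $M$ by the left congruence $m_1 \sim^M_L m_2 \Lra \forall x\in M: xm_1\in F \Lra xm_2\in F$, which has at most $|\Sigma^*/{\sim_L}| \leq |\Syn{L}| \leq |M|$ classes; and (iii) the syntactic monoid $\Syn{L} \cong M/{\equiv^M_L}$ together with the set $F_L$, computed via the analogous two-sided congruence on $M$. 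From these, the relations $\rI, \rDR{L,a}, \rF$ on the lower path of \autoref{ex:sld-vs-dlr} are obtained via \autoref{rem:dfapair}, while $\rI'', \rD_a'', \rF''$ on the upper path of \autoref{ex:blrd-vs-syn} are read off directly from the syntactic monoid.

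Next I would establish that $\nmu{L}\leq k$ is equivalent to the existence of a $\Dep$-object $\rS \subseteq \rS_\src \times \rS_\trg$ with $|\rS_\src|\leq k$ and $|\rS_\trg|\leq |\Syn{L}|$, together with $\Dep$-morphisms $\rP, \rQ, (\rT_a)_{a\in\Sigma}$ fitting into the analogue of the right-hand diagram of~\eqref{eq:equiv-diagrams} obtained by replacing $\id_{\Sigma^*/{\sim_L}}$ by $\id_{\Syn{L}}$ and the primed relations by their double-primed counterparts. The equivalence with \autoref{thm:na_nmu_char}(2) proceeds exactly as in \autoref{prop:atomic_np}: after epi-mono factorization we may assume $q\colon S \monoto \BLRD{L}$ to be injective in the $\JSLf$-diagram, whence by self-duality of $\JSLf$ we have $|M(S)| \leq |M(\BLRD{L})| = |\Syn{L}|$ (the last equality because the atoms of the boolean algebra $\BLRD{L}$ are precisely the syntactic congruence classes). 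Applying the equivalence $\Pirr\colon \JSLf \xra{\simeq}\Dep$ together with Examples~\ref{ex:sld-vs-dlr} and~\ref{ex:blrd-vs-syn} then translates the $\JSLf$-diagram into the desired $\Dep$-diagram.

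Finally, the tuple $(\rS, \rP, \rQ, (\rT_a)_{a\in\Sigma})$ is a certificate of size polynomial in the input, and its commutativity is verifiable in polynomial time by direct relational composition in $\Dep$ using the relations computed in the first step. The only non-routine technical point is ensuring all edges of the upper and lower paths are computable in polynomial time from a monoid recognizer rather than from a pair of dfas; the congruence-based constructions above handle this cleanly, avoiding any exponential blow-up from a naive reverse subset construction.
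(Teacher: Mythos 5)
Your proposal is correct and follows essentially the same route as the paper's proof: the same three-way equivalence between a $k$-state subatomic nfa, the $\JSLf$-diagram of \autoref{thm:na_nmu_char}(2) with $q$ made injective so that $|M(S)|\leq|J(\BLRD{L})|=|\Syn{L}|\leq|M|$, and the corresponding $\Dep$-diagram via Examples~\ref{ex:sld-vs-dlr} and~\ref{ex:blrd-vs-syn}, with the relational data serving as the polynomial certificate. The only (immaterial) divergence is in the preprocessing: you extract $\dfa{\rev{L}}$ and $\Syn{L}$ via congruences on $M$, whereas the paper runs the opposite monoid $M^\op$ as a dfa for $\rev{L}$ and takes the transition monoid of the minimized dfa; both are polynomial-time, though note your left/two-sided congruences should quantify contexts over the image of $\ol{h}$ (or be followed by a standard minimization) to yield exactly the minimal dfa and $\Syn{L}$ when $h$ is not surjective.
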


\begin{proof} 
Let $(M,h,F)$ be a monoid recognizer for the language $L\seq\Sigma^*$, and let $k$ be a natural number. We claim the following three statements to be equivalent:
\begin{enumerate}[label=(\alph*)]
\item There exists a subatomic nfa accepting $L$ with at most $k$ states.
\item There exists a finite semilattice $S$ with $|J(S)|\leq k$ and $\JSLf$-morphisms $p$, $q$ and $\tau_a$ ($a\in \Sigma$) making the left diagram below commute.
\item There exists a $\Dep$-object $\rS\seq \rS_\src\times \rS_\trg$ with $|\rS_\src|\leq k$ and $|\rS_\trg|\leq |M|$ and $\Dep$-morphisms $\rP$, $\rQ$ and $\rT_a$ ($a\in \Sigma$) making the right diagram below commute (cf. \autoref{ex:sld-vs-dlr}/\ref{ex:blrd-vs-syn}).
\end{enumerate}
\[
\xymatrix@C-0.5em{
& \BLRD{L} \ar[r]^{\delta_a''} & \BLRD{L} \ar[dr]^{f''} & \\
\mathbb{2} \ar[ur]^{i''} \ar[dr]_{i} & S \ar@{-->}[u]^q \ar@{-->}[r]^{\tau_a} & S \ar@{-->}[u]_q & \mathbb{2} \\
& \SLD{L} \ar@{-->}[u]^p \ar@{-->}[r]_{\delta_a} & \SLD{L} \ar@{-->}[u]_p \ar[ur]_{f} & 
}
\qquad
\xymatrix@C-0.5em{
& \id_{\Syn{L}} \ar[r]^{\rD_a''} & \id_{\Syn{L}} \ar[dr]^{\rF''} & \\
\id_1 \ar[ur]^{\rI''} \ar[dr]_{\rI} & \rS \ar@{-->}[u]^{\rQ} \ar@{-->}[r]^{\rT_a} & \rS \ar@{-->}[u]_{\rQ} & \id_1 \\
& \rDR{L} \ar@{-->}[u]^\rP \ar[r]_{\rDR{L,a}} & \rDR{L} \ar@{-->}[u]_\rP \ar[ur]_{\rF} & 
}
\]
In fact, (a)$\Lra$(b) was shown in \autoref{thm:na_nmu_char}(2), and (b)$\Lra$(c) follows from the equivalence between $\JSLf$ and $\Dep$. To see this, note first that in the left diagram, by replacing $q$ with its image we may assume that $q$ is injective. By the self-duality of $\JSLf$, dualizing $q$ yields a surjective morphism from $\BLRD{L} \cong \BLRD{L}^\op$ to $S^\op$. Thus,
\[ |M(S)|=|J(S^\op)| \leq |J(\BLRD{L})| = |\Syn{L}| \leq |M|,\]
where the last step uses the minimality of the recognizer $(\Syn{L},\mu_L, F_L)$.

By \autoref{ex:sld-vs-dlr} and \ref{ex:blrd-vs-syn} the upper and lower path of the left diagram in $\JSLf$ correspond to the upper and lower path of the right-hand diagram in $\Dep$. Therefore, \autoref{thm:jsl_vs_dep} shows the two diagrams to be equivalent.

\medskip\noindent From (a)$\Lra$(c) we deduce that the finite relations $\rS$, $\rP$, $\rQ$ and $\rT_a$ constitute a short certificate for the existence of a subatomic nfa. Commutativity of the right-hand diagram can be checked in polynomial time w.r.t.\ the size of $M$ because the morphisms in the lower and upper path can be efficiently computed from the given recognizer $(M,h,F)$. Indeed, note that
 one can view the monoid $(M,\bullet, 1)$ as a dfa accepting $L$ with initial state $1_M$, final states $F$, and transitions given by $m\xra{a} m\bullet h(a)$ for $m\in M$ and $a\in \Sigma$. Similarly, the opposite monoid $M^\op=(M,\bullet^\op, 1)$ with multiplication $m\bullet^\op n := n\bullet m$ yields a dfa for $\rev{L}$. Thus, \autoref{rem:dfapair} shows that $\rDR{L}$ and the relations in the lower path of the diagram can be computed from $(M,h,F)$ in polynomial time. Moreover, by minimizing the dfa $M$ and computing its transition monoid, one obtains $\Syn{L}$ in polynomial time and can efficiently compute the relations in the upper path of the diagram.
\end{proof}

\begin{proposition}\label{prop:subatomic_nphard}
The problem $\MONTOSUBATOMICNFA$ is $\NP$-hard.
\end{proposition}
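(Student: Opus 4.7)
The plan is to reduce from $\mathbf{BICLIQUE~COVER}$ along the same lines as \autoref{prop:atomic_nphard}, in accordance with the proof sketch already given for \autoref{thm:subatomic_npc}. Given an instance $(\rR,k)$, I set $S := \Open(\rR)$ and $L := L(S)$, the lattice language of $S$. The reduction I propose is
\[
(\rR,k)\;\longmapsto\;((\Syn{L},\mu_L,F_L),k).
\]

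For correctness, I would establish the equivalence $\dim{\rR}\leq k\iff \nmu{L}\leq k$ via the chain of $\Dep$-isomorphisms
\[
\rR\;\cong\;\Pirr(\Open(\rR))\;=\;\Pirr(S)\;\cong\;\Pirr(\SLD{L(S)})\;\cong\;\rDR{L},
\]
obtained from \autoref{thm:jsl_vs_dep}, \autoref{lem:lattice_lang_nuclear} and \autoref{ex:sld-vs-dlr}, respectively. Applying \autoref{lem:dep-iso-preserves-dim}(2) then yields $\dim{\rR}=\dim{\rDR{L}}$, and since $L$ is a lattice language \autoref{prop:nuclear-complexity}(2) gives $\dim{\rDR{L}}=\nmu{L}$. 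This is essentially identical to the correctness argument for the atomic case, the only difference being that $\nalpha{L}$ is replaced by $\nmu{L}$ using the stronger equality in \autoref{prop:nuclear-complexity}(2).

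The main obstacle is to show the syntactic monoid recognizer $(\Syn{L},\mu_L,F_L)$ can actually be computed in polynomial time from $\rR$. By \autoref{lem:jirr-mirr-open} the alphabet $\Sigma$ and the minimal dfa $\dfa{L}$ have sizes polynomial in $|\rR|$ (namely, $|\Sigma|\leq |\rR_\src|+|\rR_\trg|$ and $|\dfa{L}|\leq |J(S)|+2$). In general, however, the syntactic monoid of a language can be exponentially larger than its minimal dfa -- as witnessed in the paper by $|\Syn{L_n}|=n!$ versus $|\dfa{L_n}|=n$ -- so the key nontrivial step is a direct structural analysis of the transition monoid of $\dfa{L(S)}$. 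Inspecting the dfa drawn in the proof of \autoref{prop:atomic_nphard}, one sees that any word $w \in \Sigma^*$ not driving the automaton to $\emptyset$ induces a transition function determined by only two pieces of data: (i) the last letter of $w$, which selects the target in $\{L\}\cup\{\bra{j}^{-1}L:j\in J(S)\}$ reached from non-dead states, and (ii) the first letter of $w$, which when of the form $\ket{m}$ imposes a threshold $m\in M(S)$ causing the input $\bra{j_0}^{-1}L$ to collapse to $\emptyset$ precisely when $j_0\leq_S m$. Adding the identity and the constant-to-$\emptyset$ map, this bounds $|\Syn{L}|$ by $2+(|J(S)|+1)(|M(S)|+1)$, so $\Syn{L}$, $\mu_L$ and $F_L$ can be enumerated in polynomial time by a standard breadth-first construction of the transition monoid from the generating maps $\delta_{\bra{j}},\delta_{\ket{m}}$.
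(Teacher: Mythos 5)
Your proposal is correct and follows essentially the same route as the paper: the same reduction $(\rR,k)\mapsto((\Syn{L},\mu_L,F_L),k)$, the same correctness argument via the chain of $\Dep$-isomorphisms together with \autoref{prop:lattice-complexity}(2), and the same key point that $\Syn{L}$ has only $O(|J(S)|\cdot|M(S)|)$ elements. Your direct first-letter/last-letter analysis of the transition functions of $\dfa{L}$ is just a concrete restatement of the paper's argument, which obtains the identical enumeration of the transition monoid from the closure of the morphisms $s\ostar t$ under composition (\autoref{rem:up_mor_comp_closed}) applied to $\delta_{\bra{j}}=\bot_S\ostar j$ and $\delta_{\ket{m}}=m\ostar\top_S$.
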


\begin{proof}
Similar to the proof of \autoref{prop:atomic_nphard}, we give a polynomial-time reduction from $\BICCOV$ to $\MONTOSUBATOMICNFA$. Given a pair $(\rR,k)$ of a finite relation $\rR$ and a natural number $k$, we again form the lattice language $L=L(S)$ where $S=\Open(\rR)$. The reduction is given by
\[ (\rR,k)\quad\longmapsto \quad ((\Syn{L},\mu_L, F_L),k). \]
From the proof of \autoref{prop:atomic_nphard} we know that this reduction is correct (i.e.\ $\dim{\rR}=\nmu{L}$) and that $\dfa{L}$ and $\dfa{\rev{L}}$ can be constructed in polynomial time. It remains to verify that also $\Syn{L}$, i.e.\ the transition monoid of $\dfa{L}$, can be constructed in polynomial time. Recall that the transitions of the minimal $\JSL$-dfa for $L$ (see \autoref{lem:lattice_aut}) are \[\delta_{\bra{j}}=\bot_S\ostar j\quad\text{for $j\in J(S)$}\qquad\text{and}\qquad \delta_{\ket{m}}=m\ostar \top_S\quad \text{for $m\in M(S)$}. \]
Therefore, \autoref{rem:up_mor_comp_closed} implies that its transition monoid consists precisely of the maps
  \[
    \begin{cases}
      m\ostar \top_S & (m\in M(S))
      \\
	  \bot_S\ostar j & (j\in J(S))	
	\\
      m\ostar j & (j\in J(S),\; m\in M(S))
      \\
	  \bot_S\ostar \top_S &
      \\	
	  \top_S \ostar \bot_S 
      \\
      \id_S
    \end{cases}
  \]
Since $\dfa{L}$ is the (dfa-)reachable part of the minimal $\JSL$-dfa, its transition monoid is given by the above maps restricted to $\dfa{L}$.
Thus, it has $O(|\LD{L}| \cdot |\LD{L^r}|)$ elements. In particular, $\Syn{L}$ can be efficiently computed as the transition monoid of $\dfa{L}$.\qedhere
\end{proof}

\section{Details for \autoref{sec:applications}}

\section*{Proof of \autoref{thm:nuclear_n_complete}}

\begin{enumerate}
\item The restriction of $\DFA + \rev{\DFA} \to \NFA$ to nuclear languages is in $\NP$: given a pair $A$ and $B$ of dfas accepting the languages $L$ and $\rev{L}$, one can check nuclearity of $L$ in polynomial time by computing $\rDR{L}$ and $\rDR{L,a}$ (see \autoref{rem:dfapair}) and verifying the conditions of \autoref{lem:nuclear_char}. Since  $\dim{\rDR{L}}=\ns{L}$ by \autoref{prop:nuclear-complexity}, a certificate for $\ns{L}\leq k$ is given by a biclique cover of $\rDR{L}$ with at most $k$ elements.
\item The $\NP$-hardness proof is identical to the one for $\DFA+\rev{\DFA}\to \NFA_{\mathbf{atm}}$ in \autoref{prop:atomic_nphard}: the lattice language $L$ used in the reduction $(\rR,k)\mapsto (\dfa{L},\dfa{\rev{L}},k)$ is nuclear by \autoref{lem:lattice_lang_nuclear} and it satisfies $\dim{\rR}=\ns{L}$ by \autoref{prop:lattice-complexity}. \qedhere
\end{enumerate}

\section*{Proof of \autoref{prop:group-language-complexity}}

For any regular language $L\seq \Sigma^*$, we have the surjective $\JSLf$-morphism
\[ \cl_L\colon \BLRD{L}\epito \BLD{L},\qquad K\mapsto \bigcap\{ K'\in \BLD{L} : K\seq K' \}. \]
In fact, $\cl_L$ is the dual of the inclusion map $\BLD{L}\monoto \BLRD{L}$, using that $\BLD{L}^\op\cong \BLD{L}$ and $\BLRD{L}^\op \cong \BLRD{L}$.  Now suppose that $L$ is a group language.
\begin{enumerate}
\item We claim that $\cl_L$ preserves transitions, i.e.\ the following diagram commutes for $a\in \Sigma$:
\begin{equation}\label{eq:prestrans}
\xymatrix{
\BLD{L} \ar[r]^{\delta_a'} & \BLD{L} \\
\BLRD{L} \ar@{->>}[u]^{\cl_L} \ar[r]_{\delta_a''} & \BLRD{L} \ar@{->>}[u]_{\cl_L}
}
\end{equation}
It suffices to prove that $\delta_a'\circ \cl_L$ and $\cl_L\circ \delta_a''$ agree on the atoms of $\BLRD{L}$, i.e.\
\begin{equation}\label{eq:prestransat} \delta_a'(\cl_L([w]_{\equiv_L})) = \cl_L(\delta_a''([w]_{\equiv_L}))\qquad \text{for all $w\in \Sigma^*$}.\end{equation}
Since $L$ is a group language, the transition maps $\delta_a\colon \LD{L}\to \LD{L}$ ($a\in \Sigma$) of its minimal dfa are bijective. This implies that $\delta_a'$ and $\delta_a''$ are semilattice isomorphisms. In particular, they map atoms to atoms, so
\[ 
\delta_a''([av]_{\equiv_L}) = a^{-1}[av]_{\equiv_L} = [v]_{\equiv_L} \quad\text{and}\quad \delta_a'([av]_{\sim_L}) = a^{-1}[av]_{\sim_L} = [v]_{\sim_L}\qquad\text{for $a\in \Sigma$, $v\in \Sigma^*$}.
\] 
Given $w\in \Sigma^*$, choose a natural number $n\geq 1$ such that $[a^nw]_{\equiv_L} = [w]_{\equiv_L}$, using again that $\Syn{L}$ is a group. Then also $[a^nw]_{\sim_L} = [w]_{\sim_L}$ because $\equiv_L\,\seq\, \sim_L$. Thus, \eqref{eq:prestransat} follows by
\begin{align*}
\delta_a'(\cl_L([w]_{\equiv_L})) &= \delta_a'([w]_{\sim_L}) \\
&= \delta_a'([a^nw]_{\sim_L}) \\
&= [a^{n-1}w]_{\sim_L} \\
&= \cl_L([a^{n-1}w]_{\equiv_L}) \\
&= \cl_L(\delta_a''([a^nw]_{\equiv_L})) \\
&= \cl_L(\delta_a''([w]_{\equiv_L})).
\end{align*}
\item Since $\nmu{L}\leq \nalpha{L}$ for all regular languages $L$, we only need to prove $\nalpha{L}\leq \nmu{L}$. Thus, suppose that $L$ has a subatomic nfa with $k$ states. By \autoref{thm:na_nmu_char}(2), this means that there exists a finite semilattice $S$ with $|J(S)|\leq k$ and morphisms $p$, $q$ and $\tau_a$ ($a\in \Sigma$) for which the left-hand diagram below commutes.
 \[
\xymatrix@C-0.5em{
& \BLRD{L} \ar[r]^{\delta_a''} & \BLRD{L} \ar[dr]^{f''} & \\
\mathbb{2} \ar[ur]^{i''} \ar[dr]_{i} & S \ar@{-->}[u]^q \ar@{-->}[r]^{\tau_a} & S \ar@{-->}[u]_q & \mathbb{2} \\
& \SLD{L} \ar@{-->}[u]^p \ar@{-->}[r]_{\delta_a} & \SLD{L} \ar@{-->}[u]_p \ar[ur]_{f} & 
}\qquad
\xymatrix@C-0.5em{
& \BLD{L} \ar[r]^{\delta_a''} & \BLD{L} \ar[dr]^{f''} & \\
\mathbb{2} \ar[ur]^{i''} \ar[dr]_{i} & S' \ar@{-->}[u]^{q'} \ar@{-->}[r]^{\tau_a'} & S' \ar@{-->}[u]_{q'} & \mathbb{2} \\
& \SLD{L} \ar@{-->}[u]^{e\circ p} \ar@{-->}[r]_{\delta_a} & \SLD{L} \ar@{-->}[u]_{e\circ p} \ar[ur]_{f} & 
}
\]
By \eqref{eq:prestrans}, the image factorization $\xymatrix{S \ar@{->>}[r]^<<<<<e & S' \ar@{>->}[r]^<<<<<{q'} & \BLD{L}}$ of the morphism $\cl_L\circ q$ yields a subsemilattice $q'\colon S'\monoto \BLD{L}$ closed under left derivatives. Note that $|J(S')|\leq |J(S)|$ because $e\colon S\epito S'$ is surjective. Putting $\tau_a'=a^{-1}(\dash)\colon S'\to S'$ ($a\in \Sigma$) we see that the two squares in the right-hand diagram commute. Since $\cl_L(K)=K$ for $K\in \SLD{L}$ and $q\circ p$ is the inclusion from $\SLD{L}$ into $\BLRD{L}$, the two other parts of the diagram also commute. Thus, \autoref{thm:na_nmu_char}(1) shows that $L$ has an atomic nfa with $|J(S')|\leq k$ states. \qedhere
\end{enumerate}

\end{document}